\documentclass[final,12pt]{article}
\usepackage{hyperref}
\usepackage{amssymb,amsmath,amsthm}
\usepackage[mathscr]{eucal}
\usepackage{tikz}
\usetikzlibrary{arrows,backgrounds,decorations.pathmorphing,decorations.pathreplacing,positioning,fit,shapes,chains}
\usepackage{cite}
\usepackage{enumerate}
\usepackage[conditional,light,first,bottomafter]{draftcopy}
\draftcopyName{DRAFT\space\today}{130}
\draftcopySetScale{65}
\usepackage[letterpaper,hmargin=3.7cm,vmargin=3.1cm]{geometry}
\geometry{foot=0.9cm}

\geometry{pdftex}
\usepackage{setspace}
\singlespacing
\makeatletter
\renewcommand{\section}{\@startsection%
{section}%
{1}%
{0em}%
{1.7em}%
{1.2em}%
{\normalfont\large\centering\bfseries}}
\renewcommand{\@seccntformat}[1]%
{\csname the#1\endcsname.\hspace{0.5em}}

\numberwithin{equation}{section}
\renewcommand\appendix{\par
\setcounter{section}{0}%
\setcounter{subsection}{0}%
\setcounter{theorem}{0}
\setcounter{table}{0}
\setcounter{figure}{0}
\gdef\thetable{\Alph{table}}
\gdef\thefigure{\Alph{figure}}
\section*{Appendix}
\gdef\thesection{\Alph{section}}
\setcounter{section}{1}}


\newtheorem{theorem}{Theorem}[section]
\newtheorem{proposition}{Proposition}[section]
\newtheorem{lemma}{Lemma}[section]
\newtheorem{corollary}{Corollary}[section]
\theoremstyle{definition}
\newtheorem{definition}{Definition}
\newtheorem{remark}{Remark}

\newtheorem{convention}{Convention}

\newcommand{\abs}[1]{\left|#1\right|}
\newcommand{\norm}[1]{\left\|#1\right\|}
\newcommand{\inner}[2]{\left\langle#1,#2\right\rangle}
\newcommand{\cH}{\mathcal{H}}
\newcommand{\integers}{\mathbb{Z}}
\newcommand{\tb}[1]{\widetilde{\boldsymbol{#1}}}
\newcommand{\reals}{\mathbb{R}}
\newcommand{\complex}{\mathbb{C}}
\newcommand{\nats}{\mathbb{N}}
\newcommand{\pb}[1]{\boldsymbol{#1}}


\DeclareMathOperator{\spec}{spec}
\DeclareMathOperator{\Span}{span}

\DeclareMathOperator{\dom}{dom}

\DeclareMathOperator{\diag}{diag}
\begin{document}
\begin{titlepage}
\title{Inverse spectral analysis for a class of infinite band symmetric matrices
\footnotetext{%
Mathematics Subject Classification(2010):
34K29,  
47A75, 
47B36, 
70F17, 
}
\footnotetext{%
Keywords:
Inverse spectral problem;
Band symmetric matrices;
Spectral measure.
}\hspace{-5mm}
\thanks{%
Research partially supported by UNAM-DGAPA-PAPIIT IN105414
}%
}
\author{
\textbf{Mikhail Kudryavtsev}
\\
\small Department of Mathematics\\[-1.6mm]
\small Institute for Low Temperature Physics and Engineering\\[-1.6mm]
\small Lenin Av. 47, 61103\\[-1.6mm]
\small Kharkov, Ukraine\\[-1.6mm]
\small\texttt{kudryavtsev@onet.com.ua}
\\[2mm]
\textbf{Sergio Palafox}
\\
\small Departamento de F\'{i}sica Matem\'{a}tica\\[-1.6mm]
\small Instituto de Investigaciones en Matem\'aticas Aplicadas y en Sistemas\\[-1.6mm]
\small Universidad Nacional Aut\'onoma de M\'exico\\[-1.6mm]
\small C.P. 04510, M\'exico D.F.\\[-1.6mm]
\small \texttt{sergiopalafoxd@gmail.com}
\\[2mm]
\textbf{Luis O. Silva}
\\
\small Departamento de F\'{i}sica Matem\'{a}tica\\[-1.6mm]
\small Instituto de Investigaciones en Matem\'aticas Aplicadas y en Sistemas\\[-1.6mm]
\small Universidad Nacional Aut\'onoma de M\'exico\\[-1.6mm]
\small C.P. 04510, M\'exico D.F.\\[-1.6mm]
\small \texttt{silva@iimas.unam.mx} }
\date{}
\maketitle
\vspace{-6mm}
\begin{center}
\begin{minipage}{5in}
  \centerline{{\bf Abstract}} \bigskip This note deals with the direct
  and inverse spectral analysis for a class of infinite band symmetric
  matrices. This class corresponds to operators arising from
  difference equations with usual and \emph{inner} boundary
  conditions. We give a characterization of the spectral functions for
  the operators and provide necessary and sufficient conditions for a
  matrix-valued function to be a spectral function of the operators.
  Additionally, we give an algorithm for recovering the matrix from
  the spectral function. The approach to the inverse problem is based
  on the rational interpolation theory.
\end{minipage}
\end{center}
\thispagestyle{empty}
\end{titlepage}
\section{Introduction}
\label{sec:intro}
In this note, the direct and inverse spectral analysis of a class of
infinite symmetric band matrices, denoted $\mathcal{M}(n,\infty)$, is
carried out with emphasis in the inverse problems of characterization
and reconstruction. The matrices under consieration, defined in the
paragraphs below, arise from difference equations with initial and
left endpoint boundary conditions together with the so called
\emph{inner} boundary conditions. Inner boundary conditions are given
by degenerations of the diagonals (see the paragraphs below
Definition~\ref{def:matrices-degenerate}  and above
(\ref{eq:eigenvector})). Each matrix in $\mathcal{M}(n,\infty)$
generates uniquely a closed symmetric operator for which we give a
spectral characterization. More specifically, we provide necessary and
sufficient conditions for a matrix-valued function to be a spectral
function of the operators stemming from our class of matrices (see
Definition~\ref{def:class-sigma-infinite} and
Theorems~\ref{thm:sigma-unique} and \ref{thm:last-one}). As a
byproduct of the spectral analysis of the operators corresponding to
matrices in $\mathcal{M}(n,\infty)$ we find and if-and-only-if
criterion for degeneration in terms of the properties of polynomials
in a $L_2$ space (see Theorem~\ref{thm:non-degenerate-case}).

Although the inverse spectral problems for Jacobi matrices have been
studied extensively (see for instance
\cite{MR2263317,MR504044,MR2915295,MR1616422,MR0447294,MR0213379,
  MR0382314,MR549425,MR1463594,MR1436689,MR1247178} for the finite
case and
\cite{MR2998707,see-later-mr,MR1045318,MR1616422,MR499269,MR0221315,
  MR2305710,MR2438732} for the infinite case), works dealing with band
matrices non-necessary tridiagonal are not so abundant (see
\cite{MR629608,MR2533388,MR2592784,MR1668981,MR1699440,
  2014arXiv1409.3868K,MR636029,MR2110489,MR2432761} for the finite
case and \cite{MR2043894,MR2494240} for the infinite case).

Let $\cH$ be an infinite dimensional separable Hilbert space and fix
an orthonormal basis $\{\delta_k\}_{k=1}^\infty$ in it. We study the
symmetric operator $A$ whose matrix representation with respect to
$\{\delta_k\}_{k=1}^{\infty}$ is a symmetric band matrix which is
denoted by $\mathcal{A}$ (see \cite[Sec.~47]{MR1255973} for the
definition of the matrix representation of an unbounded symmetric
operator).

We assume that the matrix $\mathcal{A}$ has $2n+1$ band diagonals,
that is, $2n+1$ diagonals not necessarily zero. The band diagonals
satisfy the following conditions. The band diagonal farthest from the
main one, which is given by the diagonal matrix
$\diag\{d_k^{(n)}\}_{k=1}^{\infty}$, denoted by $\mathcal{D}_n$, is
such that, for some $m_1\in\nats$, all the numbers
$d_{1}^{(n)},\dots,d_{m_1-1}^{(n)}$ are positive and $d_{k}^{(n)}=0$
for all $k\geq m_1$ with
\begin{equation}
  \label{eq:first-m}
  m_1>1\,.
\end{equation}
It may happen that all the elements of the sequence
$\{d_k^{(n)}\}_{k\in\mathbb{N}}$ are positive which we
convene to mean that $m_1=\infty$.

Now, if $m_1<\infty$, then the elements
$\{d_{m_1+k}^{(n-1)}\}_{k=1}^{\infty}$ of the diagonal next to the
farthest, $\mathcal{D}_{n-1}$, behave in the same way as the elements
of $\mathcal{D}_n$, that is, there is $m_2$, satisfying
\begin{equation}
  \label{eq:second-m}
m_1<m_2\,,
\end{equation}
such that $d_{m_1+1}^{(n-1)},\dots,d_{m_2-1}^{(n-1)}>0$ and
$d_{k}^{(n-1)}=0$ for all $k\geq m_2$. Here, it is also possible that
$m_2=\infty$ in which case $d_k^{(n-1)}>0$ for all $k>m_1$.

We continue applying the same rule as long as $m_1,\dots,m_j$ are
finite. Thus, if $m_j<\infty$, there is $m_{j+1}$,
satisfying
\begin{equation}
  \label{eq:second-m}
m_j<m_{j+1}\,,
\end{equation}
such that $d_{m_j+1}^{(n-j)},\dots,d_{m_{j+1}-1}^{(n-j)}>0$ (here we
assume that $m_j+1<m_{j+1}$) and $d_{k}^{(n-j)}=0$ for all $k\geq
m_{j+1}$. If $m_j=\infty$, then $d_k^{(n-j)}>0$ for all
$k>m_{j}$. Eventually, there is $j_0\le n-1$ such that
$m_{j_0+1}=\infty$.

\begin{figure}[h]
\begin{center}
\begin{tikzpicture}[scale=.18]\footnotesize
 \pgfmathsetmacro{\xone}{0}
 \pgfmathsetmacro{\xtwo}{ 30.6}
 \pgfmathsetmacro{\yone}{-0.6}
 \pgfmathsetmacro{\ytwo}{30}
  \draw[step=1cm,gray,opacity=0.5,very thin] (\xone,\yone) grid (\xtwo,\ytwo);
  \draw[step=1cm,gray,opacity=0.5,very thin] (35,11) grid (36,12);
  \draw[step=1cm,gray,opacity=0.5,very thin] (35,9) grid (36,10);
  \draw[step=1cm,gray,opacity=0.5,very thin] (35,7) grid (36,8);
\draw(30.9,-.3)node[scale=.8]{$\ddots$};
\draw(30.9,.7)node[scale=.8]{$\ddots$};
\draw(29.9,-.3)node[scale=.8]{$\ddots$};
\draw(39.2,11.5)node[scale=1]{zeros};
\draw(42.5,9.7)node[scale=1]{real numbers};
\draw(44,7.5)node[scale=1]{positive numbers};
\draw(40,22)node[scale=1]{degenerations};
\draw[->] (33.5,22) -- (16,21.5);
\draw[->] (33.5,22) -- (17,19.5);

\begin{scope}
\foreach \x in {0,1,2,3,4,5,6,7,8,9,10,11,12,13,14,15,16,17,
18,19,20,21,22,23,24,25,26,27}
\draw(31.3,2.5+\x)node[scale=.8]{$\dots$}
;
\end{scope}

\begin{scope}
\foreach \x in {0,1,2,3,4,5,6,7,8,9,10,11,12,13,14,15,16,17,
18,19,20,21,22,23,24,25,26,27}
\draw(\x+.5,-.5)node[scale=.8]{$\vdots$}
;
\end{scope}

\begin{scope}
  \filldraw[thin,gray,opacity=.4] (35,9)
    rectangle (36,10)
 ;
  \filldraw[thin,gray,opacity=.9] (35,7)
    rectangle (36,8);
\end{scope}
\begin{scope}
\foreach \x in {0,1,2,3}
{
  \filldraw[thin,gray,opacity=.9] (0+\x, 21-\x)
    rectangle (1+\x,22-\x)
 ;
   \filldraw[thin,gray,opacity=.9] (8+\x, 30-\x)
     rectangle (9+\x,29-\x);}
\end{scope}
\begin{scope}
\foreach \x in {0,1,2,3,4}
{
  \filldraw[thin,gray,opacity=.4] (0+\x, 22-\x)
    rectangle (1+\x,23-\x)
 ;
   \filldraw[thin,gray,opacity=.4] (7+\x, 30-\x)
     rectangle (8+\x,29-\x);}
\end{scope}

\begin{scope}
\foreach \x in {5,6,7}
{
  \filldraw[thin,gray,opacity=.9] (0+\x, 22-\x)
    rectangle (1+\x,23-\x)
 ;
   \filldraw[thin,gray,opacity=.9] (7+\x, 30-\x)
     rectangle (8+\x,29-\x);}
\end{scope}
\begin{scope}
\foreach \x in {0,1,2,3,4,5,6,7,8}
{
  \filldraw[thin,gray,opacity=.4] (0+\x, 23-\x)
    rectangle (1+\x,24-\x)
 ;
   \filldraw[thin,gray,opacity=.4] (6+\x, 30-\x)
     rectangle (7+\x,29-\x);}
\end{scope}
\begin{scope}
\foreach \x in {9}
{
  \filldraw[thin,gray,opacity=.9] (0+\x, 23-\x)
    rectangle (1+\x,24-\x)
 ;
   \filldraw[thin,gray,opacity=.9] (6+\x, 30-\x)
     rectangle (7+\x,29-\x);}
\end{scope}
\begin{scope}
\foreach \x in {0,1,2,3,4,5,6,7,8,9,10}
{
  \filldraw[thin,gray,opacity=.4] (0+\x, 24-\x)
    rectangle (1+\x,25-\x)
 ;
   \filldraw[thin,gray,opacity=.4] (5+\x, 30-\x)
     rectangle (6+\x,29-\x);}
\end{scope}

\begin{scope}
\foreach \x in {0,1,2,3,4,5,6,7,8,9,10,11,12}
{
  \filldraw[thin,gray,opacity=.4] (0+\x, 25-\x)
    rectangle (1+\x,26-\x)
 ;
   \filldraw[thin,gray,opacity=.4] (4+\x, 30-\x)
     rectangle (5+\x,29-\x);}
\end{scope}

\begin{scope}
\foreach \x in {12,13}
{
  \filldraw[thin,gray,opacity=.9] (0+\x, 25-\x)
    rectangle (1+\x,26-\x)
 ;
   \filldraw[thin,gray,opacity=.9] (4+\x, 30-\x)
     rectangle (5+\x,29-\x);}
\end{scope}
\begin{scope}
\foreach \x in {0,1,2,3,4,5,6,7,8,9,10,11,12,13,14}
{
  \filldraw[thin,gray,opacity=.4] (0+\x, 26-\x)
    rectangle (1+\x,27-\x)
 ;
   \filldraw[thin,gray,opacity=.4] (3+\x, 30-\x)
     rectangle (4+\x,29-\x);}
\end{scope}

\begin{scope}
\foreach \x in {15,16,17,18}
{
  \filldraw[thin,gray,opacity=.9] (0+\x, 26-\x)
    rectangle (1+\x,27-\x)
 ;
   \filldraw[thin,gray,opacity=.9] (3+\x, 30-\x)
     rectangle (4+\x,29-\x);}
\end{scope}

\begin{scope}
\foreach \x in {0,1,2,3,4,5,6,7,8,9,10,11,12,13,14,15,16,17,18,19}
{
  \filldraw[thin,gray,opacity=.4] (0+\x, 27-\x)
    rectangle (1+\x,28-\x)
 ;
   \filldraw[thin,gray,opacity=.4] (2+\x, 30-\x)
     rectangle (3+\x,29-\x);}
\end{scope}

\begin{scope}
\foreach \x in {20,21}
{
  \filldraw[thin,gray,opacity=.9] (0+\x, 27-\x)
    rectangle (1+\x,28-\x)
 ;
   \filldraw[thin,gray,opacity=.9] (2+\x, 30-\x)
     rectangle (3+\x,29-\x);}
\end{scope}
\begin{scope}
\foreach \x in {0,1,2,3,4,5,6,7,8,9,10,11,12,13,14,15,
16,17,18,19,20,21,22}
{
  \filldraw[thin,gray,opacity=.4] (0+\x, 28-\x)
    rectangle (1+\x,29-\x)
 ;
   \filldraw[thin,gray,opacity=.4] (1+\x, 30-\x)
     rectangle (2+\x,29-\x);}
\end{scope}

\begin{scope}
\foreach \x in {23,24,25,26,27,28}
{
  \filldraw[thin,gray,opacity=.9] (0+\x, 28-\x)
    rectangle (1+\x,29-\x)
 ;
   \filldraw[thin,gray,opacity=.9] (1+\x, 30-\x)
     rectangle (2+\x,29-\x);}
\end{scope}

\begin{scope}
  \foreach \x in
  {0,1,2,3,4,5,6,7,8,9,10,11,12,13,14,15,16,17,18,19,
20,21,22,23,24,25,26,27,28,29}
  { \filldraw[thin,gray,opacity=.25] (0+\x, 29-\x) rectangle
    (1+\x,30-\x) ; \filldraw[thin,gray,opacity=.2] (0+\x, 29-\x)
    rectangle (1+\x,30-\x);}
\end{scope}

\end{tikzpicture}
\end{center}
\end{figure}

\begin{definition}
\label{def:matrices-degenerate}
For a natural number $n$, the set of matrices satisfying the above
properties with a given set of numbers $\{m_j\}_{j=1}^{j_0}$ is denoted by
$\mathcal{M}(n,\infty)$. 
\end{definition}

As long as $j\le j_0-1$, we say that the diagonal corresponding to
$\mathcal{D}_{n-j}$ undergoes degeneration at $m_{j+1}$. Note that the
diagonal corresponding to $\mathcal{D}_{n-j_0}$ do not
degenerate. Also, $j_0$ defines the number of degenerations that the
matrix $\mathcal{A}$ has.

\begin{remark}
\label{rem:tail-matrix}
Define the number $n_0:=n-j_0$.  Note that the ``tail'' of the matrix,
that is, the semi-infinite submatrix obtained by removing the first
$n_0+m_{j_0}-1$ columns and rows, has $2n_0+1$ diagonals and the
diagonal $\mathcal{D}_{n_0}$ has only positive numbers.
\end{remark}
An example of a matrix in $\mathcal{M}(3,\infty)$, when $m_1=3$ and
$m_2=5$, is the following.
\begin{equation}
  \label{eq:matrix-example}
\mathcal{A}=
\begin{footnotesize}
 \left(\begin{matrix}
    d^{(0)}_1&d^{(1)}_1&d^{(2)}_1&d^{(3)}_1&0&0&0&\dots\\[2.5mm]
    d^{(1)}_1&d^{(0)}_2&d^{(1)}_2&d^{(2)}_2&d^{(3)}_2&0&0&\\[1mm]
    d^{(2)}_1&d^{(1)}_2&d^{(0)}_3&d^{(1)}_3&d^{(2)}_3&0&0&\ddots\\[1mm]
    d^{(3)}_1&d^{(2)}_2&d^{(1)}_3&d^{(0)}_4&d^{(1)}_4&d^{(2)}_4&0&\ddots\\[1mm]
    0&d^{(3)}_2&d^{(2)}_3&d^{(1)}_4&d^{(0)}_5&d^{(1)}_5&0&\ddots\\[1mm]
    0&0&0&d^{(2)}_4&d^{(1)}_5&d^{(0)}_6&d^{(1)}_6&\ddots\\[1mm]
    0&0&0&0&0&d^{(1)}_{6}&d^{(0)}_7&\ddots\\[1mm]
    \vdots&&\ddots&\ddots&\ddots&\ddots&\ddots&\ddots
  \end{matrix}\right)\,.
\end{footnotesize}
\end{equation}
Here we say that the matrix $\mathcal{A}$ underwent a degeneration of
the diagonal $\mathcal{D}_3$ in $m_1=3$ and a degeneration of
$\mathcal{D}_2$ in $m_2=5$. And, note that $j_0=2$.


It is known that the dynamics of an infinite linear mass-spring system
(see Fig.~\ref{fig:0}) is characterized by the spectral properties of
a semi-infinite Jacobi matrix \cite{MR2998707,see-later-mr} when the
system is within the regime of validity of the Hooke law (see
\cite{MR2102477,mono-marchenko} for an explanation of how to obtain
the matrix from the mass-spring system in the finite case).  The
entries of the Jacobi matrix are determined by the masses and spring
constants of the system
\cite{MR2915295,MR2998707,see-later-mr,MR2102477,mono-marchenko}. The
movement of the mechanical system of Fig.~\ref{fig:0} is a
superposition of harmonic oscillations whose frequencies are the
square roots of absolute values of the Jacobi operator's eigenvalues.
\begin{figure}[h]
\begin{center}
\begin{tikzpicture}
  [mass1/.style={circle,draw=black!80,fill=black!13,thick,inner sep=0pt,
   minimum size=5mm},
   mass2/.style={circle,draw=black!80,fill=black!13,thick,inner sep=0pt,
   minimum size=3.7mm},
   mass3/.style={circle,draw=black!80,fill=black!13,thick,inner sep=0pt,
   minimum size=5.7mm},
   mass4/.style={circle,draw=black!80,fill=black!13,thick,inner sep=0pt,
   minimum size=5mm}, 
   mass5/.style={circle,draw=black!80,fill=black!13,thick,inner sep=0pt,
   minimum size=4mm}, 
   mass6/.style={circle,draw=black!80,fill=black!13,thick,inner sep=0pt,
   minimum size=5.2mm}, 
   mass7/.style={circle,draw=black!80,fill=black!13,thick,inner sep=0pt,
   minimum size=6mm}, 
   mass8/.style={circle,draw=black!80,fill=black!13,thick,inner sep=0pt,
   minimum size=5.2mm},
   mass9/.style={circle,draw=black!80,fill=black!13,thick,inner sep=0pt,
   minimum size=5.4mm},
   massn/.style={circle,draw=black!80,fill=black!13,thick,inner sep=0pt,
   minimum size=5.2mm},
   wall/.style={postaction={draw,decorate,decoration={border,angle=-45,
   amplitude=0.3cm,segment length=1.5mm}}}
   ]
  \node (massn) at (12.75,1) [massn] {};
  \node (mass9) at (11.5,1) [mass8] {};
  \node (mass8) at (10.25,1) [mass8] {};
  \node (mass7) at (9.0,1) [mass7] {};
  \node (mass6) at (7.75,1) [mass6] {};
  \node (mass5) at (6.5,1) [mass5] {};
  \node (mass4) at (5.25,1) [mass4] {};
  \node (mass3) at (4.0,1) [mass3] {};
  \node (mass2) at (2.75,1) [mass2] {};
  \node (mass1) at (1.5,1) [mass1] {};
\draw[decorate,decoration={coil,aspect=0.4,segment
  length=1.1mm,amplitude=0.7mm}] (0.5,1) -- node[below=4pt]
{} (mass1);
\draw[decorate,decoration={coil,aspect=0.4,segment
  length=1.4mm,amplitude=0.7mm}] (mass1) -- node[below=4pt]
{} (mass2);
\draw[decorate,decoration={coil,aspect=0.4,segment
  length=1.5mm,amplitude=0.7mm}] (mass2) -- node[below=4pt]
{} (mass3);
\draw[decorate,decoration={coil,aspect=0.4,segment
  length=1.1mm,amplitude=0.7mm}] (mass3) -- node[below=4pt]
{} (mass4);
\draw[decorate,decoration={coil,aspect=0.4,segment
  length=0.9mm,amplitude=0.7mm}] (mass4) -- node[below=4pt]
{} (mass5);
\draw[decorate,decoration={coil,aspect=0.4,segment
  length=1.4mm,amplitude=0.7mm}] (mass5) -- node[below=4pt]
{} (mass6);
\draw[decorate,decoration={coil,aspect=0.4,segment
  length=1.7mm,amplitude=0.7mm}] (mass6) -- node[below=4pt]
{} (mass7);
\draw[decorate,decoration={coil,aspect=0.4,segment
  length=0.8mm,amplitude=0.7mm}] (massn) -- node[below=4pt]
{} (13.75,1);
\draw[decorate,decoration={coil,aspect=0.4,segment
  length=1.1mm,amplitude=0.7mm}] (mass7) -- node[below=4pt]
{} (mass8);
\draw[decorate,decoration={coil,aspect=0.4,segment
  length=1.3mm,amplitude=0.7mm}] (mass8) -- node[below=4pt]
{} (mass9);
\draw[decorate,decoration={coil,aspect=0.4,segment
  length=1.7mm,amplitude=0.7mm}] (mass9) -- node[below=4pt]
{} (massn);
\draw[line width=.8pt,loosely dotted] (13.85,1) -- (14.25,1);
\draw[line width=.5pt,wall](0.5,1.7)--(0.5,0.3);
\end{tikzpicture}
\end{center}
\caption{Mass-spring system corresponding to a Jacobi matrix}
\label{fig:0}
\end{figure}
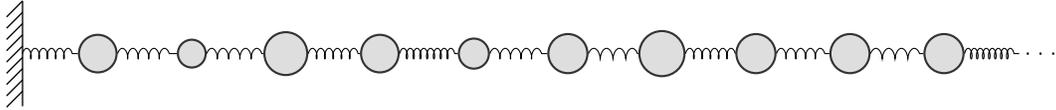
Analogously, one can deduce that a self-adjoint extension of the
minimal closed operator generated by a matrix in
$\mathcal{M}(n,\infty)$ models a linear mass-spring system where the
interaction extends to all the $n$ neighbors of each mass (cf. \cite[Appendix]{2014arXiv1409.3868K}). For
instance, if the matrix is in $\mathcal{M}(2,\infty)$ and no
degeneration of the diagonals occurs, viz. $m_1=\infty$, the corresponding
mass-spring system is given in Fig.~\ref{fig:1}.
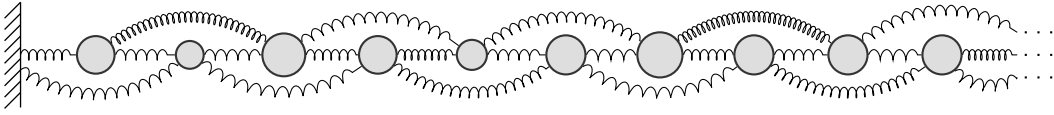
\begin{figure}[h]
\begin{center}
\begin{tikzpicture}
  [mass1/.style={circle,draw=black!80,fill=black!13,thick,inner sep=0pt,
   minimum size=5mm},
   mass2/.style={circle,draw=black!80,fill=black!13,thick,inner sep=0pt,
   minimum size=3.7mm},
   mass3/.style={circle,draw=black!80,fill=black!13,thick,inner sep=0pt,
   minimum size=5.7mm},
   mass4/.style={circle,draw=black!80,fill=black!13,thick,inner sep=0pt,
   minimum size=5mm}, 
   mass5/.style={circle,draw=black!80,fill=black!13,thick,inner sep=0pt,
   minimum size=4mm}, 
   mass6/.style={circle,draw=black!80,fill=black!13,thick,inner sep=0pt,
   minimum size=5.2mm}, 
   mass7/.style={circle,draw=black!80,fill=black!13,thick,inner sep=0pt,
   minimum size=6mm}, 
   mass8/.style={circle,draw=black!80,fill=black!13,thick,inner sep=0pt,
   minimum size=5.2mm},
   mass9/.style={circle,draw=black!80,fill=black!13,thick,inner sep=0pt,
   minimum size=5.4mm},
   massn/.style={circle,draw=black!80,fill=black!13,thick,inner sep=0pt,
   minimum size=5.2mm},
   wall/.style={postaction={draw,decorate,decoration={border,angle=-45,
   amplitude=0.3cm,segment length=1.5mm}}}
]
  \node (massn) at (12.75,1) [massn] {};
  \node (mass9) at (11.5,1) [mass8] {};
  \node (mass8) at (10.25,1) [mass8] {};
  \node (mass7) at (9.0,1) [mass7] {};
  \node (mass6) at (7.75,1) [mass6] {};
  \node (mass5) at (6.5,1) [mass5] {};
  \node (mass4) at (5.25,1) [mass4] {};
  \node (mass3) at (4.0,1) [mass3] {};
  \node (mass2) at (2.75,1) [mass2] {};
  \node (mass1) at (1.5,1) [mass1] {};
\draw[decorate,decoration={coil,aspect=0.4,segment
  length=1.1mm,amplitude=0.7mm}] (0.5,1) -- node[below=4pt]
{} (mass1);
\draw[decorate,decoration={coil,aspect=0.4,segment
  length=1.4mm,amplitude=0.7mm}] (mass1) -- node[below=4pt]
{} (mass2);
\draw[decorate,decoration={coil,aspect=0.4,segment
  length=1.5mm,amplitude=0.7mm}] (mass2) -- node[below=4pt]
{} (mass3);
\draw[decorate,decoration={coil,aspect=0.4,segment
  length=1.1mm,amplitude=0.7mm}] (mass3) -- node[below=4pt]
{} (mass4);
\draw[decorate,decoration={coil,aspect=0.4,segment
  length=0.9mm,amplitude=0.7mm}] (mass4) -- node[below=4pt]
{} (mass5);
\draw[decorate,decoration={coil,aspect=0.4,segment
  length=1.4mm,amplitude=0.7mm}] (mass5) -- node[below=4pt]
{} (mass6);
\draw[decorate,decoration={coil,aspect=0.4,segment
  length=1.7mm,amplitude=0.7mm}] (mass6) -- node[below=4pt]
{} (mass7);
\draw[decorate,decoration={coil,aspect=0.4,segment
  length=0.8mm,amplitude=0.7mm}] (massn) -- node[below=4pt]
{} (13.75,1);
\draw[decorate,decoration={coil,aspect=0.4,segment
  length=1.1mm,amplitude=0.7mm}] (mass7) -- node[below=4pt]
{} (mass8);
\draw[decorate,decoration={coil,aspect=0.4,segment
  length=1.3mm,amplitude=0.7mm}] (mass8) -- node[below=4pt]
{} (mass9);
\draw[decorate,decoration={coil,aspect=0.4,segment
  length=1.7mm,amplitude=0.7mm}] (mass9) -- node[below=4pt]
{} (massn);
\draw[decorate,decoration={coil,aspect=0.4,segment
  length=0.8mm,amplitude=0.7mm}] (mass1) to [bend left=35] node[below=4pt]
{} (mass3);
\draw[decorate,decoration={coil,aspect=0.4,segment
  length=1.5mm,amplitude=0.7mm}] (mass3) to [bend left=35] node[below=4pt]
{} (mass5);
\draw[decorate,decoration={coil,aspect=0.4,segment
  length=1.3mm,amplitude=0.7mm}] (mass5) to [bend left=35] node[below=4pt]
{} (mass7);
\draw[decorate,decoration={coil,aspect=0.4,segment
  length=0.7mm,amplitude=0.7mm}] (mass7) to [bend left=35] node[below=4pt]
{} (mass9);
\draw[decorate,decoration={coil,aspect=0.4,segment
  length=1.5mm,amplitude=0.7mm}] (mass9) to [bend left=35] node[below=4pt]
{} (13.75,1.3);
\draw[decorate,decoration={coil,aspect=0.4,segment
  length=1.5mm,amplitude=0.7mm}] (0.5,0.8) to [bend right=35] node[below=4pt]
{} (mass2);
\draw[decorate,decoration={coil,aspect=0.4,segment
  length=1.8mm,amplitude=0.7mm}] (mass2) to [bend right=35] node[below=4pt]
{} (mass4);
\draw[decorate,decoration={coil,aspect=0.4,segment
  length=1.2mm,amplitude=0.7mm}] (mass4) to [bend right=35] node[below=4pt]
{} (mass6);
\draw[decorate,decoration={coil,aspect=0.4,segment
  length=1.8mm,amplitude=0.7mm}] (mass6) to [bend right=35] node[below=4pt]
{} (mass8);
\draw[decorate,decoration={coil,aspect=0.4,segment
  length=1.1mm,amplitude=0.7mm}] (mass8) to [bend right=35] node[below=4pt]
{} (massn);
\draw[decorate,decoration={coil,aspect=0.4,segment
  length=1.5mm,amplitude=0.7mm}] (massn) to [bend right=35] node[below=4pt]
{} (13.75,0.7);
\draw[line width=.8pt,loosely dotted] (13.85,1) -- (14.25,1);
\draw[line width=.8pt,loosely dotted] (13.85,1.3) -- (14.25,1.3);
\draw[line width=.8pt,loosely dotted] (13.85,0.7) -- (14.25,0.7);
\draw[line width=.5pt,wall](0.5,1.7)--(0.5,0.3);
\end{tikzpicture}
\end{center}
\caption{Mass-spring system of a matrix in
  $\mathcal{M}(2,\infty)$: nondegenerated case}\label{fig:1}
\end{figure}
If for another matrix in $\mathcal{M}(2,\infty)$, one has degeneration
of the diagonals, for instance $m_1=4$, the corresponding mass-spring
system is given in Fig.~\ref{fig:2}.
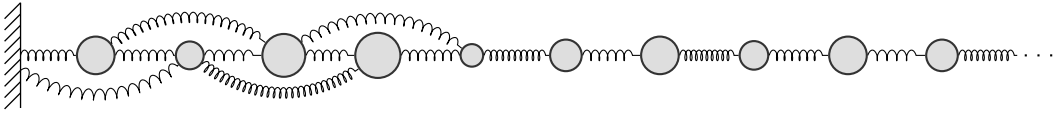
\begin{figure}[h]
\begin{center}
\begin{tikzpicture}
  [mass1/.style={circle,draw=black!80,fill=black!13,thick,inner sep=0pt,
   minimum size=5mm},
   mass2/.style={circle,draw=black!80,fill=black!13,thick,inner sep=0pt,
   minimum size=3.7mm},
   mass3/.style={circle,draw=black!80,fill=black!13,thick,inner sep=0pt,
   minimum size=5.7mm},
   mass4/.style={circle,draw=black!80,fill=black!13,thick,inner sep=0pt,
   minimum size=6mm}, 
   mass5/.style={circle,draw=black!80,fill=black!13,thick,inner sep=0pt,
   minimum size=3mm}, 
   mass6/.style={circle,draw=black!80,fill=black!13,thick,inner sep=0pt,
   minimum size=4.2mm}, 
   mass7/.style={circle,draw=black!80,fill=black!13,thick,inner sep=0pt,
   minimum size=5mm}, 
   mass8/.style={circle,draw=black!80,fill=black!13,thick,inner sep=0pt,
   minimum size=3.8mm}, 
   mass9/.style={circle,draw=black!80,fill=black!13,thick,inner sep=0pt,
   minimum size=5mm}, 
   massn/.style={circle,draw=black!80,fill=black!13,thick,inner sep=0pt,
   minimum size=4.2mm},
   wall/.style={postaction={draw,decorate,decoration={border,angle=-45,
   amplitude=0.3cm,segment length=1.5mm}}}
   ]
  \node (massn) at (12.75,1) [massn] {};
  \node (mass9) at (11.5,1) [mass9] {};
  \node (mass8) at (10.25,1) [mass8] {};
  \node (mass7) at (9.0,1) [mass7] {};
  \node (mass6) at (7.75,1) [mass6] {};
  \node (mass5) at (6.5,1) [mass5] {};
  \node (mass4) at (5.25,1) [mass4] {};
  \node (mass3) at (4.0,1) [mass3] {};
  \node (mass2) at (2.75,1) [mass2] {};
  \node (mass1) at (1.5,1) [mass1] {};
\draw[decorate,decoration={coil,aspect=0.4,segment
  length=1.0mm,amplitude=0.7mm}] (0.5,1) -- node[below=4pt]
{} (mass1);
\draw[decorate,decoration={coil,aspect=0.4,segment
  length=1.1mm,amplitude=0.7mm}] (mass1) -- node[below=4pt]
{} (mass2);
\draw[decorate,decoration={coil,aspect=0.4,segment
  length=1.3mm,amplitude=0.7mm}] (mass2) -- node[below=4pt]
{} (mass3);
\draw[decorate,decoration={coil,aspect=0.4,segment
  length=1.1mm,amplitude=0.7mm}] (mass3) -- node[below=4pt]
{} (mass4);
\draw[decorate,decoration={coil,aspect=0.4,segment
  length=1.1mm,amplitude=0.7mm}] (mass4) -- node[below=4pt]
{} (mass5);
\draw[decorate,decoration={coil,aspect=0.4,segment
  length=0.8mm,amplitude=0.7mm}] (mass5) -- node[below=4pt]
{} (mass6);
\draw[decorate,decoration={coil,aspect=0.4,segment
  length=1.1mm,amplitude=0.7mm}] (mass6) -- node[below=4pt]
{} (mass7);
\draw[decorate,decoration={coil,aspect=0.4,segment
  length=1.0mm,amplitude=0.7mm}] (mass8) -- node[below=4pt]
{} (mass9);
\draw[decorate,decoration={coil,aspect=0.4,segment
  length=0.8mm,amplitude=0.7mm}] (massn) -- node[below=4pt]
{} (13.75,1);
\draw[decorate,decoration={coil,aspect=0.4,segment
  length=0.7mm,amplitude=0.7mm}] (mass7) -- node[below=4pt]
{} (mass8);
\draw[decorate,decoration={coil,aspect=0.4,segment
  length=1.3mm,amplitude=0.7mm}] (mass9) -- node[below=4pt]
{} (massn);
\draw[decorate,decoration={coil,aspect=0.4,segment
  length=1.1mm,amplitude=0.7mm}] (mass1) to [bend left=35] node[below=4pt]
{} (mass3);
\draw[decorate,decoration={coil,aspect=0.4,segment
  length=1.3mm,amplitude=0.7mm}] (mass3) to [bend left=35] node[below=4pt]
{} (mass5);
\draw[decorate,decoration={coil,aspect=0.4,segment
  length=1.5mm,amplitude=0.7mm}] (0.5,0.8) to [bend right=35] node[below=4pt]
{} (mass2);
\draw[decorate,decoration={coil,aspect=0.4,segment
  length=0.8mm,amplitude=0.7mm}] (mass2) to [bend right=35] node[below=4pt]
{} (mass4);
 \draw[line width=.8pt,loosely dotted] (13.85,1) -- (14.25,1);
\draw[line width=.5pt,wall](0.5,1.7)--(0.5,0.3);
\end{tikzpicture}
\end{center}
\caption{Mass-spring system of a matrix in
  $\mathcal{M}(2,\infty)$: degenerated case}\label{fig:2}
\end{figure}

In this work, the approach to the inverse spectral analysis of the
operators whose matrix representation belongs to
$\mathcal{M}(n,\infty)$ is based on the one used in
\cite{MR1668981,MR1699440,2014arXiv1409.3868K} which deal with the
finite dimensional case. As in those papers, an important ingredient
of the inverse spectral analysis is the
linear interpolation of $n$-dimensional vector polynomials, recently
developed in \cite{2014arXiv1401.5384K}.

This paper is organized as follows. In Section \ref{sec:submatrices},
we present the results obtained in \cite{2014arXiv1409.3868K} on the
spectral measures of the operators corresponding to finite dimentional
matrices being an upper-left corner of a matrix in
$\mathcal{M}(n,\infty)$.  These finite dimensional operators play an
auxiliary role in the spectral analysis of operator $A$.  Later, in
Section~\ref{sec:general-case}, we construct a matrix valued function
for each element of $\mathcal{M}(n,\infty)$ having the properties of a
spectral function. Section \ref{sec:Spectral functions self-adjoint}
deals with various criteria for the operator $A$ to be self-adjoint
and gives the spectral function of $A$ touching uppon some of their
properties. Finally, in Section \ref{sec:Reconstruction}, we deal with
the problem of reconstruction and characterization.

\section{Spectral analysis of submatrices}
\label{sec:submatrices}

Fix $N>n$. The spectral analysis of the operator $A$ is carried out by
means of the auxiliary operator $P_{\cH_N}A\upharpoonright_{\cH_N}$,
where $\cH_N=\Span\{\delta_i\}_{i=1}^N$ and $P_{\cH_N}$ is the
orthogonal projection onto the subspace $\cH_N$.  Note that
$P_{\cH_N}A\upharpoonright_{\cH_N}$ can be identified with the
operator whose matrix representation is the finite dimensional
submatrix corresponding to the $N\times N$ upper-left corner of a
matrix in $\mathcal{M}(n,\infty)$ (cf. (\ref{eq:matrix-example})). We
denote the class of these $N \times N$ matrices by $\mathcal{M}(n,N)$
and the corresponding operator in $\cH_N$ is denoted by
$\widetilde{A}_N$.

According to \cite[Sec.\,2]{2014arXiv1409.3868K}, the spectral
analysis of the operator $\widetilde{A}_N$ can be carried out by
studying a system of $N$ equations, where each equation, given by a
fixed $k\in\{1,\dots,N\}$, is of the form
(cf. \cite[Eq.\,2.2]{2014arXiv1409.3868K})
\begin{equation}
\label{eq:recurrence}
\sum_{i=0}^{n-1} d_{k-n+i}^{(n-i)}\varphi_{k-n+i}+
d_k^{(0)}\varphi_k+
\sum_{i=1}^nd_k^{(i)}\varphi_{k+i}=z \varphi_k\,,
\end{equation}
where it has been assumed that
\begin{subequations}
  \label{eq:first-boundary-conditions-scalar}
\begin{align}
\label{eq:first-boundary-conditions-a-scalar}
\varphi_{k}=0\,,\quad\text{for}\ k&<1\,,\\
\label{eq:first-boundary-conditions-b-scalar}
\varphi_{k}=0\,,\quad\text{for}\ k&>N\,.
\end{align}
\end{subequations}

One can consider (\ref{eq:first-boundary-conditions-scalar}) as boundary
conditions where (\ref{eq:first-boundary-conditions-a-scalar}) is the condition
at the left endpoint and (\ref{eq:first-boundary-conditions-b-scalar}) is the
condition at the right endpoint.

The system (\ref{eq:recurrence}) with
(\ref{eq:first-boundary-conditions-scalar}), restricted to $k\in
\{1,2,\dots,N\}\setminus\{m_i\}_{i=1}^{j_0}$, can be solved
recursively whenever the first $n$ entries of the vector $\varphi$ are
given. Let $\varphi^{(j)}(z)$ ($j\in\{1,\dots,n\}$) be a solution of
(\ref{eq:recurrence}) for all $k\in
\{1,2,\dots,N\}\setminus\{m_i\}_{i=1}^{n}$ such that
\begin{equation}
  \label{eq:initial-condition}
\inner{\delta_i}{\varphi^{(j)}(z)}=t_{ji}, \;\text{for}\; i=1,\dots,n\,,
\end{equation}
where $\mathscr{T}=\{t_{ji}\}_{j,i=1}^n$ satisfies
\begin{enumerate}[I)]
\item $\mathscr{T}$ is $n \times n$ upper triangular with real
  entries.\label{item:prop-T}
\item $\prod_{i=1}^nt_{ii}\ne 0$.\label{item:prop-T-2}
\end{enumerate}

The condition given by (\ref{eq:initial-condition}) can be seen as the
initial conditions for the system (\ref{eq:recurrence}) and
(\ref{eq:first-boundary-conditions-a-scalar}). We emphasize that given
the boundary condition at the left endpoint
(\ref{eq:first-boundary-conditions-a-scalar}) and the initial
condition (\ref{eq:initial-condition}), the system restricted to
$k\in\{1,2,\dots,N\}\setminus\{m_i\}_{i=1}^{n}$ has a unique solution
for any fixed $j\in\{1,\dots,n\}$ and $z\in\mathbb{C}$. The
degenerations, which the diagonals of matrices in $\mathcal{M}(n,N)$
undergo, are related to other kind of ``boundary conditions''. Indeed,
the equations of the system (\ref{eq:recurrence}), when
$k\in\{m_j\}_{j=1}^{j_0}$, give rise to the inner boundary conditions
(of the right endpoint type)
(cf. \cite[Eq.\,2.8]{2014arXiv1409.3868K}).

The normalized eigenvectors of the
operator $\widetilde{A}_N$ can be decomposed as follows
\begin{equation}
\label{eq:eigenvector}
  \alpha(x_l)=\sum_{j=1}^n\alpha_j(x_l)\varphi^{(j)}(x_l)\,,
\end{equation}
where $\{x_l\}_{l=1}^N=:\spec\widetilde{A}_N$ and
$\alpha_j(x_l)\in\complex$. It follows from
(\ref{eq:recurrence}), (\ref{eq:first-boundary-conditions-scalar}), and
(\ref{eq:initial-condition}), that
\begin{equation*}
\sum_{j=1}^n\abs{\alpha_j(x_k)}>0\quad\text{for all }\,
k\in\{1,\dots,N\}
\end{equation*}
and
\begin{equation*}
\sum_{k=1}^N\abs{\alpha_j(x_k)}>0\quad\text{for all }\,
j\in\{1,\dots,n\}\,.
\end{equation*}

The operator $\widetilde{A}_N$ has a matrix-valued spectral function
\begin{align}
  \label{eq:spec-func-form}
  \sigma_N^{\mathscr{T}}(t)=&\sum_{x_l<t}\begin{pmatrix}
    \abs{\alpha_1(x_l)}^2&\overline{\alpha_1(x_l)}\alpha_2(x_l)&
\dots&\overline{\alpha_1(x_l)}\alpha_n(x_l)\\
    \overline{\alpha_2(x_l)}\alpha_1(x_l)&\abs{\alpha_2(x_l)}^2&
\dots&\overline{\alpha_2(x_l)}\alpha_n(x_l)\\
    \vdots&\vdots&\ddots&\vdots\\
    \overline{\alpha_n(x_l)}\alpha_1(x_l)&\overline{\alpha_n(x_l)}
\alpha_2(x_l)&\dots&\abs{\alpha_n(x_l)}^2
\end{pmatrix}
\end{align}
with the following properties:
\begin{enumerate}[a)]
\item It is a nondecreasing monotone step function which is continuous
  from the left.
\item Each jump is a matrix of rank not greater than $n$.
\item The sum of the ranks of all jumps equals $N$.
\end{enumerate}

Note that the matrices in the sum on the r.\,h.\,s. of
(\ref{eq:spec-func-form}) are the tensor product of
the vector $\left(\begin{smallmatrix}\overline{\alpha_1(x_l)}\\
    \vdots\\\overline{\alpha_n(x_l)}\end{smallmatrix}\right)$ with the
complex conjugate of itself.

The relationship between the spectral functions
$\sigma_N^{\mathscr{T}}$ for an arbitrary $\mathscr{T}$ and the case
$\mathscr{T}=I$ is given by the following equation which is proven in
\cite[Pro.\,2.1]{2014arXiv1409.3868K}.
\begin{equation}
\label{eq:finite-measure-finite-first-moment}
  \mathscr{T}^*\int_{\reals}d\,\sigma_N^{\mathscr{T}}\mathscr{T}
=\int_{\reals}d\,\sigma_N^I=I\,.
\end{equation}

Consider the Hilbert space $L_2(\mathbb{R},\sigma_N^{\mathscr{T}})$
with the usual inner product which we assume to be antilinear in the
first argument (for the definition of
$L_2(\mathbb{R},\sigma_N^{\mathscr{T}})$,
see \cite[Sec.\,72]{MR1255973}).  Clearly, the property c) implies
that $L_2(\mathbb{R},\sigma_N^{\mathscr{T}})$
is an $N$-dimensional
space and in each equivalence class there is an $n$-dimensional
vector polynomial.

Define the vectors
\begin{equation}
\label{eq:first-pk}
\pb{p}_k:=\mathscr{T}\pb{e}_k\quad\text{ for }k=1,\dots,n\,,
\end{equation}
where $\{\pb{e}_k\}_{k=1}^n$ is the canonical basis in
$\complex^n$, i.e.,
\begin{equation}
  \label{eq:canonical-basis-CN}
  \pb{e}_1=\left(
    \begin{tiny}
      \begin{matrix}
        1\\
        0\\
        \vdots\\
        0
      \end{matrix}
    \end{tiny}
\right),\pb{e}_2=\left(
    \begin{tiny}
      \begin{matrix}
        0\\
        1\\
        \vdots\\
        0
      \end{matrix}
    \end{tiny}
\right),\dots,\pb{e}_n=\left(
    \begin{tiny}
      \begin{matrix}
        0\\
        \vdots\\
        0\\
        1
      \end{matrix}
    \end{tiny}
\right)\,.
\end{equation}
Taking $\{\pb{p}_k\}_{k=1}^n$ as initial conditions of
the recurrence equation
\begin{equation}
\label{eq:recurrence-vector-plynomials}
\sum_{i=0}^{n-1} d_{k-n+i}^{(n-i)}\pb{p}_{k-n+i}(z)+
d_k^{(0)}\pb{p}_k(z)+
\sum_{i=1}^nd_k^{(i)}\pb{p}_{k+i}(z)=z \pb{p}_k(z)\,,\quad
k\in\nats\setminus\{m_j\}_{j=1}^{j_0}\,,
\end{equation}
where it is assumed that
\begin{align}
\label{eq:first-boundary-conditions}
\pb{p}_{k}=0\,,\quad\text{for}\quad k<1\,,
\end{align}
one obtains a sequence $\{\pb{p}_k(z)\}_{k=1}^\infty$ of vector
polynomials. The next assertion is proven in
\cite[Lem.\,2.2]{2014arXiv1409.3868K}.
\begin{proposition}
  \label{prop:ortonormal-p-L2-finite}
  For any natural number $N>n$, the vector polynomials
  $\{\boldsymbol{p}_k(z)\}_{k=1}^{N}$, defined by
  (\ref{eq:recurrence-vector-plynomials}), satisfy
\begin{equation*}
  \inner{\boldsymbol{p}_j}{\boldsymbol{p}_k}
_{L_2(\mathbb{R},\sigma_N^{\mathscr{T}})}
=\delta_{jk}
\end{equation*}
for $j,k\in\{1,\dots,N\}$.
\end{proposition}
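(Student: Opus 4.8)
The plan is to exploit the purely atomic structure of the matrix measure $\sigma_N^{\mathscr{T}}$ to turn the $L_2$ pairing into a finite sum over $\spec\widetilde{A}_N$, and then to recognize that sum as a Parseval identity for the orthonormal eigenbasis of $\widetilde{A}_N$ in $\cH_N$. Concretely, since by \eqref{eq:spec-func-form} the jump of $\sigma_N^{\mathscr{T}}$ at $x_l$ is the matrix $\pb{w}(x_l)\pb{w}(x_l)^{*}$ with $\pb{w}(x_l):=(\overline{\alpha_1(x_l)},\dots,\overline{\alpha_n(x_l)})^{\top}$, the inner product becomes
\[
\inner{\pb{p}_j}{\pb{p}_k}_{L_2(\reals,\sigma_N^{\mathscr{T}})}
=\sum_{l}\pb{p}_j(x_l)^{*}\,\pb{w}(x_l)\,\pb{w}(x_l)^{*}\,\pb{p}_k(x_l)\,,
\]
so everything reduces to computing the scalar $\pb{w}(x_l)^{*}\pb{p}_k(x_l)$.

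First I would establish the bridge between the vector polynomials and the scalar fundamental solutions. Writing $\varphi^{(j)}_k(z)$ for the $k$-th entry of $\varphi^{(j)}(z)$, I claim the $j$-th component of $\pb{p}_k(z)$ equals $\varphi^{(j)}_k(z)$. Indeed, comparing \eqref{eq:recurrence-vector-plynomials} with \eqref{eq:recurrence} shows that each component of the vector recurrence is exactly the scalar recurrence in the index $k$, while the initial data \eqref{eq:first-pk} give $\big(\pb{p}_k\big)_j=(\mathscr{T}\pb{e}_k)_j=t_{jk}$ for $k=1,\dots,n$, which coincides with $\varphi^{(j)}_k=t_{jk}$ from \eqref{eq:initial-condition}. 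Since both families are propagated by the same recursion and agree on the first $n$ indices, uniqueness of the recursive solution forces $\big(\pb{p}_k(z)\big)_j=\varphi^{(j)}_k(z)$ throughout.

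With this identification, the eigenvector decomposition \eqref{eq:eigenvector} yields
\[
\pb{w}(x_l)^{*}\pb{p}_k(x_l)=\sum_{j=1}^{n}\alpha_j(x_l)\,\varphi^{(j)}_k(x_l)=\inner{\delta_k}{\alpha(x_l)}\,,
\]
the $k$-th coordinate of the normalized eigenvector $\alpha(x_l)$; taking conjugates gives $\pb{p}_j(x_l)^{*}\pb{w}(x_l)=\inner{\alpha(x_l)}{\delta_j}$. Substituting back, the matrix pairing collapses to
\[
\inner{\pb{p}_j}{\pb{p}_k}_{L_2(\reals,\sigma_N^{\mathscr{T}})}
=\sum_{l}\inner{\delta_k}{\alpha(x_l)}\inner{\alpha(x_l)}{\delta_j}\,.
\]
Since $\widetilde{A}_N$ has real symmetric matrix, its normalized eigenvectors $\{\alpha(x_l)\}$ form an orthonormal basis of $\cH_N$, so the completeness relation $\sum_{l}\inner{\delta_k}{\alpha(x_l)}\inner{\alpha(x_l)}{\delta_j}=\inner{\delta_k}{\delta_j}=\delta_{jk}$ finishes the proof.

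I expect the identification in the second paragraph to be the main obstacle: one must check that the componentwise reading of the vector recurrence really does reproduce the scalar system together with its inner boundary conditions at the degeneration indices $m_1,\dots,m_{j_0}$, so that the agreement of the first $n$ entries propagates to all $k\le N$. A secondary point requiring care is the treatment of eigenvalues of multiplicity greater than one: there the jump in \eqref{eq:spec-func-form} is a sum of rank-one tensors coming from an orthonormal family of eigenvectors, and one has to ensure these are exactly the vectors entering the Parseval step, so that the completeness relation holds on $\cH_N$.
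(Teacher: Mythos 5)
Your proof is correct and is essentially the paper's own argument: the paper gives no proof here but defers to \cite[Lem.\,2.2]{2014arXiv1409.3868K}, where the reasoning is exactly yours---identify the $j$-th component of $\pb{p}_k(z)$ with $\varphi^{(j)}_k(z)$ via the shared recurrence and initial data $\mathscr{T}$, reduce the $L_2(\reals,\sigma_N^{\mathscr{T}})$ pairing to a sum of rank-one terms over the multiplicity-counted eigenvalues $\{x_l\}_{l=1}^N$, recognize $\pb{w}(x_l)^{*}\pb{p}_k(x_l)$ as the $k$-th coordinate of the normalized eigenvector $\alpha(x_l)$ from \eqref{eq:eigenvector}, and conclude by unitarity (Parseval) of the eigenvector matrix of $\widetilde{A}_N$. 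The two caveats you flag at the end (componentwise propagation through the degeneration indices $m_1,\dots,m_{j_0}$, and orthonormal choices of eigenvectors within multiple eigenspaces) are exactly the points the setup of Section~2 is designed to handle, so nothing further is missing.
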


Let $U:\mathcal{H_N} \rightarrow L_2(\mathbb{R},\sigma_N^{\mathscr{T}})$
be the isometry given by $U\delta_k\mapsto \boldsymbol{p}_k$, for all
$k\in\{1,\dots,N\}$. Under this isometry the operator
$\widetilde{A}_N$ becomes the operator of multiplication by the
independent variable in $L_2(\mathbb{R},\sigma_N^{\mathscr{T}})$ (see
\cite[Sec.~2]{2014arXiv1409.3868K}).

Define
\begin{equation}
  \label{eq:polynomials_q} \pb{q}_j(z):=
(z-d_{m_j}^{(0)})\pb{p}_{m_j}(z)-\sum_{k=0}^{n-1}
d_{m_j-n+k}^{(n-k)}\pb{p}_{m_j-n+k}(z)-\sum_{k=1}^{n-j}d_{m_j}^{(k)}\pb{p}_{m_j+k}(z)
\end{equation}
for $j\in\{1,\dots,j_0\}$.

Using the same reasoning as in \cite[Thm.\,3.1]{2014arXiv1409.3868K},
one proves that, for any natural number $N\ge n_0+m_{j_0}$ (see
  Remark~\ref{rem:tail-matrix}), the vector polynomials
  $\{\boldsymbol{q}_j(z)\}_{k=1}^{j_0}$ satisfy
\begin{equation}
\label{eq:q-are-zeros}
  \inner{\boldsymbol{q}_j}{\boldsymbol{q}_j}
_{L_2(\mathbb{R},\sigma_N^{\mathscr{T}})}
=0\,.
\end{equation}

The existence of polynomials of zero norm in
$L_2(\mathbb{R},\sigma_N^{\mathscr{T}})$ is related to a linear
interpolation problem consisting in the following: Given collections
of numbers $\{z_k\}_{k=1}^N$ and $\{\alpha_j(k)\}_{j=1}^n$
($k=1,\dots,N$), find the scalar polynomials $R_{j}(z)$ (j=1,\dots,n)
wich satisfy the equation
\begin{equation*}
  \sum_{j=1}^n\alpha_{j}(k)R_{j}(z_k)=0\,,\quad \forall k\in\{1,\dots,N\}\,.
\end{equation*}
This is equivalent (see \cite{2014arXiv1409.3868K}) to
finding  $n$-dimensional vector polynomials satisfying
\begin{equation}
  \label{eq:interpolation-vector-polynomials}
  \inner{\boldsymbol{r}(z)}{\boldsymbol{r}(z)}
_{L_2(\reals,\sigma_N^{\mathscr{T}})}=0\,,
  \qquad \boldsymbol{r}(z)=\left(R_1(z),R_2(z),\ldots,R_n(z)\right)^{t}\,.
\end{equation}
In \cite{2014arXiv1401.5384K} it was found that the solutions of the
linear interpolation problem given by
(\ref{eq:interpolation-vector-polynomials}) are determined by a set of
$n$ vector polynomials called generators
\cite[Thm.\,5.3]{2014arXiv1401.5384K}.

An important concept in the context of solving 
\eqref{eq:interpolation-vector-polynomials} is the following.
 \begin{definition}
\label{def:height}
Let $\boldsymbol{r}(z)=\left(R_1(z),R_2(z),\ldots,R_n(z)\right)^{t}$
be an $n$-dimensional vector polynomial. The height of
$\boldsymbol{r}(z)$ is the number
\begin{equation*}
h(\boldsymbol{r}):=
\max_{j\in\{1,\dots,n\}}\left\lbrace n\deg (R_j)+j-1\right\rbrace\,,
\end{equation*}
where it is assumed that $\deg 0:=-\infty$ and
$h(\boldsymbol{0}):=-\infty$.
\end{definition}

Note that we have defined the vector polynomials
$\{\pb{e}_k\}_{k=1}^n$ so that
\begin{equation}
  \label{eq:height-e}
  h(\pb{e}_k)=k-1\,.
\end{equation}
Having the concepts of height of a vector polynomial and generator of
the interpolation problem \eqref{eq:interpolation-vector-polynomials}
at hand, we invoke results from \cite{2014arXiv1409.3868K} and
\cite{2014arXiv1401.5384K}. First we convene:
\begin{convention}
  From now on, we consider the natural number $N$ to be no less than
  $n_0+m_{j_0}$ (see Remark~\ref{rem:tail-matrix}).
\end{convention}

\begin{proposition}
\label{prop:q-j-generator}
(\cite[Thm.\,3.1]{2014arXiv1409.3868K}) The vector polynomials
$\{\boldsymbol{q}_j(z)\}_{j=1}^{j_0}$ are the first $j_0$ generators
of the linear interpolation problem given by
(\ref{eq:interpolation-vector-polynomials}) (see
\cite[Sec.\,3]{2014arXiv1409.3868K}).  Moreover, for $j=1,\dots,j_0$,
$h(\pb{q}_j)$ are different elements of the factor space
$\integers/n\integers$ \cite[Lem.\,4.3]{2014arXiv1401.5384K}.
\end{proposition}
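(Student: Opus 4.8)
The plan is to translate the vanishing-norm condition into the linear interpolation problem \eqref{eq:interpolation-vector-polynomials}, to locate the $\pb{q}_j$ inside the module of its solutions by a height computation, and to let the distinctness modulo $n$ fall out for free once the generator property is established. First I would record that, because the jump of $\sigma_N^{\mathscr{T}}$ at each $x_l$ is the tensor square of $\overline{(\alpha_1(x_l),\dots,\alpha_n(x_l))}$, one has for any vector polynomial $\pb{r}=(R_1,\dots,R_n)^{t}$ the identity $\inner{\pb{r}}{\pb{r}}_{L_2(\reals,\sigma_N^{\mathscr{T}})}=\sum_{l=1}^{N}\abs{\sum_{j=1}^{n}\alpha_j(x_l)R_j(x_l)}^2$. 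Hence $\inner{\pb{r}}{\pb{r}}=0$ is exactly the interpolation condition, and the set of zero-norm vector polynomials is a $\complex[z]$-submodule of $\complex[z]^n$ whose generators are those furnished by \cite[Thm.\,5.3]{2014arXiv1401.5384K}. By \cite[Lem.\,4.3]{2014arXiv1401.5384K} the heights of the generators occupy pairwise distinct classes of $\integers/n\integers$; consequently, once the $\pb{q}_j$ are shown to be generators, the second assertion of the proposition requires no separate argument.

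Next I would verify that each $\pb{q}_j$ is a zero-norm polynomial and compute its height. The vanishing $\inner{\pb{q}_j}{\pb{q}_j}_{L_2(\reals,\sigma_N^{\mathscr{T}})}=0$ is \eqref{eq:q-are-zeros}. For the height, note in \eqref{eq:polynomials_q} that the only summand carrying the factor $z$ is $z\pb{p}_{m_j}$, while every remaining summand is one of $\pb{p}_{m_j}$, $\pb{p}_{m_j-n+k}$ $(0\le k\le n-1)$, or $\pb{p}_{m_j+k}$ $(1\le k\le n-j)$. Using the values of $h(\pb{p}_k)$ propagated by the recurrence \eqref{eq:recurrence-vector-plynomials}, one checks that all of these have height strictly below $h(z\pb{p}_{m_j})=h(\pb{p}_{m_j})+n$; in particular the highest-index term $\pb{p}_{m_j+n-j}$, generated by the step immediately preceding the skipped index $m_j$, stays strictly below. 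Since no cancellation can then occur in the leading block, $h(\pb{q}_j)=h(\pb{p}_{m_j})+n$, whose residue modulo $n$ is that of $h(\pb{p}_{m_j})$.

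The crux is to certify that $\pb{q}_j$ is in fact a generator, i.e. that it has least height in its residue class among all zero-norm polynomials, and that $\{\pb{q}_j\}_{j=1}^{j_0}$ are the first $j_0$ generators. Here I would argue exactly as in \cite[Thm.\,3.1]{2014arXiv1409.3868K}: the polynomial $\pb{q}_j$ is precisely the \emph{defect} produced by the equation of the system that is skipped at $k=m_j$, that is, the inner boundary condition attached to the $j$-th degeneration. Below that index the recurrence carries no degeneration and therefore cannot manufacture a zero-norm polynomial of smaller height lying in the same class, which yields the minimality. Matching the $j_0$ heights $h(\pb{p}_{m_j})+n$ against the increasing sequence of generator heights supplied by \cite{2014arXiv1401.5384K} then identifies the $\pb{q}_j$ with the first $j_0$ generators (those of least height).

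The main obstacle is this last step. One must track the heights $h(\pb{p}_k)$ faithfully through the successive degenerations $m_1<\dots<m_{j_0}$ — each skipped index shifting the correspondence between the recurrence index and the height it produces — and then prove minimality of $h(\pb{q}_j)$ in its class rather than a single comparison. This bookkeeping is where the reasoning of \cite[Thm.\,3.1]{2014arXiv1409.3868K} and the generator calculus of \cite{2014arXiv1401.5384K} have to be transferred to the present setting of fixed but arbitrarily large $N$; the distinctness modulo $n$ then comes gratis from \cite[Lem.\,4.3]{2014arXiv1401.5384K}.
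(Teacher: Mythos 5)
Your proposal is correct and takes essentially the same route as the paper: the paper offers no independent proof of this proposition, importing it directly from \cite[Thm.\,3.1]{2014arXiv1409.3868K} and \cite[Lem.\,4.3]{2014arXiv1401.5384K}, and your sketch---rewriting zero norm as the interpolation condition via the rank-one jumps of $\sigma_N^{\mathscr{T}}$, computing $h(\pb{q}_j)=h(\pb{p}_{m_j})+n$, and deferring the minimality and generator bookkeeping to those same references---reconstructs exactly that argument. Your height computation is consistent with (\ref{eq:height-p-k}) and (\ref{eq:height-different-p-and-q}), so nothing in your outline departs from or conflicts with the paper's treatment.
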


The heights of the vector polynomials $\{\pb{p}_k\}_{k=n+1}^\infty$
are determined recursively by means of the system
\eqref{eq:recurrence-vector-plynomials}. Indeed, for any
$m_{j}<k<m_{j+1}$, with $j=0,\dots,j_0$, one has the equation
  \begin{equation*}
    \dots +d_k^{(0)}\pb{p}_k+d_k^{(1)}\pb{p}_{k+1}+\dots+
d_{k}^{(n-j)}\pb{p}_{k+n-j}=z\pb{p}_k\,,
  \end{equation*}
  where we have assumed that $m_0=0$. Since $d_{k}^{(n-j)}$ never
  vanishes, the height of $\pb{p}_{k+n-j}$ coincides with the one of
  $z\pb{p}_k$. Thus
  \begin{equation}
    \label{eq:height-p-k}
    h(\pb{p}_{k+n-j})=n+h(\pb{p}_{k})\,.
  \end{equation}
  If there are no degenerations of the diagonals, then
  (\ref{eq:height-p-k}) implies that
  \begin{equation}
\label{eq:height-p-non-degeneration}
    h(\pb{p}_{k})=k-1\,,\quad \text{for all }k\in\nats\,.
  \end{equation}
  On the other hand, in the presence of degenerations, one verifies
  from  (\ref{eq:polynomials_q}) and (\ref{eq:height-p-k}) that, no
  matter which $k\in\nats$ one chooses,
\begin{equation}
\label{eq:height-different-p-and-q}
  h(\pb{p}_{k})\ne h(\pb{p}_{m_j})+n=h(\pb{q}_{j})\,,
\end{equation}
for any $j=1,\dots,j_0$.

\begin{lemma}
  \label{lem:cover-all-heights}
  For any nonnegative integer $s$, there exist $k\in\nats$ or a pair
  $j\in\{1,\dots,j_0\}$ and $l\in\nats\cup\{0\}$ such that either
  $s=h(\pb{p}_{k})$ or $s=h(\pb{q}_{j})+nl$.
\end{lemma}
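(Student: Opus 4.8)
The plan is to prove the covering statement by strong induction on the nonnegative integer $s$, using the height recurrence \eqref{eq:height-p-k} as the lifting mechanism and the identity $h(\pb{p}_{m_j})+n=h(\pb{q}_j)$ recorded in \eqref{eq:height-different-p-and-q} to deal with the degeneration indices. For the base case $0\le s\le n-1$, the vector polynomial $\pb{p}_{s+1}=\mathscr{T}\pb{e}_{s+1}$ is constant, and since $\mathscr{T}$ is upper triangular with $t_{s+1,s+1}\neq0$ by properties \ref{item:prop-T}--\ref{item:prop-T-2}, its top nonvanishing component is the $(s+1)$-th; hence $h(\pb{p}_{s+1})=h(\pb{e}_{s+1})=s$ by \eqref{eq:height-e}, so such an $s$ is realized as a height of some $\pb{p}_k$.

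For the inductive step I would fix $s\ge n$ and assume every nonnegative integer below $s$ is covered; in particular $s-n\ge0$ is covered, and I split into cases according to how. If $s-n=h(\pb{q}_j)+nl$ for some $j\in\{1,\dots,j_0\}$ and $l\ge0$, then $s=h(\pb{q}_j)+n(l+1)$ lies in the same progression. If instead $s-n=h(\pb{p}_k)$ for some $k\in\nats$, there are two possibilities. When $k=m_j$ for some $j$, the identity in \eqref{eq:height-different-p-and-q} gives $s=h(\pb{p}_{m_j})+n=h(\pb{q}_j)$, so $s$ sits in the $j$-th progression with $l=0$. When $k\notin\{m_j\}_{j=1}^{j_0}$, then $m_j<k<m_{j+1}$ for a unique $j\in\{0,\dots,j_0\}$ (with $m_0=0$), and \eqref{eq:height-p-k} yields $h(\pb{p}_{k+n-j})=n+h(\pb{p}_k)=s$; the index $k+n-j$ is a genuine natural number since $n-j\ge n-j_0=n_0\ge1$. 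In every case $s$ is covered, closing the induction.

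The step I expect to require the most care is the bookkeeping at the degeneration indices. The recurrence \eqref{eq:height-p-k} applies only when $k$ lies strictly between two consecutive $m_j$'s, so a height carried by $\pb{p}_{m_j}$ cannot be promoted to another $\pb{p}$-height; it is precisely here that \eqref{eq:height-different-p-and-q} intervenes, redirecting that value onto the generator height $h(\pb{q}_j)$. Once one observes that each progression $\{h(\pb{q}_j)+nl\}_{l\ge0}$ is closed under adding $n$ and that the value skipped at the index $m_j$ is supplied by $h(\pb{q}_j)$, the three cases fit together and leave no element of $\nats\cup\{0\}$ uncovered.
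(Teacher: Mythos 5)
Your proof is correct and takes essentially the same route as the paper's: the paper argues by choosing a minimal $\hat{l}_0$ with $s-n\hat{l}_0$ in the covered set and showing it is not minimal, which is exactly your strong induction read contrapositively, with the identical three-case analysis using \eqref{eq:height-p-k} to lift $\pb{p}$-heights at non-degeneration indices, the identity $h(\pb{p}_{m_j})+n=h(\pb{q}_j)$ from \eqref{eq:height-different-p-and-q} at the indices $m_j$, and closure of the progressions $h(\pb{q}_j)+n\nats$ under adding $n$. The only cosmetic difference is your base case $0\le s\le n-1$ via the triangularity of $\mathscr{T}$, where the paper instead invokes $h(\pb{p}_k)=k-1$ for $k=1,\dots,h(\pb{q}_1)$ together with $h(\pb{q}_1)>n$.
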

\begin{proof}
  This proof repeats the one of
  \cite[Lem.\,3.3]{2014arXiv1409.3868K}. We have reproduced it here
  for the reader's convenience. Due to (\ref{eq:height-p-k}), it
  follows from \eqref{eq:first-pk} and \eqref{eq:height-e} that
  \begin{equation}
\label{eq:first-p-height}
    h(\pb{p}_{k})=k-1\quad\text{ for }k=1,\dots,h(\pb{q}_1)
\end{equation}
(cf. \eqref{eq:height-p-non-degeneration}).

Suppose that there is $s\in\nats$ ($s>n$) such that $s\ne h(\pb{p}_k)$
for all $k\in\nats$ and $s\ne h(\pb{q}_j)+nl$ for all
$j\in\{1,\dots,j_0\}$ and $l\in\nats\cup\{0\}$. Let $\hat{l}$ be an
integer such that
$s-n\hat{l}\in\{h(\pb{p}_k)\}_{k=1}^\infty\cup\{h(\pb{q}_j)+nl\}$
($j\in\{1,\dots,j_0\}$ and $l\in\nats\cup\{0\}$). There is always such
an integer due to (\ref{eq:first-p-height}) and $h(\pb{q}_1)>n$ (see
(\ref{eq:height-different-p-and-q})). We take $\hat{l}_0$ to be the
minimum of all $\hat{l}$'s. Thus, there is $k'\in\nats$ or
$j'\in\{1,\dots,j_0\}$, respectively, such that either
\begin{enumerate}[a)]
\item $s-n\hat{l}_0=h(\pb{p}_{k'})$ or \label{item:lem:heigth-nats-1}
 \item  $s-n\hat{l}_0=h(\pb{q}_{j'})+nl$, with
$l\in\nats\cup\{0\}$.\label{item:lem:heigth-nats-2}
\end{enumerate}
In the case \ref{item:lem:heigth-nats-1}), we prove that $\hat{l}_0$
is not the minimum integer, this implies the assertion of the
lemma. Indeed, if there is $j\in \{1,\dots,j_0\}$ such that $k'=m_{j}$,
then $s-n\hat{l}_0+n=h(\pb{p}_{m_{j}})+n=h(\pb{q}_{j})$ due to
(\ref{eq:height-different-p-and-q}). If there is not such $j$, then
$m_j<k'<m_{j+1}$ and (\ref{eq:height-p-k}) implies
$s-n\hat{l}_0+n=h(\pb{p}_{k'})+n=h(\pb{p}_{k'+n-j})$.

For the case \ref{item:lem:heigth-nats-2}), if
$s-n\hat{l}_0=h(\pb{q}_{j'})+nl$, then
$s=h(\pb{q}_{j'})+n(l+\hat{l}_0)$ which is a contradiction.
\end{proof}
As a consequence of \cite[Thm.~2.1]{2014arXiv1401.5384K}, the above
lemma yields the following result.
\begin{corollary}
\label{cor:basis-vector}
Any vector polynomial $\boldsymbol{r}(z)$ is a finite linear
combination of
\begin{equation}
\label{eq:basis-p-q}
  \{\pb{p}_k(z): k\in\nats\} \cup \{z^l\pb{q}_j(z):
l\in\nats\cup\{0\},\, j\in\{1,\dots,j_0\}\}\,,
\end{equation}
where if $j_0=0$, the second set in \eqref{eq:basis-p-q} is empty.
\end{corollary}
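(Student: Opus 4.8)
The plan is to read the height of Definition~\ref{def:height} as a surrogate for the degree of a scalar polynomial and then run the usual division-algorithm argument, feeding it the surjectivity of heights supplied by Lemma~\ref{lem:cover-all-heights}. The first observation is that for a nonzero vector polynomial $\pb{r}=(R_1,\dots,R_n)^t$ the number $h(\pb{r})=s$ pins down a \emph{unique} leading slot: since the quantities $n\deg(R_i)+i-1$ lie in pairwise distinct residue classes modulo $n$ (as $i-1$ runs through $0,\dots,n-1$), the maximum defining $h$ is attained at exactly one index $j$, with $j-1\equiv s\pmod n$ and $\deg(R_j)=\lfloor s/n\rfloor$. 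Thus a well-defined ``leading coefficient'' of $\pb{r}$ sits in the $j$-th component at degree $\lfloor s/n\rfloor$, and two vector polynomials of equal height have their leading coefficients in the same slot. I would also record that multiplying by $z$ raises every $\deg(R_i)$ by one, hence raises $h$ by exactly $n$; in particular $h(z^l\pb{q}_j)=h(\pb{q}_j)+nl$, and every element of the family (\ref{eq:basis-p-q}) is nonzero with a well-defined height (the $\pb{p}_k$ by (\ref{eq:first-p-height}) and (\ref{eq:height-p-k}), the $\pb{q}_j$ because $h(\pb{q}_j)=h(\pb{p}_{m_j})+n$ is finite).

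With these preliminaries, the spanning assertion is proved by induction on the height, which is the content of \cite[Thm.~2.1]{2014arXiv1401.5384K}. Given an arbitrary vector polynomial $\pb{r}\ne\pb{0}$ with $h(\pb{r})=s$, Lemma~\ref{lem:cover-all-heights} produces a member $\pb{f}$ of (\ref{eq:basis-p-q}) with $h(\pb{f})=s$ (either some $\pb{p}_k$ or some $z^l\pb{q}_j$). By the previous paragraph the leading coefficients of $\pb{r}$ and $\pb{f}$ occupy the same slot, so for a suitable $c\in\complex$ the difference $\pb{r}-c\pb{f}$ has height strictly smaller than $s$. Since the height takes values in $\nats\cup\{0\}$ (with $h(\pb{0})=-\infty$), iterating this cancellation terminates after finitely many steps at $\pb{0}$, and unwinding the subtractions exhibits $\pb{r}$ as a finite linear combination of elements of (\ref{eq:basis-p-q}). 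When $j_0=0$ there are no degenerations, the second family is empty, and the same reduction using only the $\pb{p}_k$ (whose heights are then $k-1$ by (\ref{eq:height-p-non-degeneration})) applies.

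The main thing to get right --- and what \cite[Thm.~2.1]{2014arXiv1401.5384K} packages --- is that the height genuinely behaves like a degree, i.e.\ that the maximizing index in Definition~\ref{def:height} is unique so that the leading term is well-defined and its cancellation strictly lowers the height; this is the modular bookkeeping in the first paragraph. Note that only \emph{surjectivity} of the heights of (\ref{eq:basis-p-q}) onto $\nats\cup\{0\}$ is needed for spanning, so Lemma~\ref{lem:cover-all-heights} is exactly the missing ingredient; distinctness of these heights (which would upgrade the conclusion to a basis) is not required here and need not be addressed.
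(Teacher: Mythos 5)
Your proposal is correct and takes essentially the same route as the paper: the paper obtains the corollary in one line by combining Lemma~\ref{lem:cover-all-heights} with \cite[Thm.~2.1]{2014arXiv1401.5384K}, and your height-induction argument (unique leading slot since the numbers $n\deg(R_i)+i-1$ lie in distinct residue classes mod $n$, cancellation strictly lowering the height, surjectivity of heights supplied by the lemma) is precisely the content of that cited theorem, here unpacked. Your closing observation is also accurate --- only surjectivity of the heights onto $\nats\cup\{0\}$ is needed for spanning, which is exactly what Lemma~\ref{lem:cover-all-heights} provides.
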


To conclude this section, we use the canonical basis of $\complex^n$
(see (\ref{eq:canonical-basis-CN})) to define the family of vector
polynomials for $k\in\nats$ and $i=1,\dots,n$.
\begin{equation}
    \label{eq:canonical-vector-polynomial}
    \pb{e}_{nk+i}(z):=z^k\pb{e}_i\,.
  \end{equation}
Observe that
\begin{equation}
\label{eq:remark-basis-canonical-moments}
\inner{\pb{e}_{nk+i}(t)}{\pb{e}_{nl+j}(t)}_{L_2(\reals,\sigma_N^{\mathscr{T}})}=
\int_{\reals}t^{k+l}d\,\sigma_{N}^{\mathscr{T}}(i,j)\,.
\end{equation}
On the basis of Corollary~\ref{cor:basis-vector}, one verifies that
the matrix moments of $\sigma_N^{\mathscr{T}}$,
\begin{equation}
\label{eq:finite-moments}
S_k(\mathscr{T}):=\int_{\reals}t^k\,d\sigma_N^{\mathscr{T}}\quad \text{for}\ 
k=0,1,\dots,\left\lceil\frac{2h(\pb{p}_N)}{n}\right\rceil\,,
\end{equation}
coincide with the ones of $\sigma_{\widetilde{N}}^{\mathscr{T}}$ for any $\widetilde{N}\geq N$, where $\lceil\cdot\rceil$ is
the ceiling function.
\begin{remark}
  \label{rem:no-degeneration-moments}
  Note that, for any natural number $k$, there exists $N\in\nats$ such
  that $S_{2k}(\mathscr{T})$, given by (\ref{eq:finite-moments}), is a
  positive definite matrix.
\end{remark}

\section{Spectral analysis of infinite symmetric band matrices}
\label{sec:general-case}

In this section, we construct a matrix valued function for each
element of $\mathcal{M}(n,\infty)$ having the properties of a spectral
function. To this end, we give defining criteria for a measure to be a
spectral function of a \emph{matrix} in the class
$\mathcal{M}(n,\infty)$. By our definition, any spectral function
$\sigma$ of $\mathcal{A}$ in $\mathcal{M}(n,\infty)$ is the spectral
function of some self-adjoint extension of the minimal closed operator
generated by $\mathcal{A}$ (see \cite[Sec.\,47]{MR1255973}) so that
this self-adjoint operator is transformed by a unitary isometric map,
which can be regarded as a Fourier transform, into the operator of
multiplication by the independent variable defined on its maximal
domain in some space $L_2(\reals,\sigma)$ (for the definition of this
space, see \cite[Sec.\,72]{MR1255973}). It is worth remarking that
not all the spectral functions of a matrix in $\mathcal{M}(n,\infty)$
correspond to a self-adjoint extension $A_0$ of the minimal closed
operator generated by $A$ such that $A_0\subset A^*$ (see
Remark~\ref{rem:s-a-extensions-measures}).

The results of this section and the next one provides a complete
description of all possible spectral functions that can be associated
with some element of $\mathcal{M}(n,\infty)$ by our criteria.


\begin{definition}
  \label{def:spectral-measure-gen-case}
  A nondecreasing $n\times n$
  matrix-valued function $\sigma$
  with finite moments, such that $\int_\reals d\sigma$
  is invertible, is called a spectral function of a matrix
  $\mathcal{A}$
  in $\mathcal{M}(n,\infty)$
  if and only if there exist $\mathscr{T}$
  satisfying \ref{item:prop-T}) and \ref{item:prop-T-2}) (see
  \eqref{eq:initial-condition}) such that
  $\{\pb{p}_{k}\}_{k=1}^\infty$
  is an orthonormal sequence in $L_2(\reals,\sigma)$
  and, for each $j\in\{1,\dots,j_0\}$,
  $\pb{q}_j$
  is in the equivalence class of zero in $L_2(\reals,\sigma)$.
\end{definition}

Note that all the vector polynomials are in $L_2(\reals,\sigma)$ when
$\sigma$ is a spectral function of a matrix in
$\mathcal{M}(n,\infty)$. Moreover the polynomials are dense in
$L_2(\reals,\sigma)$ when the orthonormal system
$\{\pb{p}_k\}_{k=1}^\infty$ turns out to be complete.

On the basis of the definition above, one can construct an isometric
map between the original space $\cH$ and the subspace being the
closure of the polynomials in $L_2(\reals,\sigma)$. This isometric
map, which will be denoted by $U$, is realized by associating the
orthonormal basis $\{\delta_k\}_{k=1}^\infty$ with the orthonormal
system $\{\pb{p}_k\}_{k=1}^\infty$, i.\,e., $U\delta_k=\pb{p}_k$ for
all $k\in\nats$. Furthermore, under this map, the operator $A$ is
transformed into some restriction of the operator of multiplication by
the independent variable. Indeed, if
$\varphi=\sum_{k=1}^\infty\varphi_k\delta_k$ is an element of the
domain of $A$, then $f=\sum_{k=1}^\infty\varphi_k\pb{p}_k$ is in the
domain of the operator of multiplication by the independent variable
and
\begin{equation*}
  UAU^{-1}f(t)=tf(t)\,.
\end{equation*}
\begin{lemma}
  \label{lem:symmetric-case}
  Let $\mathcal{A}$ be an element of $\mathcal{M}(n,\infty)$ and
  $\sigma_N^{\mathscr{T}}$ be the matrix valued spectral function of
  the corresponding oparator $A_N$ for a fixed matrix $\mathscr{T}$
  satisfying \ref{item:prop-T}) and \ref{item:prop-T-2}). Then, there
  exist a subsequence $\{\sigma_{N_i}^{\mathscr{T}}\}_{i=1}^{\infty}$
  converging pointwise to a matrix valued function $\sigma^{\mathscr{T}}$.
\end{lemma}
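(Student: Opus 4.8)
The plan is to deduce the claim from Helly's selection theorem, the hypotheses of which require a bound, \emph{uniform in} $N$, on the total variation of the matrix-valued functions $\sigma_N^{\mathscr{T}}$. First I would record the structural facts visible from (\ref{eq:spec-func-form}): each $\sigma_N^{\mathscr{T}}$ is a left-continuous, nondecreasing Hermitian matrix-valued step function (in the Loewner order) with $\sigma_N^{\mathscr{T}}(t)=0$ for $t$ below $\min\spec\widetilde{A}_N$, and its total mass is obtained by letting $t\to+\infty$. Solving (\ref{eq:finite-measure-finite-first-moment}) for that total mass yields
\[
  \int_\reals d\sigma_N^{\mathscr{T}}=(\mathscr{T}^*)^{-1}\mathscr{T}^{-1}=(\mathscr{T}\mathscr{T}^*)^{-1}=:C\,,
\]
a fixed positive definite matrix \emph{independent of} $N$. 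In particular each diagonal entry $(\sigma_N^{\mathscr{T}})_{ii}$ is a scalar nondecreasing function with values in $[0,C_{ii}]$, bounded uniformly in $N$.

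Next I would control the off-diagonal entries. By the remark following (\ref{eq:spec-func-form}), every jump of $\sigma_N^{\mathscr{T}}$ is the outer product of a vector with its own conjugate, hence a positive semidefinite matrix; so for any single jump $\Delta$ one has $\abs{\Delta_{ij}}\le\sqrt{\Delta_{ii}\Delta_{jj}}\le\tfrac12(\Delta_{ii}+\Delta_{jj})$. Summing this estimate over all jumps, the total variation of the entry $(\sigma_N^{\mathscr{T}})_{ij}$ is at most $\tfrac12(C_{ii}+C_{jj})$, again uniformly in $N$. Consequently the real and imaginary parts of every entry of $\sigma_N^{\mathscr{T}}$ are real functions of uniformly bounded total variation, so each can be written as a difference of two nondecreasing functions that are bounded uniformly in $N$.

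Finally I would apply the scalar Helly selection theorem to the finitely many (namely $n^2$) components. Extracting successively over these components and passing to a diagonal subsequence produces a single sequence $\{N_i\}_{i=1}^\infty$ along which every entry of $\sigma_{N_i}^{\mathscr{T}}$ converges pointwise on $\reals$; the entrywise limits assemble into a matrix-valued function $\sigma^{\mathscr{T}}$, which inherits monotonicity and the normalization $\int_\reals d\sigma^{\mathscr{T}}\le C$ from the approximants. The main obstacle is precisely the passage from the scalar statement of Helly's theorem to the Hermitian, complex-valued setting: this is where the uniform bounded-variation estimate on the off-diagonal entries (resting on the positive semidefiniteness of the jumps) is indispensable, and it is the only step beyond the bookkeeping of the diagonal extraction.
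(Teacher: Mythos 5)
Your proof is correct and takes essentially the same route as the paper: both derive the uniform control from (\ref{eq:finite-measure-finite-first-moment}) (giving the fixed total mass $(\mathscr{T}\mathscr{T}^*)^{-1}$) and then apply Helly's selection theorem with a diagonal extraction, the paper by citing the operator-valued version \cite[Thm.\,4.3]{MR0322542} (which likewise reduces to the scalar theorem via bilinear forms, boundedness, and a diagonal process, with weak and uniform convergence coinciding in finite dimension). The only difference is that you make the reduction self-contained, replacing the cited theorem with explicit entrywise bounded-variation estimates obtained from the positive semidefiniteness of the jumps.
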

\begin{proof}
  In view of (\ref{eq:finite-measure-finite-first-moment}),
  the hypothesis of Helly's first theorem for bounded operators
  \cite[Thm.\,4.3]{MR0322542} is satisfied in any bounded interval
  (cf. \cite[Sec.\,8.4]{MR0067952} for the scalar case), therefore the
  statement follows. The generalization of Helly's first theorem given in
  \cite[Thm.\,4.3]{MR0322542} is based on applying the scalar theorem
  to the bilinear form of the sequence of operators (for fixed
  elements in the Hilbert space) in a diagonal process fashion using
  the boundedness of the operators and the seperability of the
  space. This yields the assertion in the sense of weak convergence.
  Using the fact that uniform and weak convergence are
  equivalent in finite dimentional spaces, one obtains the assertion.
\end{proof}


The following proposition is obtained by applying
\cite[Thm.\,4.4]{MR0322542} to the result above and taking into
account that the matrices $\sigma_N^{\mathscr{T}}$ are finite dimensional.

 \begin{proposition}
   \label{prop:2nd-helly}
   (Helly's generalized second theorem) Suppose that the function
   $f(t)$ is continuous in the real interval $[a,b]$, where $a$ and
   $b$ are points of continuity of $\sigma^{\mathscr{T}}(t)$ (see
   Lemma~\ref{lem:symmetric-case}). Then there exist a subsequence
  $\{\sigma_{N_i}^{\mathscr{T}}\}_{i=1}^{\infty}$ such that
   \begin{equation*}
     \int_{a}^bf(t)\,d\sigma^{\mathscr{T}}_{N_i}(t)\xrightarrow[i\rightarrow\infty]{}
\int_{a}^bf(t)\,d\sigma^{\mathscr{T}}(t)\,.
   \end{equation*}
 \end{proposition}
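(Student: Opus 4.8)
The plan is to reduce the claim to the operator-valued form of Helly's second theorem, \cite[Thm.\,4.4]{MR0322542}, exactly as announced before the statement. Two hypotheses have to be verified: pointwise convergence of a subsequence of the $\sigma_N^{\mathscr{T}}$, and a uniform bound on their variation over $[a,b]$. The first is already supplied by Lemma~\ref{lem:symmetric-case}, which produces a subsequence $\{\sigma_{N_i}^{\mathscr{T}}\}_{i=1}^\infty$ converging pointwise to $\sigma^{\mathscr{T}}$; this is precisely the subsequence named in the assertion, and no further passage to a subsequence is needed. The second follows from \eqref{eq:finite-measure-finite-first-moment}: solving that identity for the total mass gives $\int_\reals d\sigma_N^{\mathscr{T}}=(\mathscr{T}\mathscr{T}^*)^{-1}$, a positive-definite matrix that does \emph{not} depend on $N$. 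Since each $\sigma_N^{\mathscr{T}}$ is nondecreasing and vanishes below the smallest $x_l$ (see \eqref{eq:spec-func-form}), for $a\le t\le b$ one has $0\le \sigma_N^{\mathscr{T}}(t)-\sigma_N^{\mathscr{T}}(a)\le (\mathscr{T}\mathscr{T}^*)^{-1}$ in the order of Hermitian matrices, so $\norm{\sigma_N^{\mathscr{T}}(t)-\sigma_N^{\mathscr{T}}(a)}$ is bounded uniformly in both $t$ and $N$.

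With these two ingredients in hand I would invoke \cite[Thm.\,4.4]{MR0322542} directly. Because the $\sigma_N^{\mathscr{T}}$ take values in the $n\times n$ matrices, the operators occurring in that theorem act on a finite-dimensional space, so weak, strong, and uniform convergence coincide (this was already exploited in the proof of Lemma~\ref{lem:symmetric-case}); consequently the conclusion of the cited theorem is exactly the norm convergence
\[
  \int_a^b f(t)\,d\sigma_{N_i}^{\mathscr{T}}(t)\xrightarrow[i\to\infty]{}\int_a^b f(t)\,d\sigma^{\mathscr{T}}(t)\,.
\]

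If instead one wants a self-contained argument, the standard route is to fix $\varepsilon>0$, use uniform continuity of $f$ on the compact interval $[a,b]$ together with the density of the continuity points of $\sigma^{\mathscr{T}}$ to choose a partition $a=t_0<\dots<t_m=b$ into continuity points on which the oscillation of $f$ is below $\varepsilon$, and then compare $\int_a^b f\,d\sigma_{N_i}^{\mathscr{T}}$ with the step-function sum $\sum_k f(t_k)\bigl[\sigma_{N_i}^{\mathscr{T}}(t_{k+1})-\sigma_{N_i}^{\mathscr{T}}(t_k)\bigr]$. These sums converge entrywise to the corresponding sum for $\sigma^{\mathscr{T}}$ by the pointwise convergence at the nodes $t_k$, while the two approximation errors (for the subsequence and for the limit) are each dominated by $\varepsilon$ times the uniform variation bound of the first paragraph; letting $\varepsilon\to0$ gives the result. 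The main obstacle—and the one place where the hypothesis on $a$ and $b$ is essential—is ensuring that the endpoints and all partition nodes are continuity points of $\sigma^{\mathscr{T}}$, so that no jump mass is lost or double-counted in the comparison; this is exactly what the stated assumption on $a,b$ secures, the remaining nodes being available by density of the continuity points.
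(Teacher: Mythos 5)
Your proposal is correct and follows essentially the same route as the paper, whose entire proof is a one-line application of \cite[Thm.\,4.4]{MR0322542} to the pointwise-convergent subsequence from Lemma~\ref{lem:symmetric-case}, together with the observation that finite dimensionality makes weak and uniform convergence coincide. Your extra verifications---extracting the $N$-independent bound $\int_\reals d\sigma_N^{\mathscr{T}}=(\mathscr{T}\mathscr{T}^*)^{-1}$ from \eqref{eq:finite-measure-finite-first-moment} to get uniform variation, and the self-contained Riemann--Stieltjes partition argument through continuity points---are sound but go beyond what the paper records.
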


With these results at hand we prove the following assertions.
\begin{lemma}
  \label{lem:subsequence-converges-integral}
There exist a subsequence
  $\{\sigma_{N_i}^{\mathscr{T}}\}_{i=1}^{\infty}$ such that
\begin{equation*}
    \int_{\reals}t^k\,d\sigma^{\mathscr{T}}_{N_i}=\int_{\reals}t^k \,d\sigma^{\mathscr{T}}\
  \end{equation*}
 for any nonnegative integer $k\le \left\lceil
  \frac{2h(\pb{p}_{N_i})}{n}\right\rceil$ (see \eqref{eq:finite-moments}).
\end{lemma}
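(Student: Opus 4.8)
The plan is to combine the moment stabilization recorded in \eqref{eq:finite-moments} with the generalized Helly theorems of Lemma~\ref{lem:symmetric-case} and Proposition~\ref{prop:2nd-helly}; the only real work is the passage from bounded intervals to the whole real line. I take the subsequence $\{\sigma_{N_i}^{\mathscr{T}}\}_{i=1}^\infty$ produced by Lemma~\ref{lem:symmetric-case}, which converges pointwise to $\sigma^{\mathscr{T}}$, and I recall that $h(\pb{p}_N)\to\infty$ as $N\to\infty$ by \eqref{eq:height-p-k}. By \eqref{eq:finite-moments} the value $\int_{\reals}t^k\,d\sigma_{N}^{\mathscr{T}}$ is independent of $N$ whenever $k\le\lceil 2h(\pb{p}_{N})/n\rceil$; I denote this common value by $S_k$, so that $\int_{\reals}t^k\,d\sigma_{N_i}^{\mathscr{T}}=S_k$ for every $i$ with $k\le\lceil 2h(\pb{p}_{N_i})/n\rceil$. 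The assertion thus reduces to proving $\int_{\reals}t^k\,d\sigma^{\mathscr{T}}=S_k$ for each fixed nonnegative integer $k$.

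The key ingredient is a tightness estimate obtained from a higher even moment. Fix $k$ and pick an even integer $2m>k$. For all large $i$ one has $2m\le\lceil 2h(\pb{p}_{N_i})/n\rceil$, so $\int_{\reals}t^{2m}\,d\sigma_{N_i}^{\mathscr{T}}=S_{2m}$ is bounded uniformly in $i$. The diagonal entries of $\sigma_{N_i}^{\mathscr{T}}$ are genuine positive measures and, by the Cauchy--Schwarz inequality for the positive matrix measure, the off-diagonal entries are dominated by them; hence a Chebyshev-type bound gives, entrywise,
\[
  \int_{\abs{t}>R}\abs{t}^{k}\,d\sigma_{N_i}^{\mathscr{T}}
  \le R^{\,k-2m}\int_{\reals}\abs{t}^{2m}\,d\sigma_{N_i}^{\mathscr{T}}
  = R^{\,k-2m}S_{2m}\xrightarrow[R\to\infty]{}0,
\]
uniformly in the large indices $i$. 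Taking $\pm R$ to be points of continuity of $\sigma^{\mathscr{T}}$ and applying Proposition~\ref{prop:2nd-helly} to the continuous function $t^{2m}$ on $[-R,R]$ yields $\int_{-R}^{R}t^{2m}\,d\sigma^{\mathscr{T}}\le S_{2m}$; letting $R\to\infty$ gives $\int_{\reals}t^{2m}\,d\sigma^{\mathscr{T}}\le S_{2m}<\infty$, so the same tail estimate holds for the limit function $\sigma^{\mathscr{T}}$.

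With both tails under uniform control I write
\[
  \int_{\reals}t^{k}\,d\sigma^{\mathscr{T}}
  =\int_{-R}^{R}t^{k}\,d\sigma^{\mathscr{T}}
  +\int_{\abs{t}>R}t^{k}\,d\sigma^{\mathscr{T}},
\]
replace the first term on the right by $\lim_{i\to\infty}\int_{-R}^{R}t^{k}\,d\sigma_{N_i}^{\mathscr{T}}$ via Proposition~\ref{prop:2nd-helly}, and use $\int_{-R}^{R}t^{k}\,d\sigma_{N_i}^{\mathscr{T}}=S_k-\int_{\abs{t}>R}t^{k}\,d\sigma_{N_i}^{\mathscr{T}}$. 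Letting first $i\to\infty$ and then $R\to\infty$, all three tail contributions vanish by the uniform estimate, leaving $\int_{\reals}t^{k}\,d\sigma^{\mathscr{T}}=S_k$. Since $\int_{\reals}t^{k}\,d\sigma_{N_i}^{\mathscr{T}}=S_k$ whenever $k\le\lceil 2h(\pb{p}_{N_i})/n\rceil$, this is precisely the claimed identity.

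The main obstacle is this interchange of limits over an unbounded domain: Proposition~\ref{prop:2nd-helly} only furnishes convergence on bounded intervals, so everything depends on a uniform-in-$i$ bound for a higher even moment, which is exactly what the stabilization \eqref{eq:finite-moments} provides. The matrix-valued character of the integrals adds nothing essential, since the off-diagonal entries are controlled by the diagonal ones through Cauchy--Schwarz, allowing the scalar tightness argument to be applied entry by entry.
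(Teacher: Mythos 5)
Your proposal is correct and follows essentially the same route as the paper: pointwise Helly convergence from Lemma~\ref{lem:symmetric-case}, Proposition~\ref{prop:2nd-helly} on bounded intervals $[-R,R]$ with endpoints at continuity points of $\sigma^{\mathscr{T}}$, and a uniform tail estimate obtained by dividing by a higher power of $t$ and invoking the stabilized moments \eqref{eq:finite-moments}. Your only deviations are refinements of the same argument: you take the dominating exponent even ($2m$ rather than the paper's generic $r>k$, which sidesteps sign issues on the negative tail) and you make explicit the Cauchy--Schwarz control of off-diagonal entries by diagonal ones, both of which the paper leaves implicit.
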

 \begin{proof}
  If one assumes that
  $-a<0$ and $b>0$ are two points of continuity of
  $\sigma^{\mathscr{T}}(t)$, then, it
  follows from Proposition~\ref{prop:2nd-helly} that
\begin{equation*}
   \int_{-a}^bt^k\,d\sigma^{\mathscr{T}}=\lim_{i\rightarrow\infty} 
\int_{-a}^bt^k\,d\sigma^{\mathscr{T}}_{N_i}\,.
\end{equation*}
On the other hand, given a number $r$ such that $r>k$, then for
$r<\left\lceil \frac{2h(\pb{p}_{N_i})}{n}\right\rceil$
\begin{align*}
  \norm{\int_{-\infty}^{\infty}-
    \int_{-a}^{b}t^k\,d\sigma^{\mathscr{T}}_{N_i}}=&\norm{\int_{-\infty}^{-a}+
    \int_{b}^{\infty}t^k\,d\sigma^{\mathscr{T}}_{N_i}}=
  \norm{\int_{-\infty}^{-a}+ \int_{b}^{\infty}\frac{t^r}{t^{r-k}}\,
      d\sigma^{\mathscr{T}}_{N_i}}\\
  \leq&\frac{1}{c^{r-k}}\norm{\int_{-\infty}^{-a}+
    \int_{b}^{\infty}t^r\,d\sigma^{\mathscr{T}}_{N_i}}\leq \frac{\norm{S_r(\mathscr{T})}}{c^{r-k}}\,,
\end{align*}
where $c=\min\{a,b\}$ and $S_r(\mathscr{T})=\int_{\reals}t^r\,d\sigma^{\mathscr{T}}_{N_i}$ (the
integral is convergent due to
Proposition~\ref{prop:ortonormal-p-L2-finite}). Thus,
\begin{equation*}
   \norm{S_k(\mathscr{T})-\int_{-a}^bt^k\,d\sigma^{\mathscr{T}}}\leq \frac{\norm{S_r(\mathscr{T})}}{c^{r-k}}\,.
\end{equation*}
This yields the assertion, when one makes $a$ and $b$
tend to $\infty$ in such a way that $-a$ and $b$ are all the time
points of continuity of $\sigma^{\mathscr{T}}(t)$.
\end{proof}
From the previous lemma, one directly obtains the following result.
\begin{corollary}
  The spectral function $\sigma^{\mathscr{T}}$ to which a subsequence
  of $\{\sigma_N^{\mathscr{T}}\}_{N=2}^\infty$ converges according to
  Lemma~\ref{lem:symmetric-case} is a solution of a certain
  matrix moment problem given by $\{S_k(\mathscr{T})\}_{k=0}^\infty$
  (see Lemma~\ref{lem:subsequence-converges-integral} ).
\end{corollary}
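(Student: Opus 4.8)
The goal is to show that the limiting function $\sigma^{\mathscr{T}}$ produced in Lemma~\ref{lem:symmetric-case} reproduces every prescribed moment, that is, that $\int_{\reals}t^k\,d\sigma^{\mathscr{T}}=S_k(\mathscr{T})$ for each nonnegative integer $k$, where $S_k(\mathscr{T})$ is the matrix moment of \eqref{eq:finite-moments}. The plan is to fix an arbitrary $k$ and pass to the limit along the subsequence $\{\sigma_{N_i}^{\mathscr{T}}\}_{i=1}^\infty$ furnished by Lemma~\ref{lem:subsequence-converges-integral}, exploiting that the finite moments stabilize once $N$ is large enough.

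First I would recall the stability property recorded below \eqref{eq:finite-moments}: the matrix $S_k(\mathscr{T})=\int_{\reals}t^k\,d\sigma_N^{\mathscr{T}}$ is independent of $N$ as long as $k\le\lceil 2h(\pb{p}_N)/n\rceil$. Next, from the height recursion \eqref{eq:height-p-k} the numbers $h(\pb{p}_N)$ grow without bound, so $\lceil 2h(\pb{p}_{N_i})/n\rceil\to\infty$ as $i\to\infty$; consequently, for the fixed $k$ there is an index $i_0$ with $k\le\lceil 2h(\pb{p}_{N_i})/n\rceil$ for all $i\ge i_0$. For such $i$ two identities hold at once: on the one hand $\int_{\reals}t^k\,d\sigma_{N_i}^{\mathscr{T}}=S_k(\mathscr{T})$ by the definition and stability of the finite moments, and on the other hand $\int_{\reals}t^k\,d\sigma_{N_i}^{\mathscr{T}}=\int_{\reals}t^k\,d\sigma^{\mathscr{T}}$ by Lemma~\ref{lem:subsequence-converges-integral}. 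Combining the two yields $\int_{\reals}t^k\,d\sigma^{\mathscr{T}}=S_k(\mathscr{T})$, and since $k$ was arbitrary this shows that $\sigma^{\mathscr{T}}$ is a solution of the moment problem determined by $\{S_k(\mathscr{T})\}_{k=0}^\infty$.

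Because both ingredients—the $N$-independence of the finite moments and the convergence of the truncated integrals—are already available, there is little genuine difficulty here, and the corollary is in essence a bookkeeping step; the phrase ``one directly obtains'' in the statement is accurate. The single point deserving a word of care is the claim that $h(\pb{p}_N)\to\infty$, which I would justify directly from \eqref{eq:height-p-k}, or alternatively from Lemma~\ref{lem:cover-all-heights}: the heights of the $\pb{q}_j$ occupy only finitely many residue classes modulo $n$, which forces the sequence $\{h(\pb{p}_k)\}$ to be unbounded. With that observation the exchange of quantifiers—first fixing $k$, then choosing $i$ large—goes through cleanly. Finally, it is worth noting that the resulting moment problem is non-degenerate in the sense of Remark~\ref{rem:no-degeneration-moments}, whose positive-definiteness of the even moments guarantees that $\{S_k(\mathscr{T})\}_{k=0}^\infty$ is a bona fide matrix moment sequence.
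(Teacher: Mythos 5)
Your proposal is correct and is precisely the argument the paper intends: the paper offers no explicit proof beyond the phrase ``one directly obtains,'' and the intended route is exactly your combination of the $N$-independence of the moments $S_k(\mathscr{T})$ recorded below \eqref{eq:finite-moments} with the equality of truncated moments from Lemma~\ref{lem:subsequence-converges-integral}, fixing $k$ and taking $i$ large. Your extra care in verifying $h(\pb{p}_{N_i})\to\infty$ via \eqref{eq:height-p-k} (or Lemma~\ref{lem:cover-all-heights}) fills in a detail the paper leaves tacit, and nothing in your argument deviates from the paper's approach.
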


\begin{lemma}
  \label{lem:has-a-spectral-function}
  Any $\mathcal{A}\in\mathcal{M}(n,\infty)$ has at least one spectral
  function (in the sense of
  Definition~\ref{def:spectral-measure-gen-case}).
\end{lemma}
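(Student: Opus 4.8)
The plan is to take $\sigma^{\mathscr{T}}$ to be the pointwise limit produced by Lemma~\ref{lem:symmetric-case} and verify that it meets the three requirements of Definition~\ref{def:spectral-measure-gen-case}: it is a nondecreasing $n\times n$ matrix-valued function with finite moments, its total mass $\int_\reals d\sigma^{\mathscr{T}}$ is invertible, the sequence $\{\pb{p}_k\}_{k=1}^\infty$ is orthonormal in $L_2(\reals,\sigma^{\mathscr{T}})$, and each $\pb{q}_j$ lies in the equivalence class of zero. First I would note that monotonicity and the matrix-valued structure pass to the pointwise limit automatically, since each $\sigma_N^{\mathscr{T}}$ is nondecreasing. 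Invertibility of $\int_\reals d\sigma^{\mathscr{T}}$ is the content of (\ref{eq:finite-measure-finite-first-moment}): passing to the limit in $\mathscr{T}^*\int d\sigma_N^{\mathscr{T}}\mathscr{T}=I$ via Lemma~\ref{lem:subsequence-converges-integral} (the $k=0$ moment) gives $\mathscr{T}^*\bigl(\int_\reals d\sigma^{\mathscr{T}}\bigr)\mathscr{T}=I$, whence $\int_\reals d\sigma^{\mathscr{T}}$ is invertible because $\mathscr{T}$ is nonsingular by property~\ref{item:prop-T-2}). Finiteness of all moments follows similarly once the orthonormality is in place.

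The heart of the argument is the orthonormality of $\{\pb{p}_k\}$ and the vanishing of the $\pb{q}_j$. The key observation is that both $\inner{\pb{p}_j}{\pb{p}_k}$ and $\inner{\pb{q}_j}{\pb{q}_j}$ are, by (\ref{eq:remark-basis-canonical-moments}), finite linear combinations of matrix moments $\int t^r\,d\sigma^{\mathscr{T}}$ with $r$ bounded by an explicit multiple of the degrees of the polynomials involved. The plan is therefore to fix $j,k$, choose the index $N_i$ in the convergent subsequence large enough that $\lceil 2h(\pb{p}_{N_i})/n\rceil$ exceeds the largest exponent $r$ appearing in the relevant inner products, and then invoke Lemma~\ref{lem:subsequence-converges-integral} to equate those moments of $\sigma^{\mathscr{T}}$ with the corresponding moments of $\sigma_{N_i}^{\mathscr{T}}$. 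Since the inner products are fixed finite combinations of these moments, one gets
\begin{equation*}
  \inner{\pb{p}_j}{\pb{p}_k}_{L_2(\reals,\sigma^{\mathscr{T}})}
  =\inner{\pb{p}_j}{\pb{p}_k}_{L_2(\reals,\sigma_{N_i}^{\mathscr{T}})}
  =\delta_{jk}
\end{equation*}
by Proposition~\ref{prop:ortonormal-p-L2-finite}, and identically
\begin{equation*}
  \inner{\pb{q}_j}{\pb{q}_j}_{L_2(\reals,\sigma^{\mathscr{T}})}
  =\inner{\pb{q}_j}{\pb{q}_j}_{L_2(\reals,\sigma_{N_i}^{\mathscr{T}})}=0
\end{equation*}
by (\ref{eq:q-are-zeros}). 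Because $j,k$ are arbitrary and only finitely many moments enter for each fixed pair, this establishes both properties for the limit function.

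The main obstacle I anticipate is purely bookkeeping: one must confirm that for \emph{every} fixed pair $(j,k)$ there is a genuinely large enough $N_i$ in the \emph{same} subsequence, i.e.\ that the moment-matching threshold $\lceil 2h(\pb{p}_{N_i})/n\rceil$ grows without bound along the subsequence. This is guaranteed because $h(\pb{p}_N)\to\infty$ as $N\to\infty$ by (\ref{eq:height-p-k}), so the admissible moment range in Lemma~\ref{lem:subsequence-converges-integral} eventually covers any prescribed finite set of exponents; a single subsequence therefore suffices for all pairs simultaneously by the finiteness of the moments involved in each computation. Once orthonormality and the zero-norm conditions hold, $\sigma^{\mathscr{T}}$ satisfies Definition~\ref{def:spectral-measure-gen-case} with the very matrix $\mathscr{T}$ that was fixed at the outset, which proves that $\mathcal{A}$ has at least one spectral function.
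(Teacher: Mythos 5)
Your proposal is correct and takes essentially the same route as the paper's own proof: it takes the limit function $\sigma^{\mathscr{T}}$ from Lemma~\ref{lem:symmetric-case} and transfers the orthonormality of $\{\pb{p}_k\}_{k=1}^\infty$ (Proposition~\ref{prop:ortonormal-p-L2-finite}) and the vanishing of the $\pb{q}_j$ (equation~(\ref{eq:q-are-zeros})) to $L_2(\reals,\sigma^{\mathscr{T}})$ via the moment-matching of Lemma~\ref{lem:subsequence-converges-integral}, exactly as the paper does. Your additional explicit checks (monotonicity of the limit, finiteness of the moments, invertibility of $\int_\reals d\sigma^{\mathscr{T}}$ via the limit of (\ref{eq:finite-measure-finite-first-moment}), and the unboundedness of the threshold $\lceil 2h(\pb{p}_{N_i})/n\rceil$) merely spell out bookkeeping the paper leaves implicit.
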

\begin{proof}
  It follows directly from
  Proposition~\ref{prop:ortonormal-p-L2-finite} and
  Lemma~\ref{lem:subsequence-converges-integral} that the vector
  polynomials $\{\boldsymbol{p}_k(z)\}_{k=1}^{\infty}$, defined by
  \eqref{eq:canonical-basis-CN} and
  (\ref{eq:recurrence-vector-plynomials}), satisfy
\begin{equation*}
  \inner{\boldsymbol{p}_j}{\boldsymbol{p}_k}
_{L_2(\mathbb{R},\sigma^{\mathscr{T}})}
=\delta_{jk}
\end{equation*}
for $j,k\in\nats$, where $\sigma^{\mathscr{T}}$ is the function given
by Lemma~\ref{lem:symmetric-case}.  Now, fix
$j\in\{1,\dots,j_0\}$ and consider $N$ according to Convention~1. Thus,
  \begin{equation*}
    0=\norm{\pb{q}_j}_{L_2(\reals,\sigma_N^{\mathscr{T}})}^2=
    \int_\reals\inner{\pb{q}_j}{d\sigma_N^{\mathscr{T}}\pb{q}_j}\,.
  \end{equation*}
By Lemma~\ref{lem:subsequence-converges-integral} there is a
subsequence  $\{\sigma_{N_i}^{\mathscr{T}}\}_{i=1}^{\infty}$ such
that, begining from some $i\in\nats$,
\begin{equation*}
 0= \int_\reals\inner{\pb{q}_j}{d\sigma_{N_i}^{\mathscr{T}}\pb{q}_j}=
\int_\reals\inner{\pb{q}_j}{d\sigma^{\mathscr{T}}\pb{q}_j}=
\norm{\pb{q}_j}_{L_2(\reals,\sigma^{\mathscr{T}})}^2
\end{equation*}
\end{proof}

\begin{corollary}
  \label{cor:measure-supp}
  The spectral function $\sigma^{\mathscr{T}}$ given in
  Lemma~\ref{lem:symmetric-case} has an infinite number of
  growth points.
\end{corollary}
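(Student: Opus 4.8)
The plan is to derive the result by contradiction from the orthonormality already secured in Lemma~\ref{lem:has-a-spectral-function}. The heart of the matter is the elementary principle that an infinite orthonormal system cannot be accommodated in a finite-dimensional inner product space, combined with the observation that a matrix measure with only finitely many growth points produces exactly such a finite-dimensional $L_2$ space.

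First I would fix the terminology: a point $t_0\in\reals$ is a growth point of the nondecreasing matrix function $\sigma^{\mathscr{T}}$ when $\sigma^{\mathscr{T}}(t_0+\epsilon)-\sigma^{\mathscr{T}}(t_0-\epsilon)\neq 0$ for every $\epsilon>0$; the set of growth points is then the support of the matrix measure $d\sigma^{\mathscr{T}}$. Assuming, for contradiction, that there are only finitely many growth points $t_1,\dots,t_m$, the measure is concentrated on this finite set and is therefore purely atomic, its mass at each $t_l$ being a positive semidefinite jump matrix $J_l:=\sigma^{\mathscr{T}}(t_l^+)-\sigma^{\mathscr{T}}(t_l)$ of some rank $r_l\le n$. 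Mimicking the finite-dimensional picture recorded in the properties listed below \eqref{eq:spec-func-form}, I would then identify $L_2(\reals,\sigma^{\mathscr{T}})$ with $\bigoplus_{l=1}^m\ran J_l$, so that $\dim L_2(\reals,\sigma^{\mathscr{T}})=\sum_{l=1}^m r_l\le mn<\infty$.

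The contradiction is then immediate: Lemma~\ref{lem:has-a-spectral-function} furnishes the infinite orthonormal sequence $\{\pb{p}_k\}_{k=1}^\infty$ in $L_2(\reals,\sigma^{\mathscr{T}})$, whose members are infinitely many mutually orthogonal unit vectors, forcing that space to be infinite-dimensional and clashing with the bound just obtained. Hence the set of growth points cannot be finite. The only step requiring genuine care---and thus the main obstacle---is justifying the dimension count $\dim L_2(\reals,\sigma^{\mathscr{T}})=\sum_l r_l$ when some of the jump matrices $J_l$ are degenerate, i.e.\ have rank strictly below $n$: one must check that the value of a vector function at an atom $t_l$ is felt only modulo $\ker J_l$ and that the inner product splits as a finite orthogonal sum over the atoms. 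This is, however, precisely the matrix analogue of the rank bookkeeping already present for finite $N$ (each jump of $\sigma_N^{\mathscr{T}}$ has rank at most $n$, and the ranks sum to $N$), so no essentially new difficulty intervenes.
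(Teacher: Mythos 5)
Your proof is correct and follows essentially the same route as the paper: the paper's own proof is a one-line contradiction observing that finitely many growth points would make $L_2(\mathbb{R},\sigma^{\mathscr{T}})$ finite dimensional, which is incompatible with the infinite orthonormal sequence $\{\boldsymbol{p}_k\}_{k=1}^\infty$ supplied by Lemma~\ref{lem:has-a-spectral-function}. Your additional rank bookkeeping ($\dim L_2 = \sum_l r_l \le mn$) merely makes explicit the dimension count the paper leaves implicit, citing \cite[Sec.\,72]{MR1255973}.
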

\begin{proof}
  If $\sigma^{\mathscr{T}}$ had a finite number of
  growth points, then
  $L_2(\reals,\sigma^{\mathscr{T}})$ would be a finite dimensional space
  and correspondingly the sequence of vector polynomials $\{\pb{p}_k\}_k$
  would be finite.
\end{proof}

\begin{remark}
  \label{rem:s-a-extensions-measures}
  Let $\mathcal{A}$
  be in $\mathcal{M}(n,\infty)$
  and $\sigma$
  be the spectral function of $\mathcal{A}$
  according to Definition~\ref{def:spectral-measure-gen-case}. If the
  moment problem associated with $\sigma$
  turns out to be \emph{determinate}, then there is just one solution
  of the moment problem and this solution, i.\,e. $\sigma$,
  corresponds to a spectral function of the operator $A$
  which turns out to be self-adjoint \cite[Sec.\,2]{MR1882637}. Note
  that one could associate another spectral function to $A$
  by considering a different matrix $\mathscr{T}$
  (see \eqref{eq:initial-condition}), but the moment problem for it
  would be different. If the moment problem is \emph{indeterminate},
  then there are various solutions of the moment problem and each
  solution $\widehat{\sigma}$
  is a spectral function of $\mathcal{A}$
  since the sequence of polynomials $\{\pb{p}_k\}_{k=1}^\infty$
  is orthonormal in $L_2(\reals,\widehat{\sigma})$
  for any solution $\widehat{\sigma}$.
  In this case, $\widehat{\sigma}$
  not necessarily corresponds to the spectral function of canonical
  self-adjoint extensions of the operator $A$
  (by a canonical self-adjoint extension of the symmetric operator $A$
  we mean a self-adjoint restriction of $A^*$).
  Indeed, the solution $\widehat{\sigma}$
  is the spectral function of a canonical self-adjoint extension if
  and only if the polynomials are dense in
  $L_2(\reals,\widehat{\sigma})$.
  We expect that the spectral function $\sigma^{\mathscr{T}}$,
  to which a subsequence of $\{\sigma_N^{\mathscr{T}}\}_{N=2}^\infty$
  converges according to Lemma~\ref{lem:symmetric-case}, be such that
  the polynomials are dense in
  $L_2(\reals,\sigma^{\mathscr{T}})$.
  This matter, together with other questions on characterization of
  the functions $\sigma^{\mathscr{T}}$
  will be dealt with in a forthcoming paper.
\end{remark}

\begin{definition}
  \label{def:class-sigma-infinite}
  The set of all $n\times n$-matrix valued functions with an infinite
  number of growing points such that all the
  moments $\{S_k\}_{k=1}^\infty$ exists and $S_0$ is invertible is
  denoted by $\mathfrak{M}(n,\infty)$. Besides,
  $\mathfrak{M}_d(n,\infty)$ denotes the subset of
  $\mathfrak{M}(n,\infty)$ for which the sequence of matrix moments
  generates a determinate matrix moment problem.
\end{definition}

\begin{theorem}
  \label{thm:non-degenerate-case}
  Let $\mathcal{A}$ be in $\mathcal{M}(n,\infty)$ and $j_0$ be the number of
  degenerations of $\mathcal{A}$ (see the paragraph below
  Definition~\ref{def:matrices-degenerate}). For any spectral function
  $\sigma$ of $\mathcal{A}$, it holds true that:
  \begin{enumerate}[i)]
  \item \label{thm:non-case-degenarate} (Nondegenerate case) If
    $j_0=0$, i.\,e., the matrix $\mathcal{A}$ does not
    undergo degeneration, then there are no vector polynomials
    in the equivalence class of the zero of the space
    $L_2(\reals,\sigma)$, i.\,e.,
    \begin{equation*}
      \inner{\boldsymbol{r}(z)}{\boldsymbol{r}(z)}_{L_2(\reals,\sigma)}=0\,
      \Longleftrightarrow \boldsymbol{r}\equiv 0\,.
    \end{equation*}
  \item \label{thm:case-degenarate} (Degenerate case)
  If $j_0>0$, then all the polynomials $\pb{q}_1,\dots,\pb{q}_{j_0}$
  are  in the
    equivalence class of zero and any polynomial $\pb{r}(z)$ in this
    equivalence class can be written as
    \begin{equation}
      \label{thm:eq-case-degenarate}
      \pb{r}(z)=\sum_{j=1}^{j_0}R_j(z)\pb{q}_j(z)\,,
    \end{equation}
where $R_j(z)$ is a scalar polynomial.
  \end{enumerate}
\end{theorem}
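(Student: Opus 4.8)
The plan is to reduce both parts to the spanning statement of Corollary~\ref{cor:basis-vector} together with the two defining properties of a spectral function recorded in Definition~\ref{def:spectral-measure-gen-case}: the orthonormality of $\{\pb{p}_k\}_{k=1}^\infty$ in $L_2(\reals,\sigma)$ and the vanishing of the norms of $\pb{q}_1,\dots,\pb{q}_{j_0}$. For part \ref{thm:non-case-degenarate}), I would use that when $j_0=0$ the second family in \eqref{eq:basis-p-q} is empty, so Corollary~\ref{cor:basis-vector} expresses every vector polynomial as a \emph{finite} combination $\pb{r}=\sum_{k=1}^M c_k\pb{p}_k$. Orthonormality then gives $\inner{\pb{r}}{\pb{r}}_{L_2(\reals,\sigma)}=\sum_{k=1}^M\abs{c_k}^2$, which vanishes exactly when all $c_k=0$, that is, when $\pb{r}\equiv 0$; the reverse implication is trivial. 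This settles the nondegenerate case.

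For part \ref{thm:case-degenarate}), the membership of each $\pb{q}_j$ in the zero class is read off directly from Definition~\ref{def:spectral-measure-gen-case}, so the substantial point is the converse inclusion. The first step I would take is to observe that every $z^l\pb{q}_j$ again has zero norm: the nonnegative scalar measure $d\mu_j(t):=\inner{\pb{q}_j(t)}{d\sigma(t)\,\pb{q}_j(t)}$ has total mass $\norm{\pb{q}_j}_{L_2(\reals,\sigma)}^2=0$, hence is the zero measure, so $\int_\reals t^{2l}\,d\mu_j=0$ for every $l$ (these integrals being finite because $\sigma$ has finite moments). Since the zero-norm elements form a linear subspace, a consequence of the Cauchy--Schwarz inequality for the semi-inner product, it follows that any $\pb{v}=\sum_{j=1}^{j_0}R_j(z)\pb{q}_j(z)$ has zero norm.

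Then, given $\pb{r}$ with $\inner{\pb{r}}{\pb{r}}_{L_2(\reals,\sigma)}=0$, I would invoke Corollary~\ref{cor:basis-vector} to write $\pb{r}=\pb{u}+\pb{v}$ with $\pb{u}=\sum_k c_k\pb{p}_k$ and $\pb{v}=\sum_{j=1}^{j_0}R_j(z)\pb{q}_j(z)$, where $R_j(z)=\sum_l b_{j,l}z^l$. Cauchy--Schwarz gives $\inner{\pb{u}}{\pb{v}}=0$ since $\norm{\pb{v}}=0$, whence $0=\norm{\pb{r}}^2=\norm{\pb{u}}^2+\norm{\pb{v}}^2=\norm{\pb{u}}^2$. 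Orthonormality forces $\sum_k\abs{c_k}^2=0$, so all $c_k=0$ and $\pb{u}\equiv0$ as a vector polynomial, leaving the required identity $\pb{r}=\sum_{j=1}^{j_0}R_j(z)\pb{q}_j(z)$ with scalar coefficients $R_j$.

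I expect the only delicate step to be the zero-norm claim for $z^l\pb{q}_j$: one must pass from the vanishing of the total mass of the positive scalar measure $d\mu_j$ to the vanishing of all its moments, and then splice the zero-norm pieces together using Cauchy--Schwarz rather than orthogonality, which is not available for the $\pb{q}_j$. Everything else is bookkeeping with the finite expansion supplied by Corollary~\ref{cor:basis-vector}.
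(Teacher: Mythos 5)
Your proof is correct and follows essentially the same route as the paper: both parts are reduced to the finite expansion of Corollary~\ref{cor:basis-vector}, after which the orthonormality of $\{\pb{p}_k\}_{k=1}^\infty$ forces all coefficients $c_k$ of a zero-norm polynomial to vanish, leaving exactly the representation \eqref{thm:eq-case-degenarate}. The only difference is one of explicitness: you verify in detail that each $R_j(z)\pb{q}_j(z)$ has zero norm (via the vanishing nonnegative scalar measure $\inner{\pb{q}_j}{d\sigma\,\pb{q}_j}$ and Cauchy--Schwarz), a fact the paper uses tacitly when it asserts $c_k=\inner{\pb{r}}{\pb{p}_k}_{L_2(\reals,\sigma)}=0$ in \eqref{eq:c-k-expand-any-zero}.
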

\begin{proof}
  First one proves (\ref{thm:case-degenarate}). The first part of the
  assertion follows immediatly from Definition~\ref{def:spectral-measure-gen-case}.
 Suposse that there is
  a nontrivial vector polynomial $\pb{r}(z)$ in the equivalence class
  of zero with height $r$. Therefore, by
  Corollary~\ref{cor:basis-vector}
  \begin{equation}
\label{eq:expand-any-zero}
  \pb{r}(z)=\sum_{k=1}^{l} c_k\pb{p}_k(z)+\sum_{j=1}^{j_0}R_j(z)\pb{q}_j(z)\,,
\end{equation}
where $\max\{
h(\pb{p}_l),\max\limits_{j=1,\dots,j_0}\{h(R_j\pb{q}_j)\}\}=r$. Furthermore,
\begin{equation}
\label{eq:c-k-expand-any-zero}
  c_k=\inner{\pb{r}(z)}{\pb{p}_k(z)}_{L_2(\reals,\sigma)}\, 
\quad \text{ for all } k\in\{1,\dots,l\}\,. 
\end{equation}
And, since $\pb{r}(z)$ is in the zero class, the r.h.s. of the
equality in (\ref{eq:c-k-expand-any-zero}) is always zero. Hence,
(\ref{thm:eq-case-degenarate}) holds true.

To prove (\ref{thm:non-case-degenarate}), one uses again
(\ref{eq:expand-any-zero}) taking into account
Corollary~\ref{cor:basis-vector}. Then (\ref{eq:c-k-expand-any-zero})
shows that the only vector polynomial in the zero class is the zero polynomial.
\end{proof}

\begin{remark}
  \label{rem:infinite-int-prob}
  The assertion \ref{thm:case-degenarate}) of
  Theorem~\ref{thm:non-degenerate-case} can be interpreted as
  follows. If the spectral function of $\mathcal{A}$ has a countable
  set of growth points not accumulating anywhere, then the spectrum of
  the operator of multiplication consists only of eigenvalues which,
  due to the fact that $\sigma$ is an $n\times n$ matrix, have
  multiplicity not greater than $n$. Let $\{x_l\}_{l=1}^\infty$ be the
  eigenvalues of the multiplication operator by the independent
  variable in $L_2(\reals,\sigma)$ enumerated taking into account their multiplicity. Hence
the vector polynomials
  $\{\pb{q}_j\}_{j=1}^{j_0}$ are generators of the interpolation
  problem
  \begin{equation}
    \label{eq:inner-int-prob}
    \inner{\pb{r}(x_l)}{\sigma_l\pb{r}(x_l)}_{\complex^n}=0\,,\quad l\in\nats,
  \end{equation}
  where $\sigma_l$ is a matrix of the same form as
  r.\,h.\,s. of (\ref{eq:spec-func-form}) and has the same properties.
  Note that (\ref{eq:inner-int-prob}) yields a linear interpolation
  problem with an infinite set of nodes of interpolation.
\end{remark}

\section{Spectral functions in the self-adjoint case}
\label{sec:Spectral functions self-adjoint}
The operator $A$ is symmetric and, by definition, closed. In this
section, we are interested in the case when $A=A^*$. So let us touch
upon some criteria for self-adjointness of $A$.

Our first criterion is based on the fact that any semi-infinite band
matrix can be considered as a block semi-infinite Jacobi
matrix. Indeed, any semi-infinite band matrix with $2n+1$ diagonals is
equivalent to a semi-infinite Jacobi matrix where each entry is a
$p\times p$ matrix with $p\ge n$. Since the operator $A^*$ is the
operator defined by the matrix $\mathcal{A}$ in the maximal domain
\cite[Sec.\,47]{MR1255973}, the fact that the operator $A$ is
self-adjoint depends exclusively on the asymptotic behavior of the
diagonal elements of its matrix representation $\mathcal{A}$. For any
matrix in $\mathcal{M}(n,\infty)$, consider the semi-infintie
submatrix after the last degeneration, which we called the ``tail of
the matrix'' (see Remark~\ref{rem:tail-matrix}). This ``tail'' can be
seen as a semi-infinite block Jacobi matrix.
\begin{figure}[h]
\begin{center}
\begin{tikzpicture}[scale=.18,ball/.style = 
{circle, draw, align=center, anchor=north, inner sep=0}]\footnotesize
 \pgfmathsetmacro{\xone}{0}
 \pgfmathsetmacro{\xtwo}{ 30.6}
 \pgfmathsetmacro{\yone}{-0.6}
 \pgfmathsetmacro{\ytwo}{30}
  \draw[step=1cm,gray,opacity=0.5,very thin] (\xone,\yone) grid (\xtwo,\ytwo);
\draw[thick] (25.5,4.5) circle(7);
\draw (21.7,10.6)--(34.4,24.3);
\draw (25,-2.5)--(47,-5.3);

\node[ball,text width=6.9cm,anchor=base] (pncan) at (55,15)
   {$  
\begin{tiny}
  \begin{array}{cccccc}
    d_{m_{j_0}+1}^{(0)}&\ddots&d_{m_{j_0}+1}^{(n_0)}&0&0&\cdots\\
    \ddots&d_{m_{j_0}+2}^{(0)}&\ddots&d_{m_{j_0}+2}^{(n_0)}&0&\\
    d_{m_{j_0}+1}^{(n_0)}&\ddots&d_{m_{j_0}+3}^{(0)}&\ddots&d_{m_{j_0}+3}^{(n_0)}&\ddots\\
    0&d_{m_{j_0}+2}^{(n_0)}&\ddots&d_{m_{j_0}+4}^{(0)}&\ddots&\ddots\\
    0&0&d_{m_{j_0}+3}^{(n_0)}&\ddots&d_{m_{j_0}+5}^{(0)}&\ddots\\
    \vdots&&\ddots&\ddots&\ddots&\ddots
  \end{array}\end{tiny}
$};
 
\draw(30.9,-.3)node[scale=.8]{$\ddots$};
\draw(30.9,.7)node[scale=.8]{$\ddots$};
\draw(29.9,-.3)node[scale=.8]{$\ddots$};
\begin{scope}
\foreach \x in {0,1,2,3,4,5,6,7,8,9,10,11,12,13,14,15,16,
17,18,19,20,21,22,23,24,25,26,27}
\draw(31.3,2.5+\x)node[scale=.8]{$\dots$}
;
\end{scope}

\begin{scope}
\foreach \x in {0,1,2,3,4,5,6,7,8,9,10,11,12,13,14,15,16,17,
18,19,20,21,22,23,24,25,26,27}
\draw(\x+.5,-.5)node[scale=.8]{$\vdots$}
;
\end{scope}
\begin{scope}
\foreach \x in {0,...,4}
{
  \filldraw[thin,gray,opacity=.9] (0+\x, 21-\x)
    rectangle (1+\x,22-\x)
 ;
   \filldraw[thin,gray,opacity=.9] (8+\x, 30-\x)
     rectangle (9+\x,29-\x);}
\end{scope}
\begin{scope}
\foreach \x in {0,...,5}
{
  \filldraw[thin,gray,opacity=.4] (0+\x, 22-\x)
    rectangle (1+\x,23-\x)
 ;
   \filldraw[thin,gray,opacity=.4] (7+\x, 30-\x)
     rectangle (8+\x,29-\x);}
\end{scope}

\begin{scope}
\foreach \x in {6,...,8}
{
  \filldraw[thin,gray,opacity=.9] (0+\x, 22-\x)
    rectangle (1+\x,23-\x)
 ;
   \filldraw[thin,gray,opacity=.9] (7+\x, 30-\x)
     rectangle (8+\x,29-\x);}
\end{scope}
\begin{scope}
\foreach \x in {0,...,9}
{
  \filldraw[thin,gray,opacity=.4] (0+\x, 23-\x)
    rectangle (1+\x,24-\x)
 ;
   \filldraw[thin,gray,opacity=.4] (6+\x, 30-\x)
     rectangle (7+\x,29-\x);}
\end{scope}
\begin{scope}
\foreach \x in {10}
{
  \filldraw[thin,gray,opacity=.9] (0+\x, 23-\x)
    rectangle (1+\x,24-\x)
 ;
   \filldraw[thin,gray,opacity=.9] (6+\x, 30-\x)
     rectangle (7+\x,29-\x);}
\end{scope}
\begin{scope}
\foreach \x in {0,...,11}
{
  \filldraw[thin,gray,opacity=.4] (0+\x, 24-\x)
    rectangle (1+\x,25-\x)
 ;
   \filldraw[thin,gray,opacity=.4] (5+\x, 30-\x)
     rectangle (6+\x,29-\x);}
\end{scope}
\begin{scope}
\foreach \x in {12,...,17}
{
  \filldraw[thin,gray,opacity=.9] (0+\x, 24-\x)
    rectangle (1+\x,25-\x)
 ;
   \filldraw[thin,gray,opacity=.9] (5+\x, 30-\x)
     rectangle (6+\x,29-\x);}
\end{scope}

\begin{scope}
\foreach \x in {0,...,18}
{
  \filldraw[thin,gray,opacity=.4] (0+\x, 25-\x)
    rectangle (1+\x,26-\x)
 ;
   \filldraw[thin,gray,opacity=.4] (4+\x, 30-\x)
     rectangle (5+\x,29-\x);}
\end{scope}

\begin{scope}
\foreach \x in {19,...,25}
{
  \filldraw[thin,gray,opacity=.9] (0+\x, 25-\x)
    rectangle (1+\x,26-\x)
 ;
   \filldraw[thin,gray,opacity=.9] (4+\x, 30-\x)
     rectangle (5+\x,29-\x);}
\end{scope}
\begin{scope}
\foreach \x in {0,...,26}
{
  \filldraw[thin,gray,opacity=.4] (0+\x, 26-\x)
    rectangle (1+\x,27-\x)
 ;
   \filldraw[thin,gray,opacity=.4] (3+\x, 30-\x)
     rectangle (4+\x,29-\x);}
\end{scope}

\begin{scope}
\foreach \x in {}
{
  \filldraw[thin,gray,opacity=.9] (0+\x, 26-\x)
    rectangle (1+\x,27-\x)
 ;
   \filldraw[thin,gray,opacity=.9] (3+\x, 30-\x)
     rectangle (4+\x,29-\x);}
\end{scope}

\begin{scope}
\foreach \x in {0,...,27}
{
  \filldraw[thin,gray,opacity=.4] (0+\x, 27-\x)
    rectangle (1+\x,28-\x)
 ;
   \filldraw[thin,gray,opacity=.4] (2+\x, 30-\x)
     rectangle (3+\x,29-\x);}
\end{scope}

\begin{scope}
\foreach \x in {}
{
  \filldraw[thin,gray,opacity=.9] (0+\x, 27-\x)
    rectangle (1+\x,28-\x)
 ;
   \filldraw[thin,gray,opacity=.9] (2+\x, 30-\x)
     rectangle (3+\x,29-\x);}
\end{scope}
\begin{scope}
\foreach \x in {0,...,28}
{
  \filldraw[thin,gray,opacity=.4] (0+\x, 28-\x)
    rectangle (1+\x,29-\x)
 ;
   \filldraw[thin,gray,opacity=.4] (1+\x, 30-\x)
     rectangle (2+\x,29-\x);}
\end{scope}

\begin{scope}
\foreach \x in {}
{
  \filldraw[thin,gray,opacity=.9] (0+\x, 28-\x)
    rectangle (1+\x,29-\x)
 ;
   \filldraw[thin,gray,opacity=.9] (1+\x, 30-\x)
     rectangle (2+\x,29-\x);}
\end{scope}

\begin{scope}
  \foreach \x in
  {0,...,29}
  {\filldraw[thin,gray,opacity=.25] (0+\x, 29-\x) rectangle
    (1+\x,30-\x) ; \filldraw[thin,gray,opacity=.2] (0+\x, 29-\x)
    rectangle (1+\x,30-\x);}
\end{scope}
\end{tikzpicture}
\end{center}
\end{figure}

Let us denote
\begin{equation*}
\begin{pmatrix}
Q_1&B_1^*&0&0&\cdots\\
B_1&Q_2&B_2^*&0&\\
0&B_2&Q_3&B_3^*&\ddots\\
0&0&B_3&Q_4&\ddots\\
\vdots&&\ddots&\ddots&\ddots
\end{pmatrix}:=
\begin{pmatrix}
d_{m_{j_0}+1}^{(0)}&\ddots&d_{m_{j_0}+1}^{(n_0)}&0&0&\cdots\\
\ddots&d_{m_{j_0}+2}^{(0)}&\ddots&d_{m_{j_0}+2}^{(n_0)}&0&\\
d_{m_{j_0}+1}^{(n_0)}&\ddots&d_{m_{j_0}+3}^{(0)}&\ddots&
d_{m_{j_0}+3}^{(n_0)}&\ddots\\
0&d_{m_{j_0}+2}^{(n_0)}&\ddots&d_{m_{j_0}+4}^{(0)}&\ddots&\ddots\\
0&0&d_{m_{j_0}+3}^{(n_0)}&\ddots&d_{m_{j_0}+5}^{(0)}&\ddots\\
\vdots&&\ddots&\ddots&\ddots&\ddots
\end{pmatrix}\,,
\end{equation*}
where each entry is an $n_0\times n_0$ matrix ($n_0:=n-j_0$). Clearly, the elements
of the block diagonal adjacent to the main diagonal, i.\,e., the
matrices $\{B_k\}_{k\in\nats}$ and $\{B_k^*\}_{k\in\nats}$,
are upper and, respectively, lower triangular matrices such that the
main diagonal entries are positive numbers.

The following proposition is the analogue of the Carleman criterion
\cite[Chap.\,1, Addenda and Problems]{MR0184042}
for block Jacobi matrices.
\begin{proposition}(\cite[Ch.~7,~Thm.~2.9]{MR0222718})
  \label{prop:Carleman-berezanskii}
  If $\sum_{j=0}1/\norm{B_j}$ diverges, then $A$ is self-adjoint.
\end{proposition}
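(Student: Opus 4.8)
The statement is quoted from \cite[Ch.~7,~Thm.~2.9]{MR0222718}, so strictly one need only match the hypotheses and invoke it; nonetheless, here is how I would prove it directly. The plan is to reduce the self-adjointness of $A$ to that of the tail block Jacobi operator $J$ displayed above, and then to show that $J$ has vanishing deficiency indices by a block version of Carleman's argument.

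First I would justify the reduction. The operator $A$ differs from the orthogonal sum of a finite Hermitian matrix (acting on the first $n_0+m_{j_0}-1$ coordinates) and $J$ only through the finitely many band entries coupling the two blocks, and this coupling is a symmetric operator of finite rank. Since deficiency indices are invariant under finite-rank (indeed bounded) symmetric perturbations, $A$ is self-adjoint if and only if $J$ is; this is precisely the content of the observation preceding the proposition that self-adjointness ``depends exclusively on the asymptotic behavior of the diagonal elements''. Thus it suffices to treat $J$. Recall next that for a block Jacobi operator the deficiency indices equal the number of linearly independent $\ell^2(\nats,\complex^{n_0})$ solutions $u=(u_k)_{k\in\nats}$ of the formal equation $J^\ast u=i\,u$, i.\,e.\ of the three-term block recurrence
\begin{equation*}
  B_{k-1}u_{k-1}+Q_k u_k+B_k^\ast u_{k+1}=i\,u_k,\qquad k\in\nats,
\end{equation*}
with the conventions $u_0:=0$, $B_0:=0$. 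So it is enough to show that the only such $u\in\ell^2$ is $u\equiv 0$.

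The core of the argument is a summation by parts. Pairing the $k$-th equation with $u_k$ and summing for $k=1,\dots,N$, the middle terms $\sum_k\inner{u_k}{Q_k u_k}$ are real because each $Q_k$ is Hermitian, while the off-diagonal terms combine, after re-indexing one of the two sums, into $2\operatorname{Re}\sum_{k=1}^{N-1}\inner{u_{k+1}}{B_k u_k}$ together with one surviving boundary term $\overline{\inner{u_{N+1}}{B_N u_N}}$. Taking imaginary parts of the resulting identity, everything cancels except this boundary term, leaving
\begin{equation*}
  -\operatorname{Im}\inner{u_{N+1}}{B_N u_N}=\sum_{k=1}^{N}\norm{u_k}^2 .
\end{equation*}
If $u\not\equiv 0$, the right-hand side exceeds a constant $c>0$ for all large $N$, whence by Cauchy--Schwarz $c\le\abs{\inner{u_{N+1}}{B_N u_N}}\le\norm{B_N}\,\norm{u_N}\,\norm{u_{N+1}}$, so that $\norm{u_N}\,\norm{u_{N+1}}\ge c/\norm{B_N}$. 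Summing over $N$ and using $2\norm{u_N}\,\norm{u_{N+1}}\le\norm{u_N}^2+\norm{u_{N+1}}^2$ yields $c\sum_N\norm{B_N}^{-1}\le 2\norm{u}^2<\infty$, contradicting the hypothesis $\sum_j\norm{B_j}^{-1}=\infty$. Hence $u\equiv 0$, the deficiency indices vanish, and $J$, therefore $A$, is self-adjoint.

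The main obstacle I expect is bookkeeping rather than depth: making the summation by parts rigorous (controlling the re-indexing and confirming that the term at $N$ is the unique surviving imaginary contribution), and carefully justifying the reduction step so that the deficiency indices of $A$ and $J$ genuinely coincide. The divergence of $\sum\norm{B_j}^{-1}$ enters only at the very last line, exactly to forbid the boundary term $\inner{u_{N+1}}{B_N u_N}$ from decaying fast enough to be consistent with $u\in\ell^2$.
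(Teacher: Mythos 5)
Your proposal is correct, but there is nothing in the paper to compare it against: the paper states this proposition without proof, importing it verbatim from \cite[Ch.~7, Thm.~2.9]{MR0222718}. What you have supplied is a self-contained proof of the cited result, and it is the classical Carleman argument transplanted to the block setting: the summation-by-parts identity $-\operatorname{Im}\inner{u_{N+1}}{B_Nu_N}=\sum_{k=1}^{N}\norm{u_k}^2$ is the block analogue of the Wronskian/Green identity used in the scalar Jacobi case, and the final Cauchy--Schwarz plus divergence step is exactly where $\sum_j\norm{B_j}^{-1}=\infty$ must enter. Your reduction step is also sound: the coupling between the leading $(n_0+m_{j_0}-1)$-dimensional block and the tail involves only finitely many band entries (the band has width $n$), so it is a bounded finite-rank symmetric perturbation, and such perturbations preserve deficiency indices; this is the precise content of the paper's informal remark that self-adjointness ``depends exclusively on the asymptotic behavior of the diagonal elements.''

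Two small points to tighten. First, you compute only $\dim\ker(J^{*}-iI)$; self-adjointness requires both deficiency indices to vanish. This is immediate either by rerunning your identity with $-i$ in place of $i$ (the boundary term then appears with the opposite sign and the same contradiction results), or by observing that all matrix entries are real, so $n_{+}=n_{-}$ via complex conjugation of solutions --- but one of these remarks should be made. Second, your identification of deficiency elements with $\ell^{2}(\nats,\complex^{n_0})$ solutions of the formal three-term recurrence (with $u_0=0$) uses that $J^{*}$ is the operator defined by the matrix on its maximal domain; the paper records exactly this fact, with reference to \cite[Sec.~47]{MR1255973}, in the paragraph preceding the proposition, so you should cite it rather than treat it as automatic. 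With these two remarks added, your argument is a complete and correct proof of the quoted criterion.
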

In \cite[Cor.~2.5]{MR1711874}, the following necessary conditions for
self-adjointness are given.
\begin{proposition}
  \label{prop:criteria-mirzoev}
  Suppose that, starting from some $k_0$, all the matrices $Q_{k}$ are
  invertble. If
  \begin{equation*}
    \lim_{k\rightarrow+\infty}\norm{Q_{k}^{-1}}=0\,,
\text{ and }\lim\sup_{k\rightarrow +\infty}
\{\norm{Q_{k}^{-1}B_{k}}+\norm{Q_{k}^{-1}B^*_{k}}\}<1\,,
\end{equation*}
then the operator  $A$ is self-adjoint.
\end{proposition}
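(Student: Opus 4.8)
The plan is to prove self-adjointness through the deficiency indices. Since $A$ is closed and symmetric, $A=A^*$ holds if and only if $\ker(A^*\mp i)=\{0\}$, and because $A^*$ is the operator acting as the block three-term difference expression on its maximal domain (see \cite[Sec.\,47]{MR1255973}), this is equivalent to showing that, for $z\in\complex\setminus\reals$, the recurrence
\[
B_{k-1}u_{k-1}+Q_ku_k+B_k^*u_{k+1}=z\,u_k,\qquad k\ge 1,\ u_0:=0,
\]
has no nonzero square-summable block solution $u=(u_k)_k$. This is the route followed by Mirzoev in \cite{MR1711874}, which one adapts to the block Jacobi form of the tail introduced above.

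First I would realize $A$ as a perturbation of its block diagonal. Write $A=D+V$, where $D$ is the self-adjoint block-diagonal operator with blocks $\{Q_k\}$ on its maximal domain $\dom D=\{u:\sum_k\norm{Q_ku_k}^2<\infty\}$ and $V$ is the symmetric off-diagonal part, $(Vu)_k=B_{k-1}u_{k-1}+B_k^*u_{k+1}$. The goal is to show that $V$ is $D$-bounded with relative bound strictly less than one; Kato--Rellich then yields that $A=D+V$ is self-adjoint on $\dom D$. To estimate $\norm{Vu}^2\le 2\sum_k(\norm{B_{k-1}u_{k-1}}^2+\norm{B_k^*u_{k+1}}^2)$ I would bound each block by inserting $Q_\cdot^{-1}Q_\cdot$, using $\norm{B_{k-1}u_{k-1}}\le\norm{B_{k-1}Q_{k-1}^{-1}}\,\norm{Q_{k-1}u_{k-1}}=\norm{Q_{k-1}^{-1}B_{k-1}^*}\,\norm{Q_{k-1}u_{k-1}}$, and similarly for $B_k^*u_{k+1}$. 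The hypothesis $\limsup_k(\norm{Q_k^{-1}B_k}+\norm{Q_k^{-1}B_k^*})=\theta<1$ then controls the off-diagonal contribution by $\theta\,\norm{Du}$ up to lower-order terms, while $\norm{Q_k^{-1}}\to 0$ guarantees that the residual scalar factors, as well as the $z$-dependent term in the deficiency equation, are eventually negligible, giving relative bound $\theta<1$.

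An equivalent and more hands-on route, which makes the required decay visible, is a discrete Wronskian computation: pairing the recurrence with $u_k$, using that each $Q_k$ is Hermitian, taking imaginary parts and summing by parts yields
\[
\im(z)\sum_{k=1}^{N}\norm{u_k}^2=\im\inner{u_N}{B_N^*u_{N+1}},
\]
the boundary term at $k=0$ vanishing because $u_0=0$. If $u$ were a nonzero square-summable solution, the left-hand side would converge to a positive limit, so it would suffice to show that $\inner{u_N}{B_N^*u_{N+1}}$ tends to zero along a subsequence, forcing $u\equiv 0$.

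I expect the main obstacle to be exactly the control of this boundary term, equivalently the index bookkeeping in the relative-boundedness estimate. The hypotheses pair $Q_k$ with $B_k$ and $B_k^*$, whereas in row $k$ of the recurrence the left off-diagonal block is $B_{k-1}$, whose natural diagonal companion is $Q_{k-1}$; bridging this mismatch requires comparing the growth of $\norm{Q_k}$ at neighbouring indices. This is where the two hypotheses must be used jointly: the strict gap $\theta<1$ absorbs the cross terms while $\norm{Q_k^{-1}}\to 0$, i.e.\ the smallest singular value of $Q_k$ diverges, dominates the off-diagonal blocks uniformly in the tail. Making this estimate precise is the delicate point, and it is carried out in \cite[Cor.\,2.5]{MR1711874}, which it suffices to invoke here.
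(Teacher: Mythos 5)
The paper offers no proof of this proposition at all: it is quoted verbatim from Kostyuchenko and Mirzoev \cite[Cor.~2.5]{MR1711874}, and your proposal, after sketching the Kato--Rellich and discrete-Wronskian strategies, likewise concludes by invoking that same corollary, so your treatment agrees with the paper's. Your diagnosis of the index mismatch (the hypothesis pairs $B_k$ and $B_k^*$ with $Q_k$, while the estimates naturally pair $B_k^*$ with $Q_{k+1}$) as the delicate point is accurate, but since both you and the authors ultimately defer to the cited result, none of the sketched material is required.
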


Another criterion is given by perturbation theory. Indeed, consider
the operators $D_j$ $(j=0,1,\dots,n)$, whose matrix representation
with respect to $\{\delta_k\}_{k\in\mathbb{N}}$ is a diagonal matrix,
i.\,e., $D_j\delta_k=d^{(j)}_k\delta_k$ for all $k\in\mathbb{N}$,
where $d_k^{(j)}$ is a real number (see
\cite[Sec.~47]{MR1255973}). Note that $\mathcal{D}_j$, given in the
Introduction, is the matrix representation of the operator $D_j$ with
respect to $\{\delta_k\}_{k\in\mathbb{N}}$.  Define the shift operator
$S$ as follows
\begin{equation*}
  S\delta_k=
    \delta_{k+1}\,,\quad \text{for all } k\in\mathbb{N}\,,
\end{equation*}
where by linearity, it is defined on $\Span\{\delta_k\}_{k=1}^\infty$
and then extended to $\cH$ by continuity. Consider the symmetric operator
\begin{equation}
  \label{eq:def_of_operator}
  A':=D_0+\sum_{j=1}^nS^jD_j + \sum_{j=1}^nD_j(S^*)^j\,.
\end{equation}
Now, if the operator $\sum_{j=1}^nS^jD_j+\sum_{j=1}^nD_j(S^*)^j$ is
$D_0$-bounded with $D_0$-bound smaller than $1$ (see
\cite[Sec.\,5.1]{MR566954}), one can resort to the Rellich-Kato
theorem \cite[Thm.\,4.3]{MR0407617} to show that $A'$ is
self-adjoint. When this happens, it can be shown that $A=A'$.

Let us assume from this point to the end of this section that the
operator $A$ is self-adjoint. Our approach to constructing the
spectral functions of $A$ is based on techniques of perturbation
theory related with the strong resolvent convergence (see
\cite[Sec.\,9.3]{MR566954}).

We begin by recalling the following definition.
\begin{definition}
\label{def:core}
A subset $D$ of the domain of a closeable operator $B$ is called
a core of $B$ when $\overline{B\upharpoonright_{D}}=B$.
\end{definition}
Also, we recur to the following known results
(cf. \cite[Chap.\,8, Cor.\,1.6 and Thm.\,1.15]{MR0407617}):
\begin{proposition}\cite[Thm.\,9.16]{MR566954}.
\label{prop:weidmann}
Let $\{B_N\}_{N\in\mathbb{N}}$ and $B$ be self-adjoint operators on
$\cH$. If there is a core $D$ of $B$ such that for every $\varphi\in
D$ there is an $N_0\in\mathbb{N}$ which satisfies $\varphi\in\dom(B_N)$
for $N\ge N_0$ and $B_N\varphi\rightarrow B\varphi$, then the sequence
$\{(B_N-zI)^{-1}\}_{N\in\mathbb{N}}$ converges strongly to
$(B-zI)^{-1}$ (denoted
$(B_N-zI)^{-1}\xrightarrow[N\rightarrow\infty]{s}(B-zI)^{-1}$) for all
$z\in\complex\setminus\reals$.
\end{proposition}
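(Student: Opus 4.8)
The plan is to deduce the strong resolvent convergence from two standard properties of self-adjoint operators. First, for $z\in\complex\setminus\reals$ and any self-adjoint $T$, the resolvent $(T-zI)^{-1}$ is everywhere defined and satisfies $\norm{(T-zI)^{-1}}\le 1/\abs{\im z}$; applied to $B$ and to each $B_N$, this gives a bound on all the resolvents that is uniform in $N$. Second, because $D$ is a \emph{core} of $B$, the set $(B-zI)D$ is dense in $\cH$. Given the uniform bound, an $\varepsilon/3$ argument reduces the desired strong convergence on all of $\cH$ to convergence on the dense set $(B-zI)D$, so the heart of the proof is to verify convergence there.

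First I would carry out the computation on $(B-zI)D$. Fix $z\in\complex\setminus\reals$, take $\psi=(B-zI)\varphi$ with $\varphi\in D$, and let $N_0$ be as in the hypothesis, so that $\varphi\in\dom(B_N)$ for $N\ge N_0$. Using $(B-zI)^{-1}\psi=\varphi$ and $(B_N-zI)^{-1}(B_N-zI)\varphi=\varphi$, for $N\ge N_0$ one obtains
\begin{equation*}
(B_N-zI)^{-1}\psi-(B-zI)^{-1}\psi
=(B_N-zI)^{-1}\bigl[(B-zI)\varphi-(B_N-zI)\varphi\bigr]
=(B_N-zI)^{-1}(B-B_N)\varphi\,,
\end{equation*}
and hence, by the uniform resolvent bound,
\begin{equation*}
\norm{(B_N-zI)^{-1}\psi-(B-zI)^{-1}\psi}
\le\frac{1}{\abs{\im z}}\,\norm{B_N\varphi-B\varphi}
\xrightarrow[N\rightarrow\infty]{}0
\end{equation*}
by the hypothesis $B_N\varphi\to B\varphi$. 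This establishes convergence on $(B-zI)D$.

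Next I would justify the density of $(B-zI)D$ and complete the approximation. Since $B$ is self-adjoint and $z\notin\reals$, the operator $B-zI\colon\dom(B)\to\cH$ is a bijection and is continuous from the graph norm of $B$ to $\cH$; as $D$ is a core it is graph-norm dense in $\dom(B)$, so $(B-zI)D$ is dense in $(B-zI)\dom(B)=\cH$. Then, for arbitrary $\eta\in\cH$ and $\varepsilon>0$, choosing $\psi\in(B-zI)D$ with $\norm{\eta-\psi}<\varepsilon$ and applying $\norm{(B_N-zI)^{-1}}\le1/\abs{\im z}$ together with the same estimate for $B$ yields
\begin{equation*}
\norm{(B_N-zI)^{-1}\eta-(B-zI)^{-1}\eta}
\le\frac{2\varepsilon}{\abs{\im z}}
+\norm{(B_N-zI)^{-1}\psi-(B-zI)^{-1}\psi}\,,
\end{equation*}
whose right-hand side tends to $2\varepsilon/\abs{\im z}$ and is therefore arbitrarily small. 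Since $z\in\complex\setminus\reals$ was arbitrary, the strong resolvent convergence follows. I expect the only genuinely delicate point to be the density of $(B-zI)D$: this is precisely where the hypothesis that $D$ is a \emph{core}, rather than merely a dense subset of $\cH$, is indispensable, while the rest is uniform-boundedness bookkeeping, together with the observation that the $N$-dependence of $N_0$ is harmless because $\varphi$ is held fixed as $N\to\infty$.
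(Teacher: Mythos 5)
Your proof is correct, and since the paper states this proposition without proof, quoting it directly from Weidmann \cite[Thm.\,9.16]{MR566954}, there is no in-paper argument to diverge from; your route is in fact the standard one from that reference. The two key points are exactly right: the identity $(B_N-zI)^{-1}\psi-(B-zI)^{-1}\psi=(B_N-zI)^{-1}(B-B_N)\varphi$ for $\psi=(B-zI)\varphi$, $\varphi\in D$, $N\ge N_0$, together with the uniform bound $\norm{(B_N-zI)^{-1}}\le 1/\abs{\im z}$, and the observation that the core hypothesis is what makes $(B-zI)D$ dense in $\cH$ (via graph-norm density in $\dom(B)$ and surjectivity of $B-zI$ for self-adjoint $B$ and nonreal $z$).
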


\begin{proposition}\cite[Thm.\,9.19]{MR566954}.
  \label{prop:convergence-spectral-function}
  Let $\{B_N\}_{N\in\mathbb{N}}$ and $B$ be self-adjoint operators on
  $\cH$, such that the sequence $\{(B_N-iI)^{-1}\}_{N\in\mathbb{N}}$
  converges strongly to $(B-iI)^{-1}$. Then
  \begin{equation*}
    \begin{aligned}
      E_{B_N}(t)\xrightarrow[N\rightarrow\infty]{s} E_{B}(t)\\
      E_{B_N}(t+0)\xrightarrow[N\rightarrow\infty]{s} E_{B}(t)
    \end{aligned}\,,
    \quad \text{for all}\ \ t\in\reals\ \ \text{such that}\ \ E_B(t)=E_B(t+0)\,.
  \end{equation*}
Here, $E_{B_N}(t)$ and $E_{B}(t)$ are the spectral
resolutions of the identity of $B_N$ and $B$, respectively.
\end{proposition}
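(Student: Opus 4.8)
The plan is to deduce the convergence of the spectral families from the strong resolvent convergence in three stages: first upgrade the hypothesis to convergence of $f(B_N)$ for every bounded continuous $f$, then sandwich the spectral projections between continuous over- and under-approximations, and finally promote the resulting weak convergence to strong convergence using the geometry of $\cH$.

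First I would show that $(B_N-iI)^{-1}\to(B-iI)^{-1}$ strongly forces $f(B_N)\to f(B)$ strongly for every $f\in C_0(\reals)$. Expanding the resolvent in the Neumann series $(B_N-zI)^{-1}=\sum_{k\ge0}(z-i)^k(B_N-iI)^{-(k+1)}$, valid for $\abs{z-i}<1$ since $\norm{(B_N-iI)^{-1}}\le1$, and using that each power $(B_N-iI)^{-(k+1)}$ converges strongly (a product of uniformly bounded, strongly convergent operators converges strongly), one gets strong convergence of $(B_N-zI)^{-1}$ first on that disc and then, by propagating the expansion, on all of $\complex\setminus\reals$. The set of $f$ for which $f(B_N)\to f(B)$ strongly is a uniformly closed subalgebra of $C_0(\reals)$ (closedness uses $\norm{f(B_N)}\le\norm{f}_\infty$) and contains every resolvent function $x\mapsto(x-z)^{-1}$; since these separate points, Stone--Weierstrass gives all of $C_0(\reals)$.

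Next I would extend this to bounded continuous $f$ by a tightness estimate. Picking $\phi_R\in C_0(\reals)$ with $0\le\phi_R\le1$ and $\phi_R\equiv1$ on $[-R,R]$, the convergence just proven gives $\inner{(I-\phi_R(B_N))\varphi}{\varphi}\to\inner{(I-\phi_R(B))\varphi}{\varphi}$, whose right-hand side tends to $0$ as $R\to\infty$; hence for each $\varphi$ and $\delta>0$ there are $R$ and $N_0$ with $\inner{(I-\phi_R(B_N))\varphi}{\varphi}<\delta$ for all $N\ge N_0$. Writing $f=f\phi_R+f(1-\phi_R)$, the first term lies in $C_0(\reals)$ and converges, while $\norm{f(1-\phi_R)(B_N)\varphi}^2\le\norm{f}_\infty^2\inner{(I-\phi_R(B_N))\varphi}{\varphi}$ is uniformly small, and likewise for $B$; thus $f(B_N)\varphi\to f(B)\varphi$ for every bounded continuous $f$. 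Now fix a continuity point $t$, i.e. $E_B(t)=E_B(t+0)$, and choose continuous $0\le g_\varepsilon^-\le\chi_{(-\infty,t]}\le g_\varepsilon^+\le1$ with $g_\varepsilon^-\equiv1$ on $(-\infty,t-\varepsilon]$ and $g_\varepsilon^+\equiv0$ on $[t+\varepsilon,\infty)$. Functional calculus yields $g_\varepsilon^-(B_N)\le E_{B_N}(t+0)\le g_\varepsilon^+(B_N)$, and the same for $B$, so $\inner{E_{B_N}(t+0)\varphi}{\varphi}$ is squeezed between quantities converging to $\inner{g_\varepsilon^\pm(B)\varphi}{\varphi}$. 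Letting $\varepsilon\downarrow0$, the upper bound tends to $\inner{E_B(t+0)\varphi}{\varphi}$ and the lower to $\inner{E_B(t)\varphi}{\varphi}$; by the continuity hypothesis both collapse to $\inner{E_B(t)\varphi}{\varphi}$, giving $\inner{E_{B_N}(t+0)\varphi}{\varphi}\to\inner{E_B(t)\varphi}{\varphi}$, and the analogous left-approximation argument handles $E_{B_N}(t)$.

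Finally I would turn this quadratic-form convergence into strong convergence of the projections. Polarization upgrades $\inner{E_{B_N}(t)\varphi}{\varphi}\to\inner{E_B(t)\varphi}{\varphi}$ to weak operator convergence $E_{B_N}(t)\to E_B(t)$, while the diagonal statement reads $\norm{E_{B_N}(t)\varphi}^2\to\norm{E_B(t)\varphi}^2$ because these are orthogonal projections. In a Hilbert space, weak convergence together with convergence of norms implies strong convergence, whence $E_{B_N}(t)\varphi\to E_B(t)\varphi$, and identically for $E_{B_N}(t+0)$. I expect the main obstacle to be the passage from continuous functions to the discontinuous $\chi_{(-\infty,t]}$: strong convergence of $f(B_N)$ simply fails for $f$ discontinuous at a jump of $E_B$, and it is precisely the assumption $E_B(t)=E_B(t+0)$ that forces the gap between the over- and under-approximations to vanish in the limit; the accompanying tightness estimate, needed to license bounded but non-decaying continuous test functions, is the other delicate point.
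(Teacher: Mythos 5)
The paper does not prove this proposition at all: it is imported verbatim, with citation, from Weidmann's book (Thm.\,9.19), so there is no internal proof to compare against. Your architecture is in substance the standard textbook proof of that theorem, and your second through fourth stages are correct as written: the tightness estimate $\norm{(f(1-\phi_R))(B_N)\varphi}^2\le\norm{f}_\infty^2\inner{(I-\phi_R(B_N))\varphi}{\varphi}$ legitimately extends strong convergence from $C_0(\reals)$ to bounded continuous functions; the sandwich $g_\varepsilon^-\le\chi\le g_\varepsilon^+$ correctly isolates the role of the continuity hypothesis $E_B(t)=E_B(t+0)$; and polarization plus $\norm{E_{B_N}(t)\varphi}^2=\inner{E_{B_N}(t)\varphi}{\varphi}$ plus the weak-convergence-with-norms criterion correctly upgrades form convergence to strong convergence of the projections.

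The genuine gap is in your first stage, and it is twofold but has a single cure. First, ``propagating the expansion'' cannot reach all of $\complex\setminus\reals$: the uniform bound is $\norm{(B_N-z_0I)^{-1}}\le 1/\abs{\im z_0}$, so each new disc of guaranteed convergence $\abs{z-z_0}<\abs{\im z_0}$ lies strictly inside the half-plane containing $z_0$; starting from $z_0=i$ you never cross the real axis, and the lower half-plane is not obtained this way. Second, and at exactly the same point, your Stone--Weierstrass invocation is incomplete: for complex scalars the theorem requires the subalgebra to be closed under complex conjugation, and $\overline{(x-z)^{-1}}=(x-\bar z)^{-1}$ lives in the half-plane you have not reached; since strong convergence does not pass to adjoints in general, conjugation-closure of your ``good set'' is not automatic. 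Both holes are closed by one observation exploiting normality --- the same weak-plus-norms trick you already use in your final step. If $f(B_N)\to f(B)$ strongly, then $f(B_N)^*\to f(B)^*$ weakly (adjoints of a bounded weakly convergent sequence converge weakly), while normality of $f(B_N)$ gives $\norm{f(B_N)^*\varphi}=\norm{f(B_N)\varphi}\to\norm{f(B)\varphi}=\norm{f(B)^*\varphi}$; hence $\bar f(B_N)=f(B_N)^*\to f(B)^*=\bar f(B)$ strongly. Applied to $f(x)=(x-i)^{-1}$ this yields $(B_N+iI)^{-1}\to(B+iI)^{-1}$ strongly (equivalently one can argue through Cayley transforms, where strong convergence of unitaries and of their adjoints are mutually implied), and in general it shows your good set is a conjugation-closed, uniformly closed subalgebra of $C_0(\reals)$ containing $(x-i)^{-1}$, which already separates points and vanishes nowhere; Stone--Weierstrass then legitimately yields all of $C_0(\reals)$, and the rest of your proof goes through unchanged.
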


Recall the finite dimensional operator $\widetilde{A}_N$ studied
in Section~\ref{sec:submatrices} and define
\begin{equation*}
  A_N:=\widetilde{A}_N\oplus\mathbb{O}\,,
\end{equation*}
where $\mathbb{O}$ is the zero-operator in the infinite dimensional
space $\cH\ominus\cH_N$. For any $N>n$, the operator $A_N$ is
self-adjoint, so we take advantage of the spectral theorem. Let us
introduce the following notation for the matrix valued spectral
functions
\begin{align}
\label{eq:sigma-N-sa}
\sigma_N(t)&:=\{\inner{\delta_i}{E_{A_N}(t)\delta_j}\}_{i,j=1}^\infty\quad\text{ for any } N>n\\
\label{eq:sigma-sa}
\sigma(t)&:=\{\inner{\delta_i}{E_A(t)\delta_j}\}_{i,j=1}^\infty\,.
\end{align}

 \begin{lemma}
   \label{lem:sepctral-function-convergence}
   The matrix valued functions $\sigma_N(t)$ given in (\ref{eq:sigma-N-sa})
   converge to the matrix valued function $\sigma(t)$, defined by
   (\ref{eq:sigma-sa}), at all points of continuity of $\sigma(t)$, i.\,e.,
   \begin{equation}
\label{eq:converges-sigmas}
     \sigma_N(t)\xrightarrow[N\rightarrow\infty]{}\sigma(t)\,,\quad t
     \text{ being a point of continuity of } \sigma(t)\,.
   \end{equation}
 \end{lemma}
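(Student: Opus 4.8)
The plan is to deduce the convergence from the two perturbation-theoretic results quoted above, namely Proposition~\ref{prop:weidmann} and Proposition~\ref{prop:convergence-spectral-function}, by first establishing that $A_N$ converges to $A$ in the strong resolvent sense. For this I would take $D:=\Span\{\delta_k\}_{k=1}^\infty$ as the candidate core of $A$ required in Proposition~\ref{prop:weidmann}. Since $A$ is the minimal closed operator generated by $\mathcal{A}$, it is by definition the closure of the operator acting on finite linear combinations of the $\delta_k$; hence $D$ is a core of $A$ in the sense of Definition~\ref{def:core}. Moreover, each $A_N=\widetilde{A}_N\oplus\mathbb{O}$ is bounded and self-adjoint, so $\dom(A_N)=\cH\supseteq D$ and the domain part of the hypothesis of Proposition~\ref{prop:weidmann} is automatic.

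The key computation is to show that for every $\varphi\in D$ one has $A_N\varphi=A\varphi$ for all sufficiently large $N$, which in particular gives $A_N\varphi\to A\varphi$. Write $\varphi=\sum_{k=1}^M c_k\delta_k$. Because $\mathcal{A}$ has only $2n+1$ band diagonals, $A\delta_k$ is supported on $\{\delta_{k-n},\dots,\delta_{k+n}\}$, so $A\varphi$ is supported on the first $M+n$ coordinates. For $N\ge M+n$ we thus have $\varphi\in\cH_N$ and $A\varphi\in\cH_N$, whence
\[
A_N\varphi=\widetilde{A}_N\varphi=P_{\cH_N}A\!\upharpoonright_{\cH_N}\varphi=P_{\cH_N}A\varphi=A\varphi\,.
\]
Consequently the hypotheses of Proposition~\ref{prop:weidmann} are met and $(A_N-zI)^{-1}\xrightarrow{s}(A-zI)^{-1}$ for all $z\in\complex\setminus\reals$; taking $z=i$ supplies exactly the hypothesis of Proposition~\ref{prop:convergence-spectral-function}.

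Applying Proposition~\ref{prop:convergence-spectral-function} with $B_N=A_N$ and $B=A$ then yields $E_{A_N}(t)\xrightarrow{s}E_A(t)$ at every $t$ with $E_A(t)=E_A(t+0)$. Finally I would pass from strong convergence of the spectral resolutions to entrywise convergence of the matrix-valued functions: for fixed $i,j$, strong convergence gives $E_{A_N}(t)\delta_j\to E_A(t)\delta_j$ in $\cH$, and continuity of the inner product yields $\inner{\delta_i}{E_{A_N}(t)\delta_j}\to\inner{\delta_i}{E_A(t)\delta_j}$, which is precisely (\ref{eq:converges-sigmas}). The step requiring the most care is the identification of the admissible $t$: a real number is a point of continuity of $\sigma$ in the entrywise sense exactly when all the sesquilinear forms $\inner{\delta_i}{E_A(\cdot)\delta_j}$ are continuous at it, which is equivalent to the no-jump condition $E_A(t)=E_A(t+0)$ that governs Proposition~\ref{prop:convergence-spectral-function}; at jump points the convergence genuinely fails, so the restriction to continuity points is essential rather than cosmetic.
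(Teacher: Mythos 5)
Your proof is correct and takes essentially the same route as the paper: the paper's proof likewise takes the finitely supported sequences $l_{\mathrm{fin}}(\nats)=\Span\{\delta_k\}_{k=1}^\infty$ as a core of $A$, observes that $A_N\varphi=A\varphi$ for all $N\ge s+n$ when $\varphi$ is supported in the first $s$ coordinates, and then applies Propositions~\ref{prop:weidmann} and \ref{prop:convergence-spectral-function}. You merely make explicit some details the paper leaves implicit (the band-width computation behind $A_N\varphi=A\varphi$, the passage from strong convergence of $E_{A_N}(t)$ to entrywise convergence of $\sigma_N(t)$, and the identification of continuity points of $\sigma$ with the no-jump condition $E_A(t)=E_A(t+0)$), all of which are sound.
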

 \begin{proof}
   Let $l_{\rm{fin}}(\nats)$ be the linear space of sequences with a
   finite number of nonzero elements. This space is a core of the
   operator $A$. Given an element
   $\varphi=\sum_{k=1}^{s}\varphi_k\delta_k\in l_{\text{fin}}(\nats)$, one
   verifies that, for all $N\geq N_0=s+n$,
   $A_N\varphi=A\varphi$. Therefore, the conditions of
   Proposition~\ref{prop:weidmann} are satisfied. So, by
   Proposition~\ref{prop:convergence-spectral-function}, one obtains
   the result.
 \end{proof}
 \begin{corollary}
   \label{cor:moment_measure_without_t}
   For any $k\in\nats\cup\{0\}$, the integral
   \begin{equation*}
     \int_\reals t^kd\sigma
   \end{equation*}
   converges. Moreover, $\int_\reals d\sigma$ is the identity matrix.
 \end{corollary}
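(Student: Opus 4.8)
The plan is to exploit the fact that $A$ is a band operator, so that each basis vector lies in the domain of every power of $A$, and then to read off the moments directly from the spectral theorem.

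First I would dispose of the ``moreover'' part, which is immediate: since $E_A$ is a resolution of the identity, $\int_\reals dE_A = I$ in the strong sense, whence the $(i,j)$ entry of $\int_\reals d\sigma$ equals $\inner{\delta_i}{E_A(\reals)\delta_j}=\inner{\delta_i}{\delta_j}=\delta_{ij}$. Alternatively, this is exactly what one obtains ``directly from the previous lemma'': each $A_N$ is self-adjoint on all of $\cH$, so $E_{A_N}(\reals)=I$ and hence $\int_\reals d\sigma_N=I$ for every $N$, and Lemma~\ref{lem:sepctral-function-convergence} lets one pass to the limit.

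For the finiteness of the higher moments, the key observation is that $\delta_j\in\dom(A^k)$ for every $k\in\nats$. Indeed, since $\mathcal{A}$ has only $2n+1$ nonzero diagonals, the vector $A\delta_j$ is supported in $\{j-n,\dots,j+n\}$; by induction $A^k\delta_j$ is a finitely supported vector, hence lies in $l_{\mathrm{fin}}(\nats)\subset\dom(A)$, and in particular $\delta_j\in\dom(A^k)$. The spectral theorem then gives, for the diagonal entries,
\[
\int_\reals t^{2k}\,d\inner{\delta_j}{E_A(t)\delta_j}=\norm{A^k\delta_j}^2<\infty\,,
\]
and, $\sigma_{jj}$ being a nondecreasing (positive) function, every diagonal moment $\int_\reals\abs{t}^{m}\,d\sigma_{jj}$ is finite. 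For the off-diagonal entries I would use that $E_A(\Delta)$ is an orthogonal projection, which yields $\abs{\inner{\delta_i}{E_A(\Delta)\delta_j}}\le \inner{\delta_i}{E_A(\Delta)\delta_i}^{1/2}\inner{\delta_j}{E_A(\Delta)\delta_j}^{1/2}$ for every Borel set $\Delta$; applying the Cauchy--Schwarz inequality for the resulting measures gives
\[
\abs{\int_\reals t^k\,d\sigma_{ij}}\le\left(\int_\reals\abs{t}^{k}\,d\sigma_{ii}\right)^{1/2}\left(\int_\reals\abs{t}^{k}\,d\sigma_{jj}\right)^{1/2}<\infty\,,
\]
establishing absolute convergence of every entrywise moment. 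If one wishes to route the argument through Lemma~\ref{lem:sepctral-function-convergence} instead, one notes that the band structure forces $A_N^k\delta_j=A^k\delta_j$ once $N$ exceeds the support of $A^k\delta_j$, so $\int_\reals t^{2k}\,d\sigma_N(j,j)=\norm{A^k\delta_j}^2$ is eventually independent of $N$; this uniform bound together with the pointwise convergence of the lemma and a Fatou-type argument again yields finiteness of $\int_\reals t^{2k}\,d\sigma(j,j)$.

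The main obstacle is the passage from the finiteness of the diagonal (even) moments to the absolute convergence of all entrywise moments, i.e. controlling the complex-valued off-diagonal spectral measures; this is precisely what the operator Cauchy--Schwarz estimate above handles. A secondary technical point, should one follow the route through the lemma, is that the convergence $\sigma_N\to\sigma$ is guaranteed only at points of continuity, so one must first supply the $N$-independent tail control furnished by the stabilization of $\norm{A^k\delta_j}^2$ before passing the moments to the limit.
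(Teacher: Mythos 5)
Your proposal is correct, but it proves the corollary by a genuinely different route than the paper. The paper's proof is a two-line reduction to previously established machinery: the pointwise convergence $\sigma_N\to\sigma$ of Lemma~\ref{lem:sepctral-function-convergence} (obtained from strong resolvent convergence of the finite sections $A_N$) is combined with the Helly-type moment-passage argument of Lemma~\ref{lem:subsequence-converges-integral}, whose tail estimate rests on the stabilization of the moments of the finite-dimensional spectral functions; the ``moreover'' part is, as in your first paragraph, just $E_A(\reals)=I$. You instead bypass the finite-section approximation entirely and argue directly on the infinite-dimensional self-adjoint operator: the band structure gives $A^k\delta_j\in l_{\rm fin}(\nats)$, hence $\delta_j\in\dom(A^k)$ and $\int_\reals t^{2k}\,d\sigma_{jj}=\norm{A^k\delta_j}^2<\infty$, and the projection-valued Cauchy--Schwarz estimate $\abs{\sigma_{ij}(\Delta)}\le\sigma_{ii}(\Delta)^{1/2}\sigma_{jj}(\Delta)^{1/2}$ upgrades this to absolute convergence of every entrywise moment --- a point the paper never makes explicit, and which is actually a slightly stronger conclusion than bare convergence. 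Your argument is more elementary and self-contained (no Helly selection, no moment stabilization), and correctly identifies the one nontrivial step, namely controlling the complex off-diagonal measures. What the paper's route buys in exchange is information used downstream: running Lemma~\ref{lem:subsequence-converges-integral} on $\sigma_N$ yields not just finiteness but the convergence $\int_\reals t^k\,d\sigma_N\to\int_\reals t^k\,d\sigma$, which is precisely what is invoked later (e.g.\ in the proof of Theorem~\ref{thm:last-one} via \eqref{eq:convergence-measures-sa}); your secondary route through the lemma, using $A_N^k\delta_j=A^k\delta_j$ for $N$ large to get $N$-independent tail control, is essentially a correct reconstruction of that mechanism and shows you saw why the continuity-point restriction in \eqref{eq:converges-sigmas} is harmless.
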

 \begin{proof}
   The first part of the assertion is a consequence of
   Lemmas~\ref{lem:subsequence-converges-integral} and
   \ref{lem:sepctral-function-convergence}. The second part follows
   from the fact that $\sigma$ is the spectral function of the
   self-adjoint operator $A$.
 \end{proof}
On the basis of the previous result, let us denote
\begin{equation*}
  S_k:=\int_\reals t^kd\sigma
\end{equation*}
for any $k\in\nats\cup\{0\}$.
\begin{definition}
  \label{def:T-spectral-measures}
  Given the spectral function $\sigma$ of the self-adjoint operator
  $A$, denote
  \begin{equation*}
    \sigma_{\mathscr{T}}:=\mathscr{T}\sigma\mathscr{T}^*\,,
  \end{equation*}
  where $\mathscr{T}$ is a matrix satisfying \ref{item:prop-T}) and
  \ref{item:prop-T-2}) (see \eqref{eq:initial-condition}).
\end{definition}
Using (\ref{eq:finite-measure-finite-first-moment}), one obtains
\begin{equation}
\label{eq:convergence-measures-sa}
 \sigma_N^\mathscr{T}(t)\xrightarrow[N\rightarrow\infty]{}
\sigma_\mathscr{T}(t)\,,\quad \text{for }t 
\text{ being a point of continuity of } \sigma(t)\,,
\end{equation}
where $\sigma^{\mathscr{T}}_N$ is the function given in (\ref{eq:spec-func-form}).
It also holds that
\begin{equation}
\label{eq:infinite-moment-for-t}
  \mathscr{T}S_k\mathscr{T}^*=\int_{\reals}t^k\,d\sigma_{\mathscr{T}}=:
\widetilde{S}_k(\mathscr{T}).
\end{equation}

\begin{lemma}
  \label{lem:moment-problem}
 For any matrix $\mathscr{T}$ satisfying \ref{item:prop-T}) and
  \ref{item:prop-T-2}), the
 function $\sigma_{\mathscr{T}}$, given in
 Definition~\ref{def:T-spectral-measures}, is in
 $\mathfrak{M}_d(n,\infty)$ (see Definition~\ref{def:class-sigma-infinite}).
\end{lemma}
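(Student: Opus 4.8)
The plan is to verify first that $\sigma_{\mathscr{T}}$ belongs to $\mathfrak{M}(n,\infty)$ and then to establish the determinacy of its moment problem. Note that conditions~\ref{item:prop-T}) and \ref{item:prop-T-2}) make $\mathscr{T}$ invertible, its determinant being $\prod_{i=1}^n t_{ii}\neq 0$; consequently the increments of $\sigma_{\mathscr{T}}=\mathscr{T}\sigma\mathscr{T}^*$ vanish exactly where those of $\sigma$ do, so both functions share the same growth points. That these are infinitely many follows as in Corollary~\ref{cor:measure-supp}: passing to the limit in the orthonormality of Proposition~\ref{prop:ortonormal-p-L2-finite} (the integrands being fixed polynomials, one uses the moment convergence behind Corollary~\ref{cor:moment_measure_without_t}) shows that $\{\pb{p}_k\}_{k=1}^\infty$ is an infinite orthonormal system in $L_2(\reals,\sigma_{\mathscr{T}})$, which forces this space to be infinite dimensional. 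The existence of all moments and the invertibility of the zeroth one are immediate from Corollary~\ref{cor:moment_measure_without_t} together with \eqref{eq:infinite-moment-for-t}, the latter giving $\int_\reals d\sigma_{\mathscr{T}}=\mathscr{T}S_0\mathscr{T}^*=\mathscr{T}\mathscr{T}^*$ since $S_0=\int_\reals d\sigma=I$. Hence $\sigma_{\mathscr{T}}\in\mathfrak{M}(n,\infty)$.

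It remains to prove that the problem generated by $\{\widetilde{S}_k(\mathscr{T})\}_{k=0}^\infty$ is determinate, and here I would first reduce to the case $\mathscr{T}=I$. Since $\mathscr{T}$ is invertible, the assignment $\tau\mapsto\mathscr{T}^{-1}\tau(\mathscr{T}^{-1})^*$ is a bijection between the solutions of the moment problem for $\{\widetilde{S}_k(\mathscr{T})\}$ and those for $\{S_k\}$; therefore one problem is determinate if and only if the other is, and it suffices to treat the moments $\{S_k\}_{k=0}^\infty$ of the spectral function $\sigma$ of the self-adjoint operator $A$.

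For this central step the plan is to exploit the equivalence between determinacy of a matrix Hamburger moment problem and essential self-adjointness of the associated operator of multiplication by the independent variable on the vector polynomials (cf. \cite{MR0184042} for the scalar theory and \cite{MR0222718,MR1882637} for the block/matrix setting). Concretely, any solution $\tau$ of the problem for $\{S_k\}$ renders $\{\pb{p}_k\}_{k=1}^\infty$ orthonormal and every $\pb{q}_j$ null in $L_2(\reals,\tau)$ --- these being moment identities --- so that, through the correspondence $\pb{p}_k\leftrightarrow\delta_k$, the multiplication operator restricted to the images of the vector polynomials is unitarily equivalent to $A\!\upharpoonright_{l_{\mathrm{fin}}(\nats)}$ acting on the closure of the polynomials. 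Since $A=A^*$ and $l_{\mathrm{fin}}(\nats)$ is a core of $A$ (see the proof of Lemma~\ref{lem:sepctral-function-convergence}), this restriction is essentially self-adjoint; thus the deficiency indices of the moment operator vanish, the problem is determinate, and $\sigma_{\mathscr{T}}\in\mathfrak{M}_d(n,\infty)$. The hard part will be making the correspondence with $A\!\upharpoonright_{l_{\mathrm{fin}}(\nats)}$ precise in the presence of degenerations: one must check, using Corollary~\ref{cor:basis-vector}, that the span of $\{\pb{p}_k\}$ exhausts all vector polynomials modulo the zero class generated by $\{z^l\pb{q}_j\}$, and one must quote the matrix version of the determinacy/self-adjointness equivalence in a form where these degenerate null directions do not alter the deficiency indices.
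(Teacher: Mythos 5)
Your proposal is correct and follows essentially the same route as the paper: the paper's two-sentence proof obtains determinacy of $\{S_k\}_{k=0}^\infty$ by citing \cite[Sec.\,2]{MR1882637} --- the determinacy/essential-self-adjointness correspondence you sketch, resting on $A=A^*$ with $l_{\mathrm{fin}}(\nats)$ a core, is exactly what that citation packages --- and then transfers uniqueness to $\{\widetilde{S}_k(\mathscr{T})\}_{k=0}^\infty$ via \eqref{eq:infinite-moment-for-t}, which is your congruence bijection $\tau\mapsto\mathscr{T}^{-1}\tau(\mathscr{T}^{-1})^*$. Your preliminary verification that $\sigma_{\mathscr{T}}\in\mathfrak{M}(n,\infty)$ and your caveat about the degenerate null directions only spell out details the paper leaves implicit in the citation.
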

\begin{proof}
  It follows from \cite[Sec.\,2]{MR1882637} that the sequence
  $\{S_k\}_{k=0}^\infty$ defines a determinate moment problem. In view
  of (\ref{eq:infinite-moment-for-t}), the sequence
  $\{\widetilde{S}_k(\mathscr{T})\}_{k=0}^\infty$ also has only one solution for
  any $\mathscr{T}$.
\end{proof}


\section{Reconstruction of the matrix}
\label{sec:Reconstruction}
In this section, the starting point is a matrix valued function
$\widetilde{\sigma}\in\mathfrak{M}(n,\infty)$
(see Definition~\ref{def:class-sigma-infinite}) and we construct a
matrix $\mathcal{A}$
in the class $\mathcal{M}(n,\infty)$
from this function. Furthermore, we verify that, for some matrix
$\mathscr{T}$
which gives the initial conditions (see \eqref{eq:initial-condition}),
$\widetilde{\sigma}$
is the spectral function of the reconstructed matrix $\mathcal{A}$.
Hence, any matrix in $\mathcal{M}(n,\infty)$
can be reconstructed from its function in $\mathfrak{M}(n,\infty)$.

Consider the Hilbert space $L_2(\reals,\widetilde{\sigma})$ with
$\widetilde{\sigma}\in\mathfrak{M}(n,\infty)$. From what has been said, either
there are polynomials of zero norm in this space or there are not.
Let us apply the Gram-Schmidt procedure of orthonormalization to the
sequence of vector polynomials given by
\eqref{eq:canonical-vector-polynomial}. If there exist nozero polynomials
whose norm is zero, then the Gram-Schmidt algorithm yields
vector polynomials of zero norm. Indeed, let $\pb{r}\not\equiv 0$ be a vector
polynomial of zero norm of minimal height $h_1$ (that is, any nonzero
polynomial of zero norm has height no less than $h_1$), and let
$\{\tb{p}_k\}_{k=1}^{h_1}$ be the orthonormalized vector polynomials
obtained by the first $h_1$ iterations of the Gram-Schmidt
procedure. Hence, if one defines
\begin{equation*}
  \pb{s}=\pb{e}_{h_1+1}-\sum_{i=1}^{h_1}\inner{\tb{p}_i}{\pb{e}_{h_1+1}}\tb{p}_i\,,
\end{equation*}
then, in view of the fact that $h(\tb{p}_k)=k-1$ for $k=1,\dots, h_1$,
and taking into account \cite[Thm.\,2.1]{2014arXiv1401.5384K},
one has
\begin{equation*}
\pb{e}_{h_1+1}=a\pb{r}+\sum_{i=1}^{h_1}a_i\tb{p}_i
\end{equation*}
which in turn leads to
\begin{equation}
\label{eq:gram-schmidt-vector}
\pb{s}=a\pb{r}+\sum_{k=1}^{h_1}\widetilde{a}_k\tb{p}_k\,.
\end{equation}
This implies that $\norm{\pb{s}}_{L_2(\reals,\widetilde{\sigma})}=0$
since $\inner{\pb{s}}{\pb{r}}_{L_2(\reals,\widetilde{\sigma})}=0$ and
$\pb{s}\perp\tb{p}_k$ for $k=1,\dots,h_1$ by construction. Thus, the
Gram-Schmidt procedure yields vector polynomials of zero norm.

Having found a vector polynomials of zero norm,
one continues with the procedure taking the next vector of the
sequence \eqref{eq:canonical-vector-polynomial}.  Observe that if the
Gram-Schmidt technique has produced a vector polynomial of zero norm
$\pb{q}$ of height $h$,
then for any integer number $l$, the vector polynomial $\pb{t}$ that
is obtained at the $h+1+nl$-th iteration of the Gram-Schmidt
process, viz.,
\begin{equation*}
  \pb{t}=\pb{e}_{h+1+nl}-\sum_{h(\tb{p}_i)<h+nl}
\inner{\tb{p}_i}{\pb{e}_{h+1}}\tb{p}_i\,,
\end{equation*}
satisfies that $\norm{\pb{t}}_{L_2(\reals,\widetilde{\sigma})}=0$ (for all
$l\in\nats$). Indeed, due to \cite[Thm.\,2.1]{2014arXiv1401.5384K},
one has
\begin{equation*}
\pb{e}_{h+1+nl}=z^l\pb{q}+\sum\limits_{h(\tb{p}_i)<h+nl} 
c_i\tb{p}_i+\sum\limits_{h(\pb{r})<h+nl} \pb{r}\,,
\end{equation*}
where each $\pb{r}$
is a vector polynomial of zero norm obtained from the Gram-Schmidt procedure.
\begin{remark}
\label{rem:infinite-support}
Since $\widetilde{\sigma}$ has an infinite number of growth points,
the HIlbert space $L_2(\reals,\widetilde{\sigma})$ is infinite
dimensional \cite[Sec.\,72]{MR1255973}. Thus,
the Gram-Schmidt procedure renders an infinite sequence of orthonormal
vectors.
\end{remark}


The following flow chart shows that the Gram-Schmidt procedure applied
to the sequence \eqref{eq:canonical-vector-polynomial} gives not only
the orthonormalized sequence, but also a sequence of null vector
polynomials such that at any step of the algorithm these two sequences
together are a basis of the space of vector polynomials (see
\cite[Thm.\,2.1]{2014arXiv1401.5384K} and compare with
\eqref{eq:basis-p-q}).

\tikzstyle{decision} = [diamond, draw, text width=7.6em, text badly
centered, node distance=4cm, inner sep=0pt, aspect=2, rounded corners]
\tikzstyle{decision1} = [diamond, draw, text width=4.5em, text badly
centered, node distance=4cm, inner sep=0pt, aspect=2, rounded corners]
\tikzstyle{block} = [rectangle, draw, text width=5em, text centered,
rounded corners, minimum height=3em] \tikzstyle{block1} = [rectangle,
draw, text width=8em, text centered, rounded corners, minimum
height=3em] \tikzstyle{line} = [draw, -latex'] \tikzstyle{cloud} =
[draw, ellipse, node distance=3cm, minimum height=2em, text width=3em,
text centered]
\begin{small}
  \begin{center}
    \begin{tikzpicture}[node distance = 3cm, auto]
      \node [block] (init) {$\widehat{\pb{p}}_k=\pb{e}_k$}; \node
      [cloud, left of=init] (initial-condition) {$s:=0$, $j:=1$,
        $k:=1$}; \node [block1, below of=init, node distance=2cm]
      (normalized)
      {$\tb{p}_{k-(j+s-1)}=\frac{\widehat{\pb{p}}_k}{\norm{\widehat{\pb{p}}_k}}$};
      \node [block, below of=normalized, node distance=2cm]
      (k-counter) {$k=k+1$}; \node [block, right of=k-counter, node
      distance=4.5cm] (s-counter) {$s=s+1$}; \node [block, left
      of=k-counter] (j-counter) {$j=j+1$}; \node [decision1, below
      of=k-counter, node distance=2.6cm] (j-1) {$j=1$?}; \node
      [decision, right of=j-1, node distance=4.5cm] (k-mod-n) {
        $k=h(\tb{q}_{i})+nl$?, $l\!\in\!\nats,\,i\!=\!1,\dots,j$};
      \node [block, left of=j-1, node distance=3cm] (q-poly)
      {$\tb{q}_j:=\widehat{\pb{p}}_k$}; \node [block1, below of=j-1,
      node distance=2.1cm, text width=10.5em] (G-S)
      {$\widehat{\pb{p}}_k=\pb{e}_k-\sum\limits_{i<k}
        \inner{\tb{p}_i}{\pb{e}_{k}}\tb{p}_i$}; \node [decision, below
      of=G-S,, node distance=2.2cm] (zero-norm) {
        $\norm{\widehat{\pb{p}}_k}=0$?};
      \path [line] (initial-condition) -- (init); \path [line] (init)
      -- (normalized); \path [line] (normalized) -- (k-counter); \path
      [line] (k-counter) -- (j-1); \path [line] (j-1) -- node [near
      start] {yes} (G-S); \path [line] (G-S) -- (zero-norm); \path
      [line] (zero-norm.east) -- node [near start] {no} ++(5.4cm,0) |-
      (normalized); \path [line] (zero-norm) -| node [near start]
      {yes} (q-poly); \path [line] (q-poly) -- (j-counter); \path
      [line] (s-counter)--(k-counter); \path [line]
      (j-counter)--(k-counter); \path [line] (j-1)-- node [near start]
      {no} (k-mod-n); \path [line] (k-mod-n)-- node [near start] {yes}
      (s-counter); \path [line] (k-mod-n)|- node [near start] {no}
      (G-S);
    \end{tikzpicture}
  \end{center}
\end{small}

Clearly, since the support of the measure is infinite according to
Definition~\ref{def:class-sigma-infinite}, one cannot obtain more than $n-1$
null vectors from the Gram-Schmidt procedure applied to the sequence
of vector polynomials given by
\eqref{eq:canonical-vector-polynomial}. Indeed, if one finds the
$n$-th vector polynomial $\tb{q}_n$, by repeating the argument
described above and taking into account
\begin{equation*}
  \{h(\tb{q}_1),\dots,h(\tb{q}_n)\}=\integers/n\integers\,,
\end{equation*}
one obtains that all the vectors provided by this procedure have zero
norm beginning from some vector. This would correspond to an infinite loop
in the left side of the flow chart and to a measure with finite support
since  $L_2(\reals,\widetilde{\sigma})$ would be finite
dimensional.
\begin{lemma}
  \label{lem:basis-throught-gram-schmidt}
  Any vector polynomial $\boldsymbol{r}(z)$ is a finite linear
  combination of
\begin{equation}
\label{eq:basis-gram-schimdt}
  \{\tb{p}_k(z): k\in\nats\} \cup \{z^l\tb{q}_j(z): 
l\in\nats\cup\{0\},\, j\in\{1,\dots,j_0\}\}\,.  
\end{equation}
\end{lemma}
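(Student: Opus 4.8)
The plan is to mirror the proof of Corollary~\ref{cor:basis-vector}: I would show that the heights of the vector polynomials listed in \eqref{eq:basis-gram-schimdt} run over every nonnegative integer exactly once, and then invoke \cite[Thm.\,2.1]{2014arXiv1401.5384K}, which guarantees that any family of vector polynomials with pairwise distinct heights exhausting $\nats\cup\{0\}$ is a basis of the space of vector polynomials, in the sense that every vector polynomial is a finite linear combination of its members. Thus the whole argument reduces to a bookkeeping statement about heights, now read off from the Gram--Schmidt construction encoded in the flow chart rather than from the recurrence \eqref{eq:recurrence-vector-plynomials}, so that this lemma plays the role that Lemma~\ref{lem:cover-all-heights} played in Section~\ref{sec:submatrices}.

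First I would record the height of each output of the procedure. Since $\pb{e}_{nk+i}$ has height $nk+i-1$ by Definition~\ref{def:height}, the sequence \eqref{eq:canonical-vector-polynomial} is processed in strict order of increasing height $0,1,2,\dots$; subtracting from $\pb{e}_k$ only projections onto previously produced $\tb{p}_i$ of strictly smaller height leaves the leading term untouched, so whenever a step yields a nonzero vector its height equals that of the corresponding $\pb{e}_k$. Consequently the orthonormal polynomials $\{\tb{p}_k\}$ acquire, in order, exactly those heights at which the normalization does not collapse, while the heights at which $\norm{\widehat{\pb{p}}_k}=0$ are precisely those carried by the null polynomials (as already observed for the initial segment through $h(\tb{p}_k)=k-1$).

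Next I would identify the null outputs. The discussion preceding the statement shows that once a null polynomial $\tb{q}_j$ of height $h$ appears, the vector obtained $nl$ steps later is $z^l\tb{q}_j$, again of zero norm, with height $h+nl$; the test ``$k=h(\tb{q}_i)+nl$'' in the flow chart is exactly what keeps these shifts from being recomputed. Invoking \cite[Thm.\,2.1]{2014arXiv1401.5384K} once more, every null vector produced is, modulo lower-height $\tb{p}_i$'s, of this form, so the null outputs are exhausted by $\{z^l\tb{q}_j : l\in\nats\cup\{0\},\ j\in\{1,\dots,j_0\}\}$. Because the generators have pairwise distinct residues modulo $n$ (the property $\{h(\tb{q}_1),\dots\}=\integers/n\integers$ used above, which also forces $j_0\le n-1$ since otherwise $L_2(\reals,\widetilde{\sigma})$ would be finite dimensional, contradicting Remark~\ref{rem:infinite-support}), the heights $h(\tb{q}_j)+nl$ cover bijectively a union of residue classes mod $n$, while the heights of $\{\tb{p}_k\}$ fill in precisely the complementary integers. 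Hence \eqref{eq:basis-gram-schimdt} realizes each element of $\nats\cup\{0\}$ as a height exactly once, and the basis property follows from \cite[Thm.\,2.1]{2014arXiv1401.5384K}.

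The main obstacle is the identification carried out in the previous paragraph: one must rule out that Gram--Schmidt produces an ``unexpected'' null vector at a height outside $\{h(\tb{q}_j)+nl\}$ and, symmetrically, a $\tb{p}_k$ at a height already claimed by a shift $z^l\tb{q}_j$. Both are controlled by the structural theorem \cite[Thm.\,2.1]{2014arXiv1401.5384K} together with the minimal-height argument already used around \eqref{eq:gram-schmidt-vector}, which shows that a nonzero null vector of minimal height forces all of its $n$-shifts to be null as well; once this is in place, the remainder is the residue-class counting above.
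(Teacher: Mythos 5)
Your proposal is correct and follows essentially the same route as the paper's proof: both rest on the observation that subtracting projections onto lower-height $\tb{p}_i$'s preserves the height $h(\pb{e}_k)=k-1$ (the paper's equation~(\ref{eq:GramScmidt-procedure})), so that the heights of $\{\tb{p}_k\}\cup\{z^l\tb{q}_j\}$ are in one-to-one correspondence with $\nats\cup\{0\}$, after which \cite[Thm.\,2.1]{2014arXiv1401.5384K} finishes the argument. The only difference is that you spell out the bookkeeping (null outputs being exactly the shifts $z^l\tb{q}_j$, distinct residues mod $n$, $j_0\le n-1$) that the paper delegates to the flow-chart discussion preceding the lemma.
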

\begin{proof}
  Note that the vector polynomials defined in
  (\ref{eq:canonical-vector-polynomial}) satisfy that
  $h(\pb{e}_i)=i-1$. 
 Due to the fact that
  \begin{equation}
    \label{eq:GramScmidt-procedure}
    h\left(\pb{e}_k-\!\!\!\!\sum_{h(\tb{p}_i)<k-1}\!\!
\inner{\tb{p}_i}{\pb{e}_k}\tb{p}_i\right)
    =h(\pb{e}_k)\,,
  \end{equation}
  one concludes that the heights of the set
  $\{\tb{p}_k(z)\}_{k=1}^N\cup\{z^l\tb{q}_i(z)\}_{i=1}^n$
  ($l\in\nats\cup\{0\}$) are in one-to-one correspondence with the set
  $\nats\cup\{0\}$.
  To complete the proof, it only remains to use
  \cite[Thm.\,2.1]{2014arXiv1401.5384K}.
\end{proof}

By the argumentation given above and the same resoning used in the
proof of
Theorem~\ref{thm:non-degenerate-case}~\ref{thm:case-degenarate}), one
arrives at the following assertion.
\begin{proposition}
  \label{prop:interpolation-infinite}
  Let $\widetilde{\sigma}$ be in $\mathfrak{M}(n,\infty)$. There exist
  at most $n-1$ vector polinomials $\{\tb{q}_i\}_{i=1}^{j_0}$
  ($j_0\le n-1$) such that any vector polynomial $\pb{r}$ of zero norm
  can be written as
  \begin{equation*}
    \pb{r}=\sum_{i=1}^{j_0}R_i\tb{q}_i\,,
  \end{equation*}
where $R_i$, for any $i\in\{1,\dots,j_0\}$, is a scalar polynomial.
\end{proposition}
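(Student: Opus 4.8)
The plan is to follow verbatim the scheme of the proof of Theorem~\ref{thm:non-degenerate-case}~\ref{thm:case-degenarate}), with the Gram--Schmidt output $\{\tb{p}_k\}$, $\{\tb{q}_j\}$ playing the role that the recurrence polynomials $\pb{p}_k$, $\pb{q}_j$ had there. The bound $j_0\le n-1$ needs no new argument: it was settled in the paragraph preceding the statement, where it was shown that an $n$-th null polynomial $\tb{q}_n$ would make $\{h(\tb{q}_1),\dots,h(\tb{q}_n)\}$ exhaust $\integers/n\integers$, force every later Gram--Schmidt vector to have zero norm, and thereby render $L_2(\reals,\widetilde{\sigma})$ finite dimensional, contradicting Remark~\ref{rem:infinite-support}. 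So the task reduces to proving the representation formula.

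First I would isolate the one genuinely space-specific fact: in $L_2(\reals,\widetilde{\sigma})$, which a priori carries only a positive \emph{semidefinite} form, a vector of zero norm is orthogonal to everything. This is immediate from Cauchy--Schwarz, $\abs{\inner{\pb{g}}{\pb{f}}}^2\le\norm{\pb{g}}^2\norm{\pb{f}}^2$, so that $\norm{\pb{g}}=0$ gives $\inner{\pb{g}}{\pb{f}}=0$ for all $\pb{f}$. Applied to $\pb{g}=\tb{q}_j$ this yields $\inner{\tb{q}_j}{\pb{f}}=0$; moreover $\norm{R_j\tb{q}_j}^2=\int_\reals\abs{R_j}^2\inner{\tb{q}_j}{d\widetilde{\sigma}\,\tb{q}_j}=0$, so each $R_j\tb{q}_j$ is again null. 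In particular every combination $\sum_{j=1}^{j_0}R_j\tb{q}_j$ lies in the zero class, which gives the trivial inclusion.

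For the asserted representation, take any vector polynomial $\pb{r}$ with $\norm{\pb{r}}_{L_2(\reals,\widetilde{\sigma})}=0$ and expand it, via Lemma~\ref{lem:basis-throught-gram-schmidt}, as
\begin{equation*}
  \pb{r}=\sum_{k=1}^{l}c_k\tb{p}_k+\sum_{j=1}^{j_0}R_j\tb{q}_j\,.
\end{equation*}
Pairing with $\tb{p}_m$ and using the orthonormality of $\{\tb{p}_k\}$ together with $\inner{\tb{p}_m}{R_j\tb{q}_j}=0$ (the null vectors being orthogonal to all), one recovers $c_m=\inner{\tb{p}_m}{\pb{r}}$; but $\pb{r}$ is null, so the right-hand side vanishes for every $m$. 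Hence all the $c_k$ are zero and $\pb{r}=\sum_{j=1}^{j_0}R_j\tb{q}_j$, as claimed.

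I expect no substantive obstacle: the entire content is the orthogonal-projection identity already exploited in Theorem~\ref{thm:non-degenerate-case}. The only point that demands care --- and the only place the degenerate setting intrudes --- is that $L_2(\reals,\widetilde{\sigma})$ is not assumed to be a strict inner-product space, so the vanishing of the $\tb{p}_k$-coefficients must be argued through Cauchy--Schwarz for null vectors rather than through any nondegeneracy of the form.
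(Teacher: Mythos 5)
Your proof is correct and takes essentially the same route as the paper, whose proof consists precisely of invoking ``the argumentation given above'' (the paragraph bounding the number of null Gram--Schmidt vectors by $n-1$) together with the reasoning of Theorem~\ref{thm:non-degenerate-case}~\ref{thm:case-degenarate}): expand $\pb{r}$ via Lemma~\ref{lem:basis-throught-gram-schmidt} and annihilate the $\tb{p}_k$-coefficients because $\pb{r}$ lies in the zero class. Your explicit Cauchy--Schwarz argument that zero-norm vectors are orthogonal to everything in the semidefinite space $L_2(\reals,\widetilde{\sigma})$ merely makes precise a step the paper leaves implicit; it is not a different approach.
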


Let $\widetilde{\sigma}(t)$ be a matrix valued function in
$\mathfrak{M}(n,\infty)$
and consider the sequences $\{\tb{p}_k\}_{k\in\nats}$ and
$\{\tb{q_i}\}_{i=1}^{j_0}$ obtained by applying the Gram-Schimdt
process to the sequence \eqref{eq:canonical-vector-polynomial}.
Since for any $k\in\nats$ there exists $l\in\nats$ such that
$h(z\tb{p}_k)\le h(\tb{p}_l)$, one has by
Lemma~\ref{lem:basis-throught-gram-schmidt} that
  \begin{equation}
    \label{eq:representation-z-q}
    z\widetilde{\boldsymbol{p}}_k(z)=
\sum_{i=1}^{l}c_{ik}\widetilde{\boldsymbol{p}}_i(z)+
\sum_{j=1}^{j_0}R_{kj}(z)\tb{q}_j(z)\,,
  \end{equation}
where $c_{ik}\in\mathbb{C}$ and $R_{kj}(z)$ is a scalar
polynomial.

\begin{remark}
\label{rem:items-for-coefficients}
By comparing the heights of the left and right hand sides of
(\ref{eq:representation-z-q}), one obtains the following relations
given in items
~\ref{item:heigt-equal-zp-p1}) and \ref{item:heigt-equal-zp-p2}) below. To
verify item~\ref{item:heigt-equal-zp-p}), one has to take into account
that the leading coefficient of $\pb{e}_k$ is positive for $k\in\nats$
and therefore the Gram-Schmidt procedure yields the sequence
$\{\tb{p}_k\}_{k=1}^{\infty}$ with its elements having positive
leading coefficients (cf. \cite[Rem.\,4]{2014arXiv1409.3868K}).
\begin{enumerate}[$i$)]
\item $c_{lk}=0$ if $h(z\tb{p}_k)<h(\tb{p}_l)$\label{item:heigt-equal-zp-p1},
\item $R_{kj}(z)=0$ if $h(z\tb{p}_k)<h(R_{kj}(z)\tb{q}_j)$\label{item:heigt-equal-zp-p2},
\item $c_{lk}>0$ if there is $l\in\nats$ such that $h(z\tb{p}_k)=h(\tb{p}_l)$\label{item:heigt-equal-zp-p}.
\end{enumerate}
\end{remark}
Clearly (recall that our inner product is antilinear in its first argument),
\begin{equation}
\label{eq:def-c-ki-inner}
  c_{lk}=
\inner{\widetilde{\boldsymbol{p}}_l}{z\widetilde{\boldsymbol{p}}_k}_{L_2(\mathbb{R},\widetilde{\sigma})}
=\inner{z\widetilde{\boldsymbol{p}_l}}{\widetilde{\boldsymbol{p}}_k}_{L_2(\mathbb{R},\widetilde{\sigma})}
=c_{kl}\,.
\end{equation}
In \cite[Sec.\,3]{2014arXiv1409.3868K}, a
reconstruction algorithm is provided for recovering the finite band
matrix associated to the operator $A_N$ from its spectral
function. The proof of \cite[Lem.\,4.1]{2014arXiv1409.3868K} proves the
following assertion
\begin{proposition}
  \label{prop:band-matrix}
If $\abs{l-k}>n$. Then, the complex numbers $c_{ki}$ in
(\ref{eq:representation-z-q}) obey
\begin{equation*}
  c_{kl}=c_{lk}=0\,.
\end{equation*}
\end{proposition}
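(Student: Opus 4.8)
The plan is to prove the statement by a pure height count, reading off the vanishing of $c_{lk}$ directly from the expansion \eqref{eq:representation-z-q} and then invoking the symmetry $c_{lk}=c_{kl}$ to cover both cases at once. First I would record the two facts about heights that drive everything. By the construction of the orthonormal sequence via the Gram--Schmidt procedure applied to \eqref{eq:canonical-vector-polynomial}, together with $h(\pb{e}_k)=k-1$ and \eqref{eq:GramScmidt-procedure}, the vector polynomials satisfy $h(\tb{p}_k)=k-1$. Second, multiplication by the independent variable raises the height of any nonzero vector polynomial by exactly $n$: from Definition~\ref{def:height}, if $\pb{r}=(R_1,\dots,R_n)^t\not\equiv 0$ then $\deg(zR_j)=\deg(R_j)+1$ for each nonzero component, so $h(z\pb{r})=\max_j\{n\deg(R_j)+j-1\}+n=h(\pb{r})+n$. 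Applying this to $\tb{p}_k$ gives $h(z\tb{p}_k)=h(\tb{p}_k)+n=k+n-1$.

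With these two facts in hand the conclusion follows immediately. Suppose first that $l-k>n$, i.e. $l>k+n$. Then $h(\tb{p}_l)=l-1>k+n-1=h(z\tb{p}_k)$, so by item~\ref{item:heigt-equal-zp-p1}) of Remark~\ref{rem:items-for-coefficients} the coefficient $c_{lk}$ of $\tb{p}_l$ in \eqref{eq:representation-z-q} vanishes. For the opposite case $k-l>n$, I would use the symmetry $c_{lk}=c_{kl}$ established in \eqref{eq:def-c-ki-inner}: interchanging the roles of $k$ and $l$, the very same height comparison gives $h(\tb{p}_k)=k-1>l+n-1=h(z\tb{p}_l)$, whence $c_{kl}=0$ and therefore $c_{lk}=0$ as well. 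Combining both cases yields $c_{lk}=c_{kl}=0$ whenever $\abs{l-k}>n$, which is the assertion.

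Since the proof is essentially bookkeeping of heights, there is no serious obstacle; the two points that must not be glossed over are the identity $h(z\tb{p}_k)=h(\tb{p}_k)+n$, which should be justified directly from Definition~\ref{def:height} rather than assumed, and the fact that the null polynomials $\tb{q}_j$ play no role here. The latter is important because the coefficients $c_{ik}$ are read off as $\inner{\tb{p}_i}{z\tb{p}_k}_{L_2(\reals,\widetilde{\sigma})}$ precisely as in \eqref{eq:def-c-ki-inner}; this identification is legitimate because each $\tb{q}_j$ lies in the equivalence class of zero in $L_2(\reals,\widetilde{\sigma})$, so any inner product against it vanishes and the contribution $\sum_j R_{kj}\tb{q}_j$ in \eqref{eq:representation-z-q} is orthogonal to every $\tb{p}_i$.
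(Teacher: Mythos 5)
Your overall architecture is the right one: read $c_{lk}$ off \eqref{eq:representation-z-q} by a height comparison, note that the null part $\sum_j R_{kj}\tb{q}_j$ is invisible to inner products in $L_2(\reals,\widetilde{\sigma})$, and use the symmetry $c_{lk}=c_{kl}$ of \eqref{eq:def-c-ki-inner} to cover $k-l>n$. This is exactly the mechanism of the finite-dimensional proof that the paper invokes by citation to \cite[Lem.\,4.1]{2014arXiv1409.3868K}. However, your first ``fact,'' $h(\tb{p}_k)=k-1$, is false precisely in the situation this paper is built around. When $\widetilde{\sigma}$ admits nonzero null polynomials ($j_0>0$), the Gram--Schmidt algorithm discards the null vectors $\tb{q}_j$ and skips every iteration whose target height lies in $\{h(\tb{q}_i)+nl\}$; the flow chart records this through the reindexing $\tb{p}_{k-(j+s-1)}=\widehat{\pb{p}}_k/\|\widehat{\pb{p}}_k\|$. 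Consequently the heights of the orthonormal sequence omit the progressions $h(\tb{q}_j)+nl$ --- this is the content of Lemma~\ref{lem:basis-throught-gram-schmidt}, where the heights of $\{\tb{p}_k\}\cup\{z^l\tb{q}_j\}$ \emph{jointly} exhaust $\nats\cup\{0\}$ --- so $h(\tb{p}_m)=m-1$ holds only up to the first degeneration, and $h(\tb{p}_m)>m-1$ afterwards. Equation \eqref{eq:GramScmidt-procedure} gives the height of the candidate produced at the $k$-th \emph{iteration}, not of the $k$-th orthonormal polynomial, so it does not justify your claim. As a result, the identity $h(z\tb{p}_k)=k+n-1$ on which your count rests fails in the degenerate case, which is the case of interest.

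The gap is repairable, because you never needed exact heights, only strict monotonicity. The polynomials $\tb{p}_m$ are produced at strictly increasing iterations and carry pairwise distinct integer heights that increase with $m$; hence $h(\tb{p}_l)\ge h(\tb{p}_k)+(l-k)$ for $l>k$. Combining this with your second fact --- which is correct, and correctly derived from Definition~\ref{def:height}: $h(z\pb{r})=h(\pb{r})+n$ for $\pb{r}\not\equiv 0$ --- the hypothesis $l-k>n$ gives $h(\tb{p}_l)\ge h(\tb{p}_k)+l-k>h(\tb{p}_k)+n=h(z\tb{p}_k)$, so item~\ref{item:heigt-equal-zp-p1}) of Remark~\ref{rem:items-for-coefficients} yields $c_{lk}=0$, and the case $k-l>n$ follows by the same comparison with the roles of $k$ and $l$ exchanged together with $c_{lk}=c_{kl}$ from \eqref{eq:def-c-ki-inner}, exactly as you intended. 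Your closing observations (that the $\tb{q}_j$ contribution drops out of every inner product, and that the coefficient identification $c_{ik}=\inner{\tb{p}_i}{z\tb{p}_k}_{L_2(\reals,\widetilde{\sigma})}$ is thereby legitimate) are sound; with the monotonicity fix in place of the false height formula, your write-up becomes a correct self-contained version of the argument the paper outsources to the finite case.
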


Proposition~\ref{prop:band-matrix} shows that
$\{c_{lk}\}_{l,k=1}^{\infty}$ is a band matrix. Let us turn to the
question of characterizing the diagonals of
$\{c_{lk}\}_{l,k=1}^{\infty}$. It will be shown that they undergo the
kind of degeneration given in the Introduction.

For a fixed number $i\in\{0,\dots,n\}$, we define the numbers
\begin{equation}
  \label{eq:entries-matrix-inverse-prob}
  d^{(i)}_k:=c_{k+i, k}=c_{k, k+i}
\end{equation}
for $k\in\nats$. The proof of the following assertion repeats the one
of \cite[Lem.\,4.2]{2014arXiv1409.3868K}.
\begin{proposition}
  \label{prop:reconstruction-entries}
  Fix $j\in\{0,\dots,j_0-1\}$.
  \begin{enumerate}[$i$)]
  \item If $k$ is such that $ h(\tb{q}_{j})<
    h(z\tb{p}_k)<h(\tb{q}_{j+1})$, then $d_k^{(n-j)}>0$. Here one
    assumes that $h(\boldsymbol{q}_0):=n-1$.
\item If $k$ is such that $h(z\tb{p}_k)\geq h(\tb{q}_{j+1})$, then
  $d_k^{(n-j)}=0$.
  \end{enumerate}
\end{proposition}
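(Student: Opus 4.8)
The plan is to recast everything in terms of heights and then read off the sign of $c_{k+n-j,k}=d_k^{(n-j)}$ (see \eqref{eq:entries-matrix-inverse-prob}) directly from Remark~\ref{rem:items-for-coefficients}. The starting observation is that multiplication by $z$ raises the height of any nonzero vector polynomial by exactly $n$: writing the height of $\pb{r}$ as $n\deg(R_{j^*})+j^*-1$ for the index $j^*$ realizing the maximum in Definition~\ref{def:height}, passing from $\pb{r}$ to $z\pb{r}$ increases every $\deg(R_j)$ by one, hence increases every $n\deg(R_j)+j-1$ by $n$ and preserves the argmax; thus $h(z\tb{p}_k)=h(\tb{p}_k)+n$. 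I would also record from the Gram--Schmidt construction (the flow chart, Lemma~\ref{lem:basis-throught-gram-schmidt}, and Proposition~\ref{prop:interpolation-infinite}) three structural facts: the map $k\mapsto h(\tb{p}_k)$ is strictly increasing; the set of $\tb{p}$-heights is exactly the complement in $\nats\cup\{0\}$ of the generator heights $\{h(\tb{q}_i)+nl:l\geq0\}$; and, since the generators are produced in increasing order of height, $h(\tb{q}_1)<\dots<h(\tb{q}_{j_0})$. Note also $h(z\tb{p}_k)=h(\tb{p}_k)+n\geq n$, because $h(\tb{p}_1)=h(\pb{e}_1)=0$ by \eqref{eq:height-e}.

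The combinatorial heart of the argument is a counting step. I would count the $\tb{p}$-heights lying in the length-$n$ window $(h(\tb{p}_k),h(\tb{p}_k)+n]=(h(z\tb{p}_k)-n,h(z\tb{p}_k)]$. Each residue class modulo $n$ has exactly one representative in this window, and that representative is a generator height of $\tb{q}_i$ precisely when $h(\tb{q}_i)\leq h(z\tb{p}_k)$: if $h(\tb{q}_i)>h(z\tb{p}_k)$ the representative falls strictly below $h(\tb{q}_i)$ and is therefore a $\tb{p}$-height, whereas if $h(\tb{q}_i)\leq h(z\tb{p}_k)$ the representative has the form $h(\tb{q}_i)+nl$ with $l\geq0$. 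Hence, setting $M:=\#\{i:h(\tb{q}_i)\leq h(z\tb{p}_k)\}$, the window contains exactly $n-M$ heights of the $\tb{p}$-sequence, and since $k\mapsto h(\tb{p}_k)$ is increasing these are $h(\tb{p}_{k+1}),\dots,h(\tb{p}_{k+n-M})$, with $h(\tb{p}_{k+n-M})\leq h(z\tb{p}_k)<h(\tb{p}_{k+n-M+1})$.

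With this in hand the two cases follow from Remark~\ref{rem:items-for-coefficients} applied to the coefficient of $\tb{p}_{k+n-j}$ in the expansion \eqref{eq:representation-z-q}. In case $i$), the hypothesis $h(\tb{q}_j)<h(z\tb{p}_k)<h(\tb{q}_{j+1})$, together with the convention $h(\tb{q}_0)=n-1<n\leq h(z\tb{p}_k)$, gives $M=j$, so $k+n-j=k+n-M$ is the index with $h(\tb{p}_{k+n-j})=h(z\tb{p}_k)$. One must still confirm that $h(z\tb{p}_k)$ is a genuine $\tb{p}$-height and not a generator height: were $h(z\tb{p}_k)=h(\tb{q}_i)+nl$, then for $l\geq1$ one would get $h(\tb{p}_k)=h(\tb{q}_i)+n(l-1)$, a generator height, while for $l=0$ one would get $h(z\tb{p}_k)=h(\tb{q}_i)$, contradicting the strict inequalities. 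Remark~\ref{rem:items-for-coefficients}~$iii$) then yields $c_{k+n-j,k}>0$, i.e.\ $d_k^{(n-j)}>0$. In case $ii$), the hypothesis $h(z\tb{p}_k)\geq h(\tb{q}_{j+1})$ gives $M\geq j+1$, whence $k+n-j>k+n-M$ and therefore $h(\tb{p}_{k+n-j})>h(z\tb{p}_k)$; Remark~\ref{rem:items-for-coefficients}~$i$) then forces $c_{k+n-j,k}=0$.

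The main obstacle I anticipate is not any single estimate but the bookkeeping of heights: making the residue-counting argument airtight, in particular pinning down exactly when the unique window representative of a residue class is a generator height versus a $\tb{p}$-height, and verifying that the resulting index shift is precisely $n-M$. Once that is secured, both conclusions are immediate appeals to the sign information already packaged in Remark~\ref{rem:items-for-coefficients}. This is the infinite-dimensional transcription of \cite[Lem.\,4.2]{2014arXiv1409.3868K}, the passage being justified by Lemma~\ref{lem:basis-throught-gram-schmidt} and Proposition~\ref{prop:interpolation-infinite}, which guarantee the same height partition here as in the finite case.
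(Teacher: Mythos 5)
Your proposal is correct, and it follows the same route as the paper: the paper's ``proof'' is simply the remark that the argument of \cite[Lem.\,4.2]{2014arXiv1409.3868K} carries over verbatim, and what you have written is precisely that height-counting argument (window of $n$ consecutive heights, one residue class per slot, generator slots versus $\tb{p}$-slots, then items $i$)--$iii$) of Remark~\ref{rem:items-for-coefficients}) transcribed to the infinite setting, with the transfer justified, as in the paper, by Lemma~\ref{lem:basis-throught-gram-schmidt}. The only step you leave implicit --- that distinct generators occupy distinct residue classes mod $n$, which your count $M$ needs to avoid overcounting --- is already contained in the one-to-one correspondence of heights invoked in the proof of Lemma~\ref{lem:basis-throught-gram-schmidt}, so there is no gap.
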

\begin{corollary}
  \label{cor:reconstructed-in-class}
  If $c_{ik}$ are the coefficients given in
  (\ref{eq:representation-z-q}), then the matrix
  $\{c_{kl}\}_{k,l=1}^{\infty}$ is in $\mathcal{M}(n,\infty)$ and it
  is the matrix representation of a symmetric restriction of the
  operator of multiplication by the independent variable in
  $L_2(\reals,\widetilde{\sigma})$. (The restriction could be
  improper, i.\,e., the case when the restriction coincides with the
  multiplication operator is not excluded).
\end{corollary}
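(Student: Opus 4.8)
The plan is to prove the two halves of the statement in turn: that $\{c_{kl}\}_{k,l=1}^{\infty}$ meets the structural requirements of Definition~\ref{def:matrices-degenerate}, and that it is the matrix of multiplication by the independent variable restricted to $\Span\{\tb{p}_k\}$. For the first half I would assemble the facts already in hand. Equation~\eqref{eq:def-c-ki-inner} gives $c_{lk}=c_{kl}$, and the entries are real because the $\tb{p}_k$ arise from Gram--Schmidt applied to the real polynomials \eqref{eq:canonical-vector-polynomial} against the real matrix measure $\widetilde{\sigma}$; thus $\{c_{kl}\}$ is a real symmetric matrix. Proposition~\ref{prop:band-matrix} forces $c_{kl}=0$ for $\abs{k-l}>n$, so it has at most $2n+1$ diagonals. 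The degeneration profile demanded by Definition~\ref{def:matrices-degenerate} is then read off from Proposition~\ref{prop:reconstruction-entries}: for each $j\in\{0,\dots,j_0-1\}$ the diagonal $\mathcal{D}_{n-j}$, whose entries are $d_k^{(n-j)}=c_{k+n-j,k}$, stays positive while $h(\tb{q}_j)<h(z\tb{p}_k)<h(\tb{q}_{j+1})$ and vanishes once $h(z\tb{p}_k)\ge h(\tb{q}_{j+1})$.

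Converting this into the indices $\{m_j\}$ of the definition is the delicate part. For each $j$ one sets $m_{j+1}$ to be the smallest $k$ with $d_k^{(n-j)}=0$; the strict monotonicity $m_j<m_{j+1}$ and the bound $j_0\le n-1$ follow because the heights $h(\tb{q}_1),\dots,h(\tb{q}_{j_0})$ are distinct modulo $n$ (as recorded before Lemma~\ref{lem:basis-throught-gram-schmidt}), and the diagonal $\mathcal{D}_{n-j_0}$ fails to degenerate because no generator exists beyond $\tb{q}_{j_0}$, so that $j_0$ is exactly the number of null polynomials produced by the procedure. The remaining requirement $m_1>1$ imposed by \eqref{eq:first-m} translates into $h(\tb{q}_1)>n$, equivalently into the linear independence in $L_2(\reals,\widetilde{\sigma})$ of the constant vectors $\pb{e}_1,\dots,\pb{e}_n$ together with $z\pb{e}_1$; this is the one genuinely measure-dependent input, and I would isolate and verify it from the nondegeneracy carried by $\widetilde{\sigma}\in\mathfrak{M}(n,\infty)$ (the positive-definiteness of $S_0$ already forcing $\tb{p}_1,\dots,\tb{p}_n$ to be nonnull of heights $0,\dots,n-1$).

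For the second half the key observation is that the generators are null. Since $\widetilde{\sigma}$ has finite moments, each $z\tb{p}_k$ is again a polynomial and hence lies in $L_2(\reals,\widetilde{\sigma})$, so $D_0:=\Span\{\tb{p}_k\}_{k\in\nats}$ is contained in the domain of the multiplication operator $\Lambda$ by the independent variable. Because every $\tb{q}_j$ represents the zero class, the identity \eqref{eq:representation-z-q} collapses in $L_2(\reals,\widetilde{\sigma})$ to the finite relation $z\tb{p}_k=\sum_{\abs{i-k}\le n}c_{ik}\tb{p}_i$, which shows that $\Lambda$ carries $D_0$ into $\overline{\Span\{\tb{p}_k\}}$. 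Its restriction $\Lambda\upharpoonright_{D_0}$ is symmetric, since $t$ is real on the support of $\widetilde{\sigma}$ and therefore $\inner{f}{zg}=\inner{zf}{g}$ for $f,g\in D_0$, and by orthonormality of $\{\tb{p}_k\}$ its matrix entry in position $(l,k)$ equals $\inner{\tb{p}_l}{z\tb{p}_k}=c_{lk}$. This establishes the claimed identification.

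I expect the index bookkeeping of the second paragraph to be the main obstacle, since it is there that the height inequalities of Proposition~\ref{prop:reconstruction-entries} must be matched precisely against \eqref{eq:first-m}--\eqref{eq:second-m}, and in particular where the nondegeneracy giving $m_1>1$ must be pinned down. By contrast, the parenthetical on improperness needs no separate argument: if the polynomials are dense in $L_2(\reals,\widetilde{\sigma})$ and $D_0$ is a core for $\Lambda$, then the closure of $\Lambda\upharpoonright_{D_0}$ is $\Lambda$ itself and the restriction is improper, whereas otherwise it is proper; this is exactly the density dichotomy already discussed in Remark~\ref{rem:s-a-extensions-measures}.
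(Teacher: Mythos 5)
Your overall route coincides with the paper's own proof, which is exactly as short as your outline suggests: membership in $\mathcal{M}(n,\infty)$ is read off from \eqref{eq:entries-matrix-inverse-prob} together with Propositions~\ref{prop:band-matrix} and~\ref{prop:reconstruction-entries}, and the identification with a symmetric restriction of the multiplication operator comes from \eqref{eq:def-c-ki-inner}; your collapse of \eqref{eq:representation-z-q} in $L_2(\reals,\widetilde{\sigma})$ to the finite relation $z\tb{p}_k=\sum_{\abs{i-k}\le n}c_{ik}\tb{p}_i$ (the $\tb{q}_j$ being null) is the same observation the paper relies on, merely made explicit, and your remark on improperness agrees with Remark~\ref{rem:s-a-extensions-measures}.

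The one step you single out as needing verification --- that $m_1>1$, equivalently $h(\tb{q}_1)>n$ --- is precisely where your plan cannot be completed as stated, because the fact you propose to use is too weak. Positive definiteness of $S_0$ makes the constant vectors nonnull, so the first $n$ Gram--Schmidt steps succeed and $h(\tb{q}_1)\ge n$; but the strict inequality can fail inside $\mathfrak{M}(n,\infty)$. Take $n=2$ and let $\widetilde{\sigma}$ be the step function whose jump at $x_l=l$, $l\in\nats$, is the rank-one matrix $e^{-l}v_lv_l^{t}$ with $v_l=(1,-l)^{t}$. All moments exist, there are infinitely many growth points, and
\begin{equation*}
S_0=\sum_{l\in\nats}e^{-l}\left(\begin{smallmatrix}1&-l\\-l&l^2\end{smallmatrix}\right)
\end{equation*}
is positive definite since the directions $v_l$ span $\complex^2$; hence $\widetilde{\sigma}\in\mathfrak{M}(2,\infty)$. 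Yet $\pb{r}(z)=(z,1)^{t}$, of height exactly $n=2$, satisfies $v_l^{t}\pb{r}(l)=l-l=0$ at every node, so $\norm{\pb{r}}_{L_2(\reals,\widetilde{\sigma})}=0$, and one checks that the Gram--Schmidt procedure produces $\tb{q}_1=(z,1)^{t}$ at step $n+1$. Then Proposition~\ref{prop:reconstruction-entries} with $j=0$ (where $h(\tb{q}_0)=n-1$) annihilates the entire diagonal $\mathcal{D}_n$: the interval $n-1<h(z\tb{p}_k)<h(\tb{q}_1)=n$ is empty and $h(z\tb{p}_k)\ge n$ for all $k$, so $d_k^{(n)}=0$ for every $k$, i.e.\ $m_1=1$, violating \eqref{eq:first-m}. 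The reconstructed matrix is tridiagonal and, read literally against Definition~\ref{def:matrices-degenerate}, not in $\mathcal{M}(2,\infty)$. To be fair, the paper's one-line proof passes over this silently (as it does over the analogous requirement $m_j+1<m_{j+1}$ at later degenerations), so you have surfaced a genuine edge case in the paper's formulation rather than introduced an error of your own; but as written, your promised derivation of $m_1>1$ from $S_0\succ 0$ alone would fail, and closing it requires either restricting $\mathfrak{M}(n,\infty)$ (excluding measures with null vector polynomials of height exactly $n$, and similarly at each degeneration step) or relaxing the normalization constraints in the definition of $\mathcal{M}(n,\infty)$.
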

\begin{proof}
  Taking into account (\ref{eq:entries-matrix-inverse-prob}), it
  follows from Propositions~\ref{prop:band-matrix} and
  \ref{prop:reconstruction-entries} that the matrix
  $\{c_{kl}\}_{k,l=1}^{\infty}$ is in the class
  $\mathcal{M}(n,\infty)$. Now, in view of (\ref{eq:def-c-ki-inner}),
  the operator of multiplication by the independent variable is an
  extension of the minimal closed symmetric operator $B$ in
  $L_2(\reals,\widetilde{\sigma})$ satisfying
  \begin{equation*}
    c_{kl}=\inner{\widetilde{\pb{p}}_k}{B\widetilde{\pb{p}}_l}\,.
  \end{equation*}
\end{proof}


 \begin{theorem}
  \label{thm:sigma-unique}
  Let $\widetilde{\sigma}$ be an element of
  $\mathfrak{M}(n,\infty)$ and $c_{ik}$ be the coefficients given in
  (\ref{eq:representation-z-q}). Then $\widetilde{\sigma}$ is a
  spectral function of the matrix $\{c_{kl}\}_{k,l=1}^{\infty}$
  according to Definition~\ref{def:spectral-measure-gen-case}.
\end{theorem}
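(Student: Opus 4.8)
The plan is to check the three clauses of Definition~\ref{def:spectral-measure-gen-case} for $\sigma=\widetilde{\sigma}$ and the reconstructed matrix $\mathcal{A}=\{c_{kl}\}_{k,l=1}^\infty$, whose entries are $d_k^{(i)}=c_{k+i,k}$ (see \eqref{eq:entries-matrix-inverse-prob}) and which lies in $\mathcal{M}(n,\infty)$ by Corollary~\ref{cor:reconstructed-in-class}. The monotonicity of $\widetilde{\sigma}$, the finiteness of its moments, and the invertibility of $\int_\reals d\widetilde{\sigma}=S_0$ are immediate from $\widetilde{\sigma}\in\mathfrak{M}(n,\infty)$ (Definition~\ref{def:class-sigma-infinite}). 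Thus the real work is to exhibit a matrix $\mathscr{T}$ satisfying \ref{item:prop-T}) and \ref{item:prop-T-2}) for which the polynomials $\{\pb{p}_k\}_{k=1}^\infty$ produced by the recurrence \eqref{eq:recurrence-vector-plynomials} with these entries and initial data $\pb{p}_k=\mathscr{T}\pb{e}_k$ are orthonormal in $L_2(\reals,\widetilde{\sigma})$, and that every $\pb{q}_j$ of \eqref{eq:polynomials_q} lies in the zero class.

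I would define $\mathscr{T}$ from the first $n$ Gram--Schmidt outputs $\tb{p}_1,\dots,\tb{p}_n$ associated with the sequence \eqref{eq:canonical-vector-polynomial}. Since $h(\tb{q}_1)\ge n$ by \eqref{eq:height-different-p-and-q}, the orthonormalization of the constant vectors $\pb{e}_1,\dots,\pb{e}_n$ is completed before any null polynomial appears, so $\tb{p}_1,\dots,\tb{p}_n$ are themselves constant vectors; the triangular character of Gram--Schmidt then yields $\tb{p}_k=\mathscr{T}\pb{e}_k$ with $\mathscr{T}$ upper triangular, real, and with nonzero diagonal, i.e.\ obeying \ref{item:prop-T}) and \ref{item:prop-T-2}).

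The central step is the claim that $\pb{p}_k=\tb{p}_k$ in $L_2(\reals,\widetilde{\sigma})$ for all $k$, which I would establish by induction. The base case $k\le n$ is the definition of $\mathscr{T}$. For the inductive step note that, in $L_2(\reals,\widetilde{\sigma})$, the null polynomials $\tb{q}_{j'}$ vanish, so \eqref{eq:representation-z-q} together with Proposition~\ref{prop:band-matrix} reduces to the identity $z\tb{p}_k=\sum_l c_{lk}\tb{p}_l$, which is precisely relation \eqref{eq:recurrence-vector-plynomials} read through $d_k^{(i)}=c_{k+i,k}$. Solving this for the top term $\pb{p}_{k+n-j}$ --- legitimate because $d_k^{(n-j)}>0$ in the range $m_j<k<m_{j+1}$ by Proposition~\ref{prop:reconstruction-entries} --- and inserting the inductive hypothesis into the strictly lower-indexed summands yields $\pb{p}_{k+n-j}=\tb{p}_{k+n-j}$ in $L_2(\reals,\widetilde{\sigma})$; here one uses that multiplication by $z$ preserves the zero class, since $z\sum_i R_i\tb{q}_i=\sum_i(zR_i)\tb{q}_i$ is again null by Proposition~\ref{prop:interpolation-infinite}. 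Hence $\{\pb{p}_k\}$ inherits the orthonormality of $\{\tb{p}_k\}$. To settle the remaining clause, I would rewrite \eqref{eq:polynomials_q} via $d_k^{(i)}=c_{k+i,k}$ and the vanishing $d_{m_j}^{(i)}=0$ for $i>n-j$, obtaining $\pb{q}_j=z\pb{p}_{m_j}-\sum_l c_{l,m_j}\pb{p}_l$; passing to $L_2(\reals,\widetilde{\sigma})$ and using $\pb{p}_l=\tb{p}_l$ this equals $z\tb{p}_{m_j}-\sum_l c_{l,m_j}\tb{p}_l=0$, so each $\pb{q}_j$ is in the zero class.

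I expect the main obstacle to be the inductive identification $\pb{p}_k=\tb{p}_k$, and specifically the need to carry it out in $L_2(\reals,\widetilde{\sigma})$ rather than at the level of polynomials: the recurrence polynomial $\pb{p}_{k+n-j}$ may differ from $\tb{p}_{k+n-j}$ by the null polynomial $(d_k^{(n-j)})^{-1}\sum_{j'}R_{kj'}\tb{q}_{j'}$, so equality holds only modulo the zero class. This is harmless for the theorem, because all three conditions of Definition~\ref{def:spectral-measure-gen-case} are statements about equivalence classes in $L_2(\reals,\widetilde{\sigma})$, but it must be handled carefully --- in particular one needs the stability of the zero class under multiplication by $z$ noted above.
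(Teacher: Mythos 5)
Your proposal is correct and follows essentially the same route as the paper's proof: both arguments rest on showing that the recurrence polynomials $\pb{p}_k$ coincide with the Gram--Schmidt outputs $\tb{p}_k$ modulo the zero class of $L_2(\reals,\widetilde{\sigma})$ (the paper's (\ref{eq:polynomial_p_inverse_problem})) and that each $\pb{q}_j$ reduces to a combination of the null generators $\tb{q}_i$ (the paper's (\ref{eq:dif-q-tilde-q})). The only difference is presentational: the paper obtains these two facts by citing the finite-dimensional lemmas of \cite{2014arXiv1409.3868K}, whereas you carry out the induction explicitly, correctly flagging that the identification holds only modulo the zero class and that the zero class is stable under multiplication by $z$ via Proposition~\ref{prop:interpolation-infinite}.
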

\begin{proof}
  Since the recurrence equation for the orthonormal sequence
  $\{\tb{p}_k\}_{k=1}^{\infty}$ and the sequence of polynomials
  $\{\pb{p}_k\}_{k=1}^{\infty}$ are related in the same way as in the
  finite dimensional case (see \cite[Eqs. 2.17 and
  4.15]{2014arXiv1409.3868K}), one can use the argumentation of the
  proofs of \cite[Lem.\,4.3]{2014arXiv1409.3868K}) to obtain
  that the vector polynomials $\{\boldsymbol{p}_k(z)\}_{k=1}^{\infty}$
  and $\{\tb{p}_k(z)\}_{k=1}^{\infty}$ satisfy
\begin{equation}
\label{eq:polynomial_p_inverse_problem}
 \boldsymbol{p}_k(z)=\tb{p}_k(z)+\pb{r}_k(z)\,,
\end{equation}
where $\norm{\pb{r}_k}_{L_2(\reals,\widetilde{\sigma})}=0$.
Analogously, when $j_0\neq 0$ it can also be proven that the vector
polynomials $\{\tb{q}_j(z)\}_{j=1}^{j_0}$ and
$\{\pb{q}_j(z)\}_{j=1}^{j_0}$ satisfy
  \begin{equation}
    \label{eq:dif-q-tilde-q}
    \boldsymbol{q}_j(z)=\sum_{i\leq j}R_i(z)\tb{q}_i(z)\,,\quad R_j\neq0\,,
  \end{equation}
  where $R_i(z)$ are scalar polynomials (see
  \cite[Lem.\,4.4]{2014arXiv1409.3868K}).  Due to
  (\ref{eq:polynomial_p_inverse_problem}) and (\ref{eq:dif-q-tilde-q})
  $\{\pb{p}_k\}_{k=1}^{\infty}$ is an orthonormal sequence in
  $L_2(\reals, \widetilde{\sigma})$ and $\pb{q}_j$ is the equivalence
  class of zero in this space for any $j\in\{1,\dots,j_0\}$.
\end{proof}
\begin{theorem}
  \label{thm:last-one}
  Let $\widetilde{\sigma}$ be in $\mathfrak{M}_d(n,\infty)$ and
    $c_{ik}$ be the coefficients given in
    (\ref{eq:representation-z-q}). Then there exists $\mathscr{T}$
    such that $\sigma_{\mathscr{T}}$, given in
    Definition~\ref{def:T-spectral-measures}, coincides with $\widetilde{\sigma}$.
\end{theorem}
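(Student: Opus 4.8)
The plan is to exhibit $\widetilde\sigma$ as a conjugate $\mathscr{T}\sigma\mathscr{T}^{*}$ of the canonical spectral function $\sigma$ of the reconstructed operator, and to prove the equality at the level of matrix moments, upgrading it to equality of the functions themselves by determinacy. First I would record that the operator $A$ generated by $\{c_{kl}\}$ is self-adjoint: by Corollary~\ref{cor:reconstructed-in-class} we have $A\in\mathcal{M}(n,\infty)$, and by Theorem~\ref{thm:sigma-unique}, $\widetilde\sigma$ is a spectral function of $A$ in the sense of Definition~\ref{def:spectral-measure-gen-case}, with some upper triangular, nonzero-diagonal initial-condition matrix $\mathscr{T}_0$, so that the $\mathscr{T}_0$-polynomials satisfy $\mathscr{T}_0\pb{p}_k\equiv\tb{p}_k$ in $L_2(\reals,\widetilde\sigma)$ (cf. \eqref{eq:polynomial_p_inverse_problem}) and each $\pb{q}_j^{\mathscr{T}_0}=\mathscr{T}_0\pb{q}_j$ is null there. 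Since $\widetilde\sigma\in\mathfrak{M}_d(n,\infty)$ the moment problem is determinate, so Remark~\ref{rem:s-a-extensions-measures} gives both that $A$ is self-adjoint and that the polynomials are dense in $L_2(\reals,\widetilde\sigma)$. Hence the spectral function $\sigma$ of $A$ (see \eqref{eq:sigma-sa}) is defined, satisfies $\int_\reals d\sigma=I$ by Corollary~\ref{cor:moment_measure_without_t}, is determinate by Lemma~\ref{lem:moment-problem}, and has the $\mathscr{T}=I$ recurrence polynomials $\{\pb{p}_k\}$ as an orthonormal basis.

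The engine is an all-orders moment identity. For fixed $k,l,i,j$ I would expand the monomial $z^k\pb{e}_i$ in the basis \eqref{eq:basis-p-q} of $L_2(\reals,\sigma)$ as $z^k\pb{e}_i=\sum_a\gamma^{(ki)}_a\pb{p}_a+\pb{n}^{(ki)}$, where $\pb{n}^{(ki)}$ is a finite combination of the null polynomials $z^m\pb{q}_j$. Applying the fixed matrix $\mathscr{T}_0$ pointwise, linearity of the recurrence \eqref{eq:recurrence-vector-plynomials} turns $\pb{p}_a$ into $\tb{p}_a$ and $z^m\pb{q}_j$ into $z^m\pb{q}_j^{\mathscr{T}_0}$; the former stay orthonormal and the latter remain null in $L_2(\reals,\widetilde\sigma)$, so $z^k\mathscr{T}_0\pb{e}_i\equiv\sum_a\gamma^{(ki)}_a\tb{p}_a$ with the very same coefficients. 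Parseval in the two spaces then yields
\begin{equation*}
(\mathscr{T}_0^{*}\,\widetilde{S}_{k+l}\,\mathscr{T}_0)_{ij}
=\inner{z^k\mathscr{T}_0\pb{e}_i}{z^l\mathscr{T}_0\pb{e}_j}_{L_2(\reals,\widetilde{\sigma})}
=\sum_a\overline{\gamma^{(ki)}_a}\,\gamma^{(lj)}_a
=\inner{z^k\pb{e}_i}{z^l\pb{e}_j}_{L_2(\reals,\sigma)}
=(S_{k+l})_{ij}\,,
\end{equation*}
whence $\mathscr{T}_0^{*}\widetilde{S}_m\mathscr{T}_0=S_m$ for every $m\ge 0$, equivalently $\widetilde{S}_m=(\mathscr{T}_0^{-1})^{*}S_m\mathscr{T}_0^{-1}$.

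To conclude I would set $\mathscr{T}:=(\mathscr{T}_0^{-1})^{*}$, which is triangular with nonzero diagonal and hence admissible in the sense of \ref{item:prop-T})–\ref{item:prop-T-2}). By \eqref{eq:infinite-moment-for-t} the moments of $\sigma_{\mathscr{T}}=\mathscr{T}\sigma\mathscr{T}^{*}$ are $\mathscr{T}S_m\mathscr{T}^{*}=\widetilde{S}_m$, so $\sigma_{\mathscr{T}}$ and $\widetilde\sigma$ have identical matrix moment sequences. Both lie in $\mathfrak{M}_d(n,\infty)$ (for $\sigma_{\mathscr{T}}$ this is Lemma~\ref{lem:moment-problem}, for $\widetilde\sigma$ it is the hypothesis), so the common moment problem is determinate and has a unique solution, forcing $\sigma_{\mathscr{T}}=\widetilde\sigma$, which is the assertion.

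I expect the main obstacle to be exactly the all-orders moment identity: at order zero it is merely the normalization $\mathscr{T}_0^{*}\widetilde{S}_0\mathscr{T}_0=I$ coming from orthonormality of the constant polynomials, but for higher orders one must verify that the single transition matrix $\mathscr{T}_0$ intertwines \emph{every} moment, which relies on the shared recurrence and on the stability of the null class under $\mathscr{T}_0$ (so that the expansion coefficients $\gamma^{(ki)}_a$ are genuinely the same in both spaces). A secondary, purely bookkeeping, point is tracking the antilinear-in-the-first-argument inner product and the triangular normalization so that the conjugating matrix produced lands in the precise admissible form demanded by Definition~\ref{def:T-spectral-measures}; in particular one must be careful whether the correct object is $\mathscr{T}_0$ or its inverse-adjoint $(\mathscr{T}_0^{-1})^{*}$, the discrepancy being governed by the convention fixed in \eqref{eq:finite-measure-finite-first-moment}.
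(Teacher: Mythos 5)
Your proposal is correct in substance and follows the paper's global strategy (show $\sigma_{\mathscr{T}}$ and $\widetilde{\sigma}$ have the same matrix moments, then invoke determinacy), but the middle step is executed by a genuinely different route. The paper never writes your all-orders identity $\mathscr{T}_0^{*}\widetilde{S}_m\mathscr{T}_0=S_m$ explicitly: it applies Proposition~\ref{prop:ortonormal-p-L2-finite} to the truncations of $\{c_{kl}\}_{k,l=1}^\infty$ to get, for each fixed pair $k,l$, orthonormality of $\pb{p}_k,\pb{p}_l$ in $L_2(\reals,\sigma_N^{\mathscr{T}})$ for $N$ large, and then passes to the limit via \eqref{eq:convergence-measures-sa} and Lemma~\ref{lem:subsequence-converges-integral} to match the moments of $\sigma_{\mathscr{T}}$ with those of $\widetilde{\sigma}$. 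Your Parseval argument --- expanding each monomial $z^k\pb{e}_i$ once and for all in the algebraic basis of Corollary~\ref{cor:basis-vector} and transporting the expansion by the constant matrix $\mathscr{T}_0$, which is legitimate since the recurrence \eqref{eq:recurrence-vector-plynomials} is linear, so that $\mathscr{T}_0$ maps the $\mathscr{T}=I$ polynomials to the $\mathscr{T}_0$-polynomials and preserves the null family $z^m\pb{q}_j$ --- replaces that limiting procedure by a finite, purely algebraic computation. What your route buys is independence from the finite-section convergence machinery of Section~\ref{sec:Spectral functions self-adjoint} at this point of the argument; what it costs is that you must separately justify that $\{\pb{p}_a\}$ (with $\mathscr{T}=I$) is orthonormal and the $z^m\pb{q}_j$ null in $L_2(\reals,\sigma)$, which you assert but do not prove --- it follows either from the finite sections again, or, more in the spirit of your argument, from the isometry $\pb{u}\mapsto\sum_{i=1}^n u_i(t)\tb{p}_i(t)$ of $L_2(\reals,\sigma)$ into $L_2(\reals,\widetilde{\sigma})$, under which $\pb{p}_a\mapsto\mathscr{T}_0\pb{p}_a$.

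On the point you hedged at the end: your suspicion is well founded, and the wobble is in the paper, not in your computation. From \eqref{eq:finite-measure-finite-first-moment} one gets $\sigma_N^{\mathscr{T}}=(\mathscr{T}^{*})^{-1}\sigma_N^{I}\mathscr{T}^{-1}$, so the limit in \eqref{eq:convergence-measures-sa} is $(\mathscr{T}^{-1})^{*}\sigma\mathscr{T}^{-1}$, whereas Definition~\ref{def:T-spectral-measures} and \eqref{eq:infinite-moment-for-t} write $\mathscr{T}\sigma\mathscr{T}^{*}$; these differ unless $\mathscr{T}\mathscr{T}^{*}=I$. Your identity $S_m=\mathscr{T}_0^{*}\widetilde{S}_m\mathscr{T}_0$ is right, and with the \emph{literal} Definition~\ref{def:T-spectral-measures} it forces $\mathscr{T}=(\mathscr{T}_0^{-1})^{*}$, which is \emph{lower} triangular --- so your claim that it is ``admissible in the sense of I)--II)'' is false as stated, and since $S_0=I$ any alternative choice would make $W:=\mathscr{T}_0^{*}\mathscr{T}$ unitary and commuting with every $S_m$, so in general no upper triangular conjugator exists for that reading. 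Under the convention consistent with \eqref{eq:finite-measure-finite-first-moment}, namely $\sigma_{\mathscr{T}}:=(\mathscr{T}^{-1})^{*}\sigma\mathscr{T}^{-1}$ --- the one the paper's own proof actually uses through \eqref{eq:convergence-measures-sa} --- the correct choice is simply $\mathscr{T}=\mathscr{T}_0$, the upper triangular Gram--Schmidt matrix from Theorem~\ref{thm:sigma-unique}, and your moment identity is then exactly the assertion that $\sigma_{\mathscr{T}}$ and $\widetilde{\sigma}$ share all moments. So, modulo resolving the paper's internal convention in this way, your proof is complete and correct.
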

\begin{proof}
  According to Theorem~\ref{thm:sigma-unique}, there is $\mathscr{T}$
  such that the vector polynomials $\{\pb{p}_k\}_{k=1}^\infty$,
  generated by $\{c_{kl}\}_{k,l=1}^\infty$ and $\mathscr{T}$, are
  orthonormal in $L_2(\reals,\widetilde{\sigma})$. Since
  $\widetilde{\sigma}$ is the unique solution of the moment problem
  \begin{equation*}
    \left\{\int_\reals t^kd\widetilde{\sigma}\right\}_{k=0}^\infty\,,
  \end{equation*}
 the orthonormal system $\{\pb{p}_k\}_{k=1}^\infty$ is a basis and
 $\{c_{kl}\}_{k,l=1}^\infty$ is the corresponding matrix representation of the
 operator of multiplication by the independent variable
 \cite[Sec.\,2]{MR1882637}. Let $\sigma$ be the spectral function
 given by (\ref{eq:sigma-sa}) (with $A$ being the operator of
 multiplication by the independent variable and substituting
 $\delta_k$ by $\pb{p}_k$). Also, let $\sigma_{\mathscr{T}}$ be the function defined in
 Definition~\ref{def:T-spectral-measures}. Since the elements of the sequence
 $\{\pb{p}_k\}_{k=1}^\infty$ satisfy the recurrence equations given by
 the matrix $\{c_{kl}\}_{k,l=1}^\infty$ with initial conditions
 $\mathscr{T}$, for any $k,l\in\nats$, there is $N$ sufficiently large
 such that
 \begin{equation*}
   \inner{\pb{p}_k}{\pb{p}_l}_{L_2(\reals,\sigma_N^\mathscr{T})}=\delta_{kl}\,.
 \end{equation*}
Now, from (\ref{eq:convergence-measures-sa}) and
Lemma~\ref{lem:subsequence-converges-integral} applied to the sequence
$\{\sigma_N^{\mathscr{T}}\}_{N>n}$ and the function $\sigma_{\mathscr{T}}$, it follows that
$\widetilde{\sigma}$ and $\sigma_{\mathscr{T}}$ have the same moments.
\end{proof}

\def\cprime{$'$} \def\lfhook#1{\setbox0=\hbox{#1}{\ooalign{\hidewidth
  \lower1.5ex\hbox{'}\hidewidth\crcr\unhbox0}}} \def\cprime{$'$}
  \def\cprime{$'$} \def\cprime{$'$} \def\cprime{$'$} \def\cprime{$'$}
  \def\cprime{$'$} \def\cprime{$'$}

\end{document}